\newcommand{\pr}[1]{\left(#1\right)}
\newcommand{\fpr}[1]{\mathopen{}\left(#1\right)}
\newcommand{\abs}[1]{{\left|#1\right|}}
\newcommand{\enset}[2]{\left\{#1 ,\ldots , #2\right\}}
\newcommand{\enpr}[2]{\pr{#1 ,\ldots , #2}}
\newcommand{\funcdef}[3]{{#1}:{#2} \to {#3}}
\newcommand{\p}{\textbf{P}}
\newcommand{\define}{\leftarrow}
\newcommand{\dispfunc}[2]{%
  \ensuremath{%
  \ifthenelse{\equal{\noexpand#2}{}}%
    {{#1}}%
    {{#1}\fpr{#2}}}}
\newcommand{\closure}[1]{\mathit{tcl}\fpr{#1}}
\newcommand{\iclosure}[1]{\mathit{icl}\fpr{#1}}
\newcommand{\iclN}[1]{\mathit{icl}_N\fpr{#1}}
\newcommand{\iclE}[1]{\mathit{icl}_E\fpr{#1}}
\newcommand{\iclEN}[1]{\mathit{icl}_{\mathit{EN}}\fpr{#1}}
\newcommand{\last}[1]{\dispfunc{\mathit{last}}{#1}}
\newcommand{\lab}[1]{\dispfunc{\mathit{lab}}{#1}}
\newcommand{\map}[1]{\mathit{map}\fpr{#1}}
\newcommand{\efam}[1]{\mathcal{#1}}
\newcommand{\freq}[1]{\mathit{fr}\fpr{#1}}
\newcommand{\freqf}[1]{\mathit{fr_f}\fpr{#1}}
\newcommand{\freqm}[1]{\mathit{fr_d}\fpr{#1}}
\newcommand{\sspace}{\mathcal{S}}
\renewenvironment{proof}{\begin{oldproof}}{{\hfill \ensuremath{\Box}}\end{oldproof}}
\def\underbracket{%
 \@ifnextchar[{\@underbracket}{\@underbracket [\@bracketheight]}%
} 
\def\@underbracket[#1]{%
 \@ifnextchar[{\@under@bracket[#1]}{\@under@bracket[#1][0.4em]}%
} 
\def\@under@bracket[#1][#2]#3{
 \mathop{\vtop{\m@th \ialign {##\crcr $\hfil \displaystyle {#3}\hfil $%
 \crcr \noalign {\kern 3\p@ \nointerlineskip }\upbracketfill {#1}{#2} 
 \crcr \noalign {\kern 3\p@ }}}}\limits} 
 \def\upbracketfill#1#2{$\m@th \setbox \z@ \hbox {$\braceld$} 
 \edef\@bracketheight{\the\ht\z@}\bracketend{#1}{#2} 
 \leaders \vrule \@height #1 \@depth \z@ \hfill 
 \leaders \vrule \@height #1 \@depth \z@ \hfill \bracketend {#1}{#2}$} 
\def\bracketend#1#2{\vrule height #2 width #1\relax} 
\newcommand{\eventseq}[2]{%
\foreach \x [count=\xi from 1, count = \xprev from 0]  in {#1}{%
    \node[draw,shape=event, right=0 of #2\xprev.south east, anchor = south west, inner sep = 2pt, outer sep = 0pt] (#2\xi) {\x};%
}%
}
\newcommand{\eventwin}[6]{\filldraw[fill=#4, draw = #4] let \p1 = (#1.south west), \p2 = (#2.south east) in (\x1 + #5, \y1 - #3) rectangle (\x2 - #6, \y2 - #3 - 2pt);}
\tikzstyle{block} = [rounded corners, draw=blue!70, fill=white, text width=3.3cm, minimum height=4em]
\tikzstyle{bgblock} = [rounded corners, draw=blue!70, thick, fill=blue!10, text width=3.3cm, minimum height=4em]
\tikzstyle{line} = [draw, -latex', thick,blue!70]
\definecolor{yafaxiscolor}{rgb}{0.3, 0.3, 0.3}
\definecolor{yafcolor1}{rgb}{0.4, 0.165, 0.553}
\definecolor{yafcolor2}{rgb}{0.949, 0.482, 0.216}
\definecolor{yafcolor3}{rgb}{0.47, 0.549, 0.306}
\definecolor{yafcolor4}{rgb}{0.925, 0.165, 0.224}
\definecolor{yafcolor5}{rgb}{0.141, 0.345, 0.643}
\definecolor{yafcolor6}{rgb}{0.965, 0.933, 0.267}
\definecolor{yafcolor7}{rgb}{0.627, 0.118, 0.165}
\definecolor{yafcolor8}{rgb}{0.878, 0.475, 0.686}
\newlength{\yafaxispad}
\newlength{\yaftlpad}
\newlength{\yaflabelpad}
\newlength{\yafaxiswidth}
\newlength{\yafticklen}
\def\pgfplots@drawtickgridlines@INSTALLCLIP@onorientedsurf#1{}
\newcommand{\yafdrawaxis}[4]{
	\pgfplotstransformcoordinatex{#1}\let\xmincoord=\pgfmathresult 
	\pgfplotstransformcoordinatex{#2}\let\xmaxcoord=\pgfmathresult 
	\pgfplotstransformcoordinatey{#3}\let\ymincoord=\pgfmathresult 
	\pgfplotstransformcoordinatey{#4}\let\ymaxcoord=\pgfmathresult 
	\pgfsetlinewidth{\yafaxiswidth} 
	\pgfsetcolor{yafaxiscolor}
	\pgfpathmoveto{\pgfpointadd{\pgfpointadd{\pgfplotspointrelaxisxy{0}{0}}{\pgfqpointxy{\xmincoord}{0}}}{\pgfqpoint{-0.5\yafaxiswidth}{\yafaxispad}}}
	\pgfpathlineto{\pgfpointadd{\pgfpointadd{\pgfplotspointrelaxisxy{0}{0}}{\pgfqpointxy{\xmaxcoord}{0}}}{\pgfqpoint{0.5\yafaxiswidth}{\yafaxispad}}}
	\pgfpathmoveto{\pgfpointadd{\pgfpointadd{\pgfplotspointrelaxisxy{0}{0}}{\pgfqpointxy{0}{\ymincoord}}}{\pgfqpoint{\yafaxispad}{-0.5\yafaxiswidth}}}
	\pgfpathlineto{\pgfpointadd{\pgfpointadd{\pgfplotspointrelaxisxy{0}{0}}{\pgfqpointxy{0}{\ymaxcoord}}}{\pgfqpoint{\yafaxispad}{0.5\yafaxiswidth}}}
	\pgfusepath{stroke}
}
\pgfplotsset{axis y line=left, axis x line=bottom,
	tick align=outside,
	tickwidth=\yafticklen,
	clip = false,
    x axis line style= {-, line width = 0pt, color=black!0},
    y axis line style= {-, line width = 0pt, color=black!0},
    x tick style= {line width = \yafaxiswidth, color=yafaxiscolor, yshift = \yafaxispad},
    y tick style= {line width = \yafaxiswidth, color=yafaxiscolor, xshift = \yafaxispad},
    x tick label style = {font=\scriptsize, yshift = \yaftlpad},
    y tick label style = {font=\scriptsize, xshift = \yaftlpad},
    every axis y label/.style = {at = {(ticklabel cs:0.5)}, rotate=90, anchor=center, font=\scriptsize, yshift = -\yaflabelpad},
    every axis x label/.style = {at = {(ticklabel cs:0.5)}, anchor=center, font=\scriptsize, yshift = \yaflabelpad},
    x tick label style = {font=\scriptsize, yshift = 1pt},
    grid = major,
    major grid style  = {dash pattern = on 1pt off 3 pt},
	every axis plot post/.append style= {line width=\yafaxiswidth} ,
	legend cell align = left,
	legend style = {inner sep = 1pt, cells = {font=\scriptsize}},
	legend image code/.code={%
		\draw[mark repeat=2,mark phase=2,#1] 
		plot coordinates { (0cm,0cm) (0.15cm,0cm) (0.3cm,0cm) };%
	} 
}
\title{Mining Closed Strict Episodes\footnote{A preliminary version
appeared as ''Mining Closed Strict Episodes'', in Proceedings of
Tenth IEEE International Conference on Data Mining (ICDM 2010), 
2010~\cite{tatti:10:mining}.}}
\author{Nikolaj Tatti and Boris Cule}
\institute{Nikolaj Tatti \and Boris Cule \at University of Antwerp, Antwerp, Belgium,\\\email{nikolaj.tatti@ua.ac.be, boris.cule@ua.ac.be}}
\begin{document}
\maketitle

\begin{abstract}
Discovering patterns in a sequence is an important aspect of data mining. One
popular choice of such patterns are episodes, patterns in sequential data
describing events that often occur in the vicinity of each other. Episodes also
enforce in which order the events are allowed to occur.

In this work we introduce a technique for discovering closed episodes. Adopting
existing approaches for discovering traditional patterns, such as closed
itemsets, to episodes is not straightforward. First of all, we cannot define a
unique closure based on frequency because an episode may have several closed
superepisodes. Moreover, to define a closedness concept for episodes we need a
subset relationship between episodes, which is not trivial to define. 

We approach these problems by introducing strict episodes. We argue that
this class is general enough, and at the same time we are able to define a
natural subset relationship within it and use it efficiently. In order to mine closed
episodes we define an auxiliary closure operator. We show that this closure
satisfies the needed properties so that we can use the existing framework
for mining closed patterns. Discovering the true closed episodes can be done as
a post-processing step.  We combine these observations into an efficient mining
algorithm and demonstrate empirically its performance in practice.

\end{abstract}
\keywords{Frequent Episode Mining, Closed Episodes,  Level-wise Algorithm}

\section{Introduction}
\label{sec:introduction}

Discovering frequent patterns in an event sequence is an important field in
data mining. Episodes, as defined in~\cite{mannila:97:discovery}, represent a rich
class of sequential patterns, enabling us to discover events occurring in the vicinity
of each other while at the same time capturing complex interactions between the events.

More specifically, a frequent episode is traditionally considered to be a set
of events that reoccurs in the sequence within a window of a specified length.
Gaps are allowed between the events and the order in which the events are allowed
to occur is specified by the episode. Frequency, the number of windows in which
the episode occurs, is monotonically decreasing so we can use the well-known
level-wise approach to mine all frequent episodes.

The order restrictions of an episode are described by a directed acyclic graph
(DAG): the set of events in a sequence covers the episode if and only if each
event occurs only after all its parent events (with respect to the DAG) have
occurred (see the formal definition in Section~\ref{sec:prel}).
Usually, only two extreme cases are considered. A parallel episode poses no
restrictions on the order of events, and a window covers the
episode if the events occur in the window, in any order. In such a case,
the DAG associated with the episode contains no edges. The other extreme case is a
serial episode. Such an episode requires that the events occur in one, and only one,
specific order in the sequence. Clearly, serial episodes are more restrictive
than parallel episodes. If a serial episode is frequent, then its parallel version
is also frequent.

The advantage of episodes based on DAGs is that they allow us to capture dependencies
between the events while not being too restrictive.
\begin{example}
\label{ex:intro1}
As an example we will use text data, namely inaugural
speeches by presidents of the United States (see Section~\ref{sec:experiments}
for more details). Protocol requires the presidents to address the chief
justice and the vice presidents in their speeches. Hence, a pattern
\[
	\text{chief} \to \text{justic} \qquad \text{vice} \to \text{president}
\]
occurs in $10$ disjoint windows. This pattern captures the phrases 'chief justice' and 'vice president'
but because the address order varies from speech to speech, the pattern does not impose any
additional restrictions.
\end{example}

Episodes based on DAGs have, in practice, been over-shadowed by parallel and serial episodes, despite being defined at the same time~\cite{mannila:97:discovery}. The main
reason for this is the pattern explosion demonstrated in the
following example.
\begin{example}
To illustrate the pattern explosion we will again use inaugural
speeches by presidents of the United States.
By setting the window size to $15$ and the frequency threshold to $60$ we discovered 
a serial episode with $6$ symbols,
\[
	\text{preserv} \to \text{protect} \to \text{defend} \to \text{constitut} \to \text{unit} \to \text{state}.
\]
In total, we found another $4823$ subepisodes of size 6 of this episode.
However, all these episodes had only 3 distinct frequencies, indicating that
the frequencies of most of them could be derived from the frequencies of only 3
episodes, so the output could be reduced by leaving out $4821$ episodes. 

We illustrate the pattern explosion further in Table~\ref{fig:explosion}.
We see from the table that if the sequence has a frequent serial episode consisting of
$9$ labels, then mining frequent episodes will produce at least 100 million patterns.

\end{example}

\begin{table}[htb!]
\centering
\begin{tabular}{lrrrrrrrrr}
\toprule
pattern & 1 & 2 & 3 & 4 & 5 & 6 & 7 & 8 & 9\\
\midrule
itemsets & 1 & 3 & 7 & 15 & 31 & 63 & 127 & 255 & 511 \\
episodes & 1 & 4 & 16 & 84 & 652 & $7\,742$ & $139\,387$ & $3\,730\,216$ & $145\,605\,024$ \\
\bottomrule
\end{tabular}
\caption{Illustration of the pattern explosion. The first row is the number of
frequent itemsets produced by a single frequent itemset  with $n$ items. The second row is the number
of episodes produced by a single frequent serial episode with $n$ unique labels.
These numbers were obtained by a brute force enumeration.}
\label{fig:explosion}
\end{table}

Motivated by this example, we approach the problem of pattern explosion by
using a popular technique of closed patterns. A pattern is closed if there
exists no superpattern with the same frequency. Mining closed patterns has been
shown to reduce the output. Moreover, we can discover closed patterns efficiently.
However, adopting the concept of closedness to episodes is not without
problems.

\paragraph{Subset relationship}
Establishing a proper subset relationship is needed for two reasons. Firstly, to
make the mining procedure more efficient by discovering all possible subpatterns
before testing the actual episode, and secondly, to define a proper closure operator.

A na\"{i}ve approach to define whether an episode $G$ is a subepisode of an
episode $H$ is to compare their DAGs. This, however, leads to problems as the same
episode can be represented by multiple DAGs and a graph representing $G$ is not
necessarily a subgraph of a graph representing $H$ as demonstrated in the
following example.

\begin{example}
\label{ex:toy2}
Consider episodes $G_1$, $G_2$, and $G_3$ given in Figure~\ref{fig:toy}. Episode $G_1$ states that for a
pattern to occur $a$ must precede $b$ and $c$. Meanwhile, $G_2$ and $G_3$ state
that $a$ must be followed by $b$ and then by $c$.  Note that $G_2$ and $G_3$
represent essentially the same pattern that is more restricted than the
pattern represented by $G_1$. However, $G_1$ is a subgraph of $G_3$ but not
a subgraph of $G_2$. This reveals a problem if we base our definition of a subset
relationship of episodes solely on the edge subset relationship.
We solve this particular case by considering transitive closures, graphs in which each node must be connected to all its descendants by an edge,
thus ignoring graphs of form $G_2$. We will
not lose any generality since we are still going to discover episodes of form
$G_3$. Using transitive closure does not solve all problems for episodes containing
multiple nodes with the same label. For example, episodes $H_1$ and $H_2$ in Figure~\ref{fig:toy}
are the same even though their graphs are different.
\end{example}

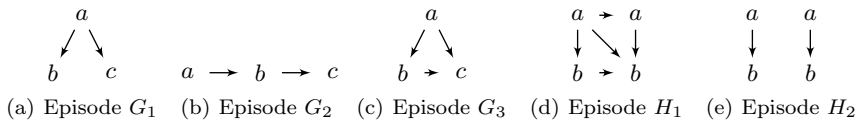
\begin{figure}[htb!]
\centering
\subfigure[Episode $G_1$\label{fig:toy:a}]{\begin{minipage}[b]{2.2cm}\centering\normalsize\begin{tikzpicture}[>=latex',line join=bevel,scale=0.3]\pgfsetlinewidth{0.5bp}\pgfsetcolor{black}
  \draw [->] (53.916bp,71.831bp) .. controls (49.939bp,63.877bp) and (45.185bp,54.369bp)  .. (36.207bp,36.413bp);
  \draw [->] (72.084bp,71.831bp) .. controls (76.061bp,63.877bp) and (80.815bp,54.369bp)  .. (89.793bp,36.413bp);
\begin{scope}
  \definecolor{strokecol}{rgb}{0.0,0.0,0.0};
  \pgfsetstrokecolor{strokecol}
  \draw (63bp,90bp) node {$a$};
\end{scope}
\begin{scope}
  \definecolor{strokecol}{rgb}{0.0,0.0,0.0};
  \pgfsetstrokecolor{strokecol}
  \draw (99bp,18bp) node {$c$};
\end{scope}
\begin{scope}
  \definecolor{strokecol}{rgb}{0.0,0.0,0.0};
  \pgfsetstrokecolor{strokecol}
  \draw (27bp,18bp) node {$b$};
\end{scope}\end{tikzpicture}\end{minipage}}
\subfigure[Episode $G_2$\label{fig:toy:b}]{\begin{minipage}[b]{2.2cm}\centering\normalsize\begin{tikzpicture}[>=latex',line join=bevel,scale=0.3]\pgfsetlinewidth{0.5bp}\pgfsetcolor{black}
  \draw [->] (54.003bp,18bp) .. controls (62.028bp,18bp) and (70.967bp,18bp)  .. (89.705bp,18bp);
  \draw [->] (144bp,18bp) .. controls (152.03bp,18bp) and (160.97bp,18bp)  .. (179.71bp,18bp);
\begin{scope}
  \definecolor{strokecol}{rgb}{0.0,0.0,0.0};
  \pgfsetstrokecolor{strokecol}
  \draw (27bp,18bp) node {$a$};
\end{scope}
\begin{scope}
  \definecolor{strokecol}{rgb}{0.0,0.0,0.0};
  \pgfsetstrokecolor{strokecol}
  \draw (207bp,18bp) node {$c$};
\end{scope}
\begin{scope}
  \definecolor{strokecol}{rgb}{0.0,0.0,0.0};
  \pgfsetstrokecolor{strokecol}
  \draw (117bp,18bp) node {$b$};
\end{scope}\end{tikzpicture}\end{minipage}}
\subfigure[Episode $G_3$\label{fig:toy:c}]{\begin{minipage}[b]{2.2cm}\centering\normalsize\begin{tikzpicture}[>=latex',line join=bevel,scale=0.3]\pgfsetlinewidth{0.5bp}\pgfsetcolor{black}
  \draw [->] (54bp,18bp) .. controls (56.615bp,18bp) and (59.229bp,18bp)  .. (71.93bp,18bp);
  \draw [->] (53.916bp,71.831bp) .. controls (49.939bp,63.877bp) and (45.185bp,54.369bp)  .. (36.207bp,36.413bp);
  \draw [->] (72.084bp,71.831bp) .. controls (76.061bp,63.877bp) and (80.815bp,54.369bp)  .. (89.793bp,36.413bp);
\begin{scope}
  \definecolor{strokecol}{rgb}{0.0,0.0,0.0};
  \pgfsetstrokecolor{strokecol}
  \draw (63bp,90bp) node {$a$};
\end{scope}
\begin{scope}
  \definecolor{strokecol}{rgb}{0.0,0.0,0.0};
  \pgfsetstrokecolor{strokecol}
  \draw (99bp,18bp) node {$c$};
\end{scope}
\begin{scope}
  \definecolor{strokecol}{rgb}{0.0,0.0,0.0};
  \pgfsetstrokecolor{strokecol}
  \draw (27bp,18bp) node {$b$};
\end{scope}\end{tikzpicture}\end{minipage}}
\subfigure[Episode $H_1$\label{fig:toy:d}]{\begin{minipage}[b]{2.2cm}\centering\normalsize\begin{tikzpicture}[>=latex',line join=bevel,scale=0.3]\pgfsetlinewidth{0.5bp}\pgfsetcolor{black}
  \draw [->] (27bp,71.831bp) .. controls (27bp,64.131bp) and (27bp,54.974bp)  .. (27bp,36.413bp);
  \draw [->] (54bp,18bp) .. controls (56.615bp,18bp) and (59.229bp,18bp)  .. (71.93bp,18bp);
  \draw [->] (45.169bp,71.831bp) .. controls (53.715bp,63.285bp) and (64.056bp,52.944bp)  .. (80.587bp,36.413bp);
  \draw [->] (99bp,71.831bp) .. controls (99bp,64.131bp) and (99bp,54.974bp)  .. (99bp,36.413bp);
  \draw [->] (54bp,90bp) .. controls (56.615bp,90bp) and (59.229bp,90bp)  .. (71.93bp,90bp);
\begin{scope}
  \definecolor{strokecol}{rgb}{0.0,0.0,0.0};
  \pgfsetstrokecolor{strokecol}
  \draw (27bp,90bp) node {$a$};
\end{scope}
\begin{scope}
  \definecolor{strokecol}{rgb}{0.0,0.0,0.0};
  \pgfsetstrokecolor{strokecol}
  \draw (99bp,90bp) node {$a$};
\end{scope}
\begin{scope}
  \definecolor{strokecol}{rgb}{0.0,0.0,0.0};
  \pgfsetstrokecolor{strokecol}
  \draw (27bp,18bp) node {$b$};
\end{scope}
\begin{scope}
  \definecolor{strokecol}{rgb}{0.0,0.0,0.0};
  \pgfsetstrokecolor{strokecol}
  \draw (99bp,18bp) node {$b$};
\end{scope}\end{tikzpicture}\end{minipage}}
\subfigure[Episode $H_2$\label{fig:toy:e}]{\begin{minipage}[b]{2.2cm}\centering\normalsize\begin{tikzpicture}[>=latex',line join=bevel,scale=0.3]\pgfsetlinewidth{0.5bp}\pgfsetcolor{black}
  \draw [->] (27bp,71.831bp) .. controls (27bp,64.131bp) and (27bp,54.974bp)  .. (27bp,36.413bp);
  \draw [->] (99bp,71.831bp) .. controls (99bp,64.131bp) and (99bp,54.974bp)  .. (99bp,36.413bp);
\begin{scope}
  \definecolor{strokecol}{rgb}{0.0,0.0,0.0};
  \pgfsetstrokecolor{strokecol}
  \draw (27bp,90bp) node {$a$};
\end{scope}
\begin{scope}
  \definecolor{strokecol}{rgb}{0.0,0.0,0.0};
  \pgfsetstrokecolor{strokecol}
  \draw (99bp,90bp) node {$a$};
\end{scope}
\begin{scope}
  \definecolor{strokecol}{rgb}{0.0,0.0,0.0};
  \pgfsetstrokecolor{strokecol}
  \draw (27bp,18bp) node {$b$};
\end{scope}
\begin{scope}
  \definecolor{strokecol}{rgb}{0.0,0.0,0.0};
  \pgfsetstrokecolor{strokecol}
  \draw (99bp,18bp) node {$b$};
\end{scope}\end{tikzpicture}\end{minipage}}
\caption{Toy episodes used in Example~\ref{ex:toy2}.}
\label{fig:toy}
\end{figure}

\paragraph{Frequency closure} Secondly, frequency does not satisfy the Galois
connection. In fact, given an episode $G$ there can be \emph{several} more
specific closed episodes that have the same frequency. So the closure
operator cannot be defined as a mapping from an episode to its frequency-closed version.
\begin{example}
\label{ex:toy3}
Consider sequence $s$ given in Figure~\ref{fig:tc:e} and episode $G_1$
given in Figure~\ref{fig:tc:a}.  Assume that we use a sliding window of size
$5$. There are two windows that cover episode $G_1$, namely $s[1, 5]$ and $s[6,
10]$, illustrated in Figure~\ref{fig:tc:e}. Hence, the frequency of $G_1$ is
$2$. There are \emph{two} serial episodes that are more specific than $G_1$ and
have the same frequency, namely, $G_2$ and $G_3$ given in Figures~\ref{fig:tc:b} and~\ref{fig:tc:c}.
Moreover, there is no superepisode of $G_2$ and $G_3$ that has frequency equal
to $2$.  In other words, we cannot define a unique closure for $G_1$ based on frequency.
\end{example}

\begin{figure}[htb!]
\centering
\subfigure[Episode $G_1$\label{fig:tc:a}]{\begin{minipage}[b]{3.4cm}\centering\normalsize\begin{tikzpicture}[>=latex',line join=bevel,scale=0.3]\pgfsetlinewidth{0.5bp}\pgfsetcolor{black}
  \draw [->] (54.003bp,53.101bp) .. controls (62.204bp,55.561bp) and (71.36bp,58.308bp)  .. (89.705bp,63.812bp);
  \draw [->] (144bp,26.101bp) .. controls (152.2bp,28.561bp) and (161.36bp,31.308bp)  .. (179.71bp,36.812bp);
  \draw [->] (54.003bp,36.899bp) .. controls (62.204bp,34.439bp) and (71.36bp,31.692bp)  .. (89.705bp,26.188bp);
  \draw [->] (144bp,63.899bp) .. controls (152.2bp,61.439bp) and (161.36bp,58.692bp)  .. (179.71bp,53.188bp);
\begin{scope}
  \definecolor{strokecol}{rgb}{0.0,0.0,0.0};
  \pgfsetstrokecolor{strokecol}
  \draw (27bp,45bp) node {$a$};
\end{scope}
\begin{scope}
  \definecolor{strokecol}{rgb}{0.0,0.0,0.0};
  \pgfsetstrokecolor{strokecol}
  \draw (117bp,18bp) node {$c$};
\end{scope}
\begin{scope}
  \definecolor{strokecol}{rgb}{0.0,0.0,0.0};
  \pgfsetstrokecolor{strokecol}
  \draw (117bp,72bp) node {$b$};
\end{scope}
\begin{scope}
  \definecolor{strokecol}{rgb}{0.0,0.0,0.0};
  \pgfsetstrokecolor{strokecol}
  \draw (207bp,45bp) node {$d$};
\end{scope}\end{tikzpicture}\end{minipage}}
\subfigure[Episode $G_2$\label{fig:tc:b}]{\begin{minipage}[b]{3.4cm}\centering\normalsize\begin{tikzpicture}[>=latex',line join=bevel,scale=0.3]\pgfsetlinewidth{0.5bp}\pgfsetcolor{black}
  \draw [->] (54.003bp,18bp) .. controls (62.028bp,18bp) and (70.967bp,18bp)  .. (89.705bp,18bp);
  \draw [->] (234bp,18bp) .. controls (242.03bp,18bp) and (250.97bp,18bp)  .. (269.71bp,18bp);
  \draw [->] (144bp,18bp) .. controls (152.03bp,18bp) and (160.97bp,18bp)  .. (179.71bp,18bp);
\begin{scope}
  \definecolor{strokecol}{rgb}{0.0,0.0,0.0};
  \pgfsetstrokecolor{strokecol}
  \draw (27bp,18bp) node {$a$};
\end{scope}
\begin{scope}
  \definecolor{strokecol}{rgb}{0.0,0.0,0.0};
  \pgfsetstrokecolor{strokecol}
  \draw (207bp,18bp) node {$c$};
\end{scope}
\begin{scope}
  \definecolor{strokecol}{rgb}{0.0,0.0,0.0};
  \pgfsetstrokecolor{strokecol}
  \draw (117bp,18bp) node {$b$};
\end{scope}
\begin{scope}
  \definecolor{strokecol}{rgb}{0.0,0.0,0.0};
  \pgfsetstrokecolor{strokecol}
  \draw (297bp,18bp) node {$d$};
\end{scope}\end{tikzpicture}\end{minipage}}
\subfigure[Episode $G_3$\label{fig:tc:c}]{\begin{minipage}[b]{3.4cm}\centering\normalsize\begin{tikzpicture}[>=latex',line join=bevel,scale=0.3]\pgfsetlinewidth{0.5bp}\pgfsetcolor{black}
  \draw [->] (144bp,18bp) .. controls (152.03bp,18bp) and (160.97bp,18bp)  .. (179.71bp,18bp);
  \draw [->] (54.003bp,18bp) .. controls (62.028bp,18bp) and (70.967bp,18bp)  .. (89.705bp,18bp);
  \draw [->] (234bp,18bp) .. controls (242.03bp,18bp) and (250.97bp,18bp)  .. (269.71bp,18bp);
\begin{scope}
  \definecolor{strokecol}{rgb}{0.0,0.0,0.0};
  \pgfsetstrokecolor{strokecol}
  \draw (27bp,18bp) node {$a$};
\end{scope}
\begin{scope}
  \definecolor{strokecol}{rgb}{0.0,0.0,0.0};
  \pgfsetstrokecolor{strokecol}
  \draw (117bp,18bp) node {$c$};
\end{scope}
\begin{scope}
  \definecolor{strokecol}{rgb}{0.0,0.0,0.0};
  \pgfsetstrokecolor{strokecol}
  \draw (207bp,18bp) node {$b$};
\end{scope}
\begin{scope}
  \definecolor{strokecol}{rgb}{0.0,0.0,0.0};
  \pgfsetstrokecolor{strokecol}
  \draw (297bp,18bp) node {$d$};
\end{scope}\end{tikzpicture}\end{minipage}}
\subfigure[Episode $G_4$\label{fig:tc:d}]{\begin{minipage}[b]{3.4cm}\centering\normalsize\begin{tikzpicture}[>=latex',line join=bevel,scale=0.3]\pgfsetlinewidth{0.5bp}\input{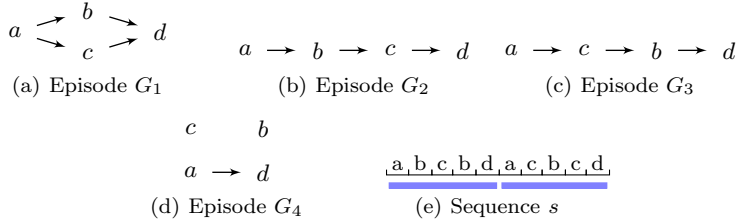}\end{tikzpicture}\end{minipage}}
\subfigure[\label{fig:tc:e}Sequence $s$]{
\centering
\begin{tikzpicture}
\node (n0) {};
\eventseq{a, b, c, b, d, a, c, b, c, d}{n}
\eventwin{n1}{n5}{3pt}{blue!50}{1pt}{1pt}
\eventwin{n6}{n10}{3pt}{blue!50}{1pt}{1pt}
\end{tikzpicture}}

\caption{Toy episodes used in Examples~\ref{ex:toy3}~and~\ref{ex:eclosure}. Edges induced by transitive closure are omitted to avoid clutter.}
\label{fig:tc}
\end{figure}

The contributions of our paper address these issues:
\begin{enumerate}
\item We introduce \emph{strict} episodes, a new subclass of general episodes.
We say that an episode is strict if all nodes with the same label are
connected. Thus all episodes in Figure~\ref{fig:toy} are strict, except $H_2$.
We will argue that this class is large, contains all serial and parallel episodes,
as well as episodes with unique labels, yet using only strict episodes eases the
computational burden.

\item We introduce a natural subset relationship between episodes based on the subset
relationship of sequences covering the episodes. We will prove that for strict
episodes this relationship corresponds to the subset relationship between
transitively closed graphs. For strict
episodes such a graph uniquely defines the episode.

\item We introduce milder versions of the closure concept, including the
\emph{instance-closure}. We will show that these closures can be used
efficiently, and that a frequency-closed episode is always instance-closed\footnote{In~\cite{tatti:10:mining}, the closure was based only on adding edges whereas in this version
we are also adding nodes.}.
We demonstrate that computing closure and frequency can be done in polynomial time\footnote{This was not guaranteed in\cite{tatti:10:mining}.}.

\item Finally, we present an algorithm that generates strict
instance-closed episodes with transitively closed graphs. Once these episodes
are discovered we can further prune the output by removing the episodes that
are not frequency-closed.
\end{enumerate}

\section{Preliminaries and Notation}
\label{sec:prel}
We begin by presenting the preliminary concepts and notations that will be used
throughout the paper. In this section we introduce the notions
of sequence and episodes.

A \emph{sequence} $s = s_1\cdots s_L$ is a string of symbols, or \emph{events}, coming from an 
\emph{alphabet} $\Sigma$, so that for each $i$, $s_i \in \Sigma$. Given a strictly increasing mapping
$\funcdef{m}{[1, N]}{[1, L]}$ we will define $s_m$ to be a subsequence $s_{m(1)}\cdots s_{m(N)}$.
Similarly, given two integers $1 \leq a \leq b \leq L$ we define $s[a, b] = s_a\cdots s_b$. 

An \emph{episode} $G$ is represented by a directed acyclic graph with labelled nodes,
that is, $G = (V, E, \lab{})$, where $V = \enpr{v_1}{v_K}$ is the set of nodes,
$E$ is the set of directed edges, and $\lab{}$ is the function
$\funcdef{\lab{}}{V}{\Sigma}$, mapping each node $v_i$ to its label. We denote
the set of nodes of an episode $G$ with $V(G)$, and its set of edges with $E(G)$.

Given a sequence $s$ and an episode $G$ we say that $s$ \emph{covers} $G$, or $G$ \emph{occurs} in $s$, if 
there is an \emph{injective} map $f$ mapping each node $v_i$ to a valid index
such that the node $v_i$ in $G$ and the corresponding sequence element
$s_{f(v_i)}$ have the same label, $s_{f(v_i)} = \lab{v_i}$, and that if there
is an edge $(v_i, v_j)$ in $G$, then we must have $f(v_i) < f(v_j)$. In other
words, the parents of $v_j$ must occur in $s$ before $v_j$.  For an example, see
Figure~\ref{fig:def:a}. If the mapping $f$ is surjective, that is, all events
in $s$ are used, we will say that $s$ is an \emph{instance} of $G$.

\subsection{Frequency}
In our search for frequent episodes, we will use and compare two conceptually different definitions of frequency.

Traditionally, episode mining is based on searching for episodes that are covered
by windows of certain fixed size often enough. The frequency of a
given episode is then defined as the number of such windows that cover it.

\begin{definition}
The \emph{fixed-window frequency} of an episode $G$ in a sequence $s$, denoted $\freqf{G}$, is defined as the number of time windows of a given size $\rho$ within $s$, in which the episode $G$ occurs. Formally,
\[
	\freqf{G} = \abs{\set{(a, b) | b = a + \rho - 1, a \leq N, b \geq 1, \text{ and } s[a, b] \text{ covers } G }}.
\]
\end{definition}
See Figure~\ref{fig:def:b} for example.

The frequency of an episode is sometimes expressed using the number of minimal windows that contain it. To satisfy the downward-closed property, we say that these windows must be non-overlapping.

\begin{definition}
The \emph{disjoint-window frequency} of an episode $G$ in a sequence $s$,
denoted $\freqm{G}$, is defined as the maximal number of non-overlapping windows within $s$ that contain episode $G$. Formally,
\[
	\freqm{G} = \max \Set{\abs{\enset{(a_1, b_1)}{(a_N, b_N)}} |\begin{array}{l}  s[a_i, b_i] \text{ covers } G, b_i - a_i < \rho \text{ and} \\
	{[a_i, b_i]} \cap [a_j, b_j] = \emptyset \text{ for } 1 \leq i, j \leq N \end{array}}.
\]
\end{definition}
See Figure~\ref{fig:def:c} for example.

We now establish a connection between the disjoint-window frequency and actual minimal windows.
\begin{definition}
Given a sequence $s$ and an episode $G$, a window $s[a,b]$ is called a \emph{minimal window} of $G$ in $s$, if $s[a,b]$ covers $G$, and if no proper subwindow of $s[a,b]$ covers $G$.
We will also refer to the interval $[a, b]$ as a minimal window, if $s$ is clear from the context.
\end{definition}
It is easy to see that the maximal number of non-overlapping windows within $s$ that contain $G$ is equal to the maximal number of non-overlapping minimal windows within $s$ that contain $G$.

Whenever it does not matter whether we are dealing with the fixed-window frequency or the disjoint-window frequency, we simply denote $\freq{G}$.

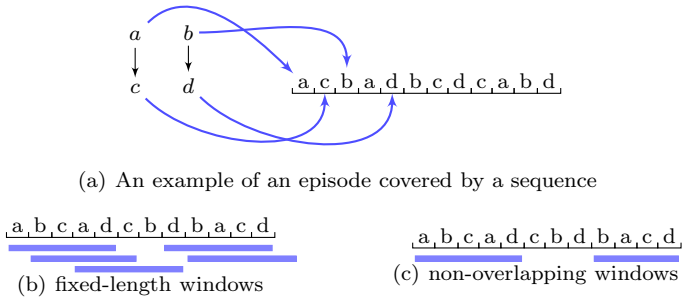
\begin{figure}[htb!]
\centering
\subfigure[\label{fig:def:a}An example of an episode covered by a sequence]{
\begin{minipage}{8cm}
\centering
\begin{tikzpicture}

\node[anchor = mid] (e1) {$a$};
\node[right=0.7cm of e1.mid, inner sep = 2pt, anchor = mid] (e2) {$b$};
\node[below=0.7cm of e1.mid, inner sep = 2pt, anchor = mid] (e3) {$c$};
\node[below=0.7cm of e2.mid, inner sep = 2pt, anchor = mid] (e4) {$d$};
\path [draw, -latex'] (e1) -- (e3);
\path [draw, -latex'] (e2) -- (e4);

\node[right=1cm of e4] (n0) {};
\eventseq{a, c, b, a, d, b, c, d, c, a, b, d}{n}

\path [line] (e1) edge [out=45,in=135] (n1);
\path [line] (e2) edge [out=0,in=90] (n3);
\path [line] (e3) edge [out=315,in=270] (n2);
\path [line] (e4) edge [out=315,in=270] (n5);
\end{tikzpicture}
\end{minipage}}

\subfigure[\label{fig:def:b}fixed-length windows]{
\begin{minipage}{5cm}
\centering
\begin{tikzpicture}
\node (n0) {};
\eventseq{a, b, c, a, d, c, b, d, b, a, c, d}{n}
\eventwin{n1}{n5}{3pt}{blue!50}{1pt}{1pt}
\eventwin{n2}{n6}{7pt}{blue!50}{1pt}{1pt}
\eventwin{n4}{n8}{11pt}{blue!50}{1pt}{1pt}

\eventwin{n8}{n12}{3pt}{blue!50}{1pt}{1pt}
\eventwin{n9}{n12}{7pt}{blue!50}{1pt}{-8pt}
\end{tikzpicture}
\end{minipage}}
\subfigure[\label{fig:def:c}non-overlapping windows]{
\begin{minipage}{5cm}
\centering
\begin{tikzpicture}
\node (n0) {};
\eventseq{a, b, c, a, d, c, b, d, b, a, c, d}{n}
\eventwin{n1}{n5}{3pt}{blue!50}{1pt}{1pt}
\eventwin{n9}{n12}{3pt}{blue!50}{1pt}{1pt}
\end{tikzpicture}
\end{minipage}}

\caption{A toy example illustrating different support measures.  
Figure~\ref{fig:def:a} contains an example of a sequence covering an episode.
Figure~\ref{fig:def:b} shows all 5 sliding windows of length 5 containing the
episode. Figure~\ref{fig:def:c} shows the maximal number, 2, of non-overlapping
windows covering the episode.}

\end{figure}

\section{Strict Episodes}
\label{sec:strict}
In this section we will define our mining problem and give a rough outline of
the discovery algorithm.

Generally, a pattern is considered closed if there exists no more specific
pattern having the same frequency. In order to speak of more specific patterns,
we must first have a way to describe episodes in these terms.

\begin{definition}
Assume two transitively closed episodes $G$ and $H$ with the same number of
nodes.  An episode $G$ is called a \emph{subepisode} of episode $H$, denoted
$G\preceq H$ if the set of all sequences that cover $H$ is a subset of the set
of all sequences that cover $G$. If the set of all sequences that cover $H$ is a proper subset of the set   
of all sequences that cover $G$, we call $G$ a \emph{proper subepisode} of $H$, denoted
$G\prec H$.

For a more general case, assume that $\abs{V(G)} < \abs{V(H)}$.  We
say that $G$ is a subepisode of $H$, denoted $G \preceq H$, if there is a subgraph $H'$ of $H$
such that $G \preceq H'$. Moreover, let $\alpha$ be a graph homomorphism from $H'$ to $H$.
If we wish to emphasize $\alpha$, we write $G \preceq_\alpha H$. 

If  $\abs{V(G)} > \abs{V(H)}$, then $G$ is automatically not a subepisode of $H$.
\end{definition}

The problem with this definition is that we do not have the means to
compute this relationship for general episodes. To do this, one would have
to enumerate all possible sequences that cover $H$ and compute whether they cover $G$.
We approach this problem by restricting ourselves to a class of episodes where
this comparison can be performed efficiently.

\begin{definition}
An episode $G$ is called \emph{strict} if for any two nodes $v$ and
$w$ in $G$ sharing the same label, there exists a path either from $v$ to $w$
or from $w$ to $v$.
\end{definition}

We will show later that the subset relationship can be computed efficiently for
strict episodes.  However, as can be seen in
Figure~\ref{fig:toy2}, there are episodes that are not strict.  Our algorithm
will not discover these types of patterns.

\begin{figure}[htb!]
\centering
\subfigure[non-strict]{\begin{minipage}[b]{2.5cm}\centering\normalsize\begin{tikzpicture}[>=latex',line join=bevel,scale=0.3]\pgfsetlinewidth{0.5bp}\pgfsetcolor{black}
  \draw [->] (27bp,71.831bp) .. controls (27bp,64.131bp) and (27bp,54.974bp)  .. (27bp,36.413bp);
  \draw [->] (99bp,71.831bp) .. controls (99bp,64.131bp) and (99bp,54.974bp)  .. (99bp,36.413bp);
\begin{scope}
  \definecolor{strokecol}{rgb}{0.0,0.0,0.0};
  \pgfsetstrokecolor{strokecol}
  \draw (27bp,90bp) node {$a$};
\end{scope}
\begin{scope}
  \definecolor{strokecol}{rgb}{0.0,0.0,0.0};
  \pgfsetstrokecolor{strokecol}
  \draw (99bp,18bp) node {$c$};
\end{scope}
\begin{scope}
  \definecolor{strokecol}{rgb}{0.0,0.0,0.0};
  \pgfsetstrokecolor{strokecol}
  \draw (99bp,90bp) node {$a$};
\end{scope}
\begin{scope}
  \definecolor{strokecol}{rgb}{0.0,0.0,0.0};
  \pgfsetstrokecolor{strokecol}
  \draw (27bp,18bp) node {$b$};
\end{scope}\end{tikzpicture}\end{minipage}}
\subfigure[strict]{\begin{minipage}[b]{2.5cm}\centering\normalsize\begin{tikzpicture}[>=latex',line join=bevel,scale=0.3]\pgfsetlinewidth{0.5bp}\pgfsetcolor{black}
  \draw [->] (27bp,71.831bp) .. controls (27bp,64.131bp) and (27bp,54.974bp)  .. (27bp,36.413bp);
  \draw [->] (54bp,90bp) .. controls (56.615bp,90bp) and (59.229bp,90bp)  .. (71.93bp,90bp);
  \draw [->] (99bp,71.831bp) .. controls (99bp,64.131bp) and (99bp,54.974bp)  .. (99bp,36.413bp);
  \draw [->] (45.169bp,71.831bp) .. controls (53.715bp,63.285bp) and (64.056bp,52.944bp)  .. (80.587bp,36.413bp);
\begin{scope}
  \definecolor{strokecol}{rgb}{0.0,0.0,0.0};
  \pgfsetstrokecolor{strokecol}
  \draw (27bp,90bp) node {$a$};
\end{scope}
\begin{scope}
  \definecolor{strokecol}{rgb}{0.0,0.0,0.0};
  \pgfsetstrokecolor{strokecol}
  \draw (99bp,18bp) node {$c$};
\end{scope}
\begin{scope}
  \definecolor{strokecol}{rgb}{0.0,0.0,0.0};
  \pgfsetstrokecolor{strokecol}
  \draw (99bp,90bp) node {$a$};
\end{scope}
\begin{scope}
  \definecolor{strokecol}{rgb}{0.0,0.0,0.0};
  \pgfsetstrokecolor{strokecol}
  \draw (27bp,18bp) node {$b$};
\end{scope}\end{tikzpicture}\end{minipage}}
\caption{An example of a non-strict and a strict episode.}
\label{fig:toy2}
\end{figure}
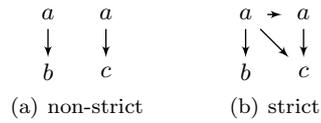

Having defined a subset relationship, we can now define an $f$-closed episode.

\begin{definition}
An episode $G$ is \emph{frequency-closed}, or \emph{f-closed}, if there exists no episode $H$, such that $G\prec H$ and $\freq{G} = \freq{H}$.
\end{definition}

\begin{problem}
Given a sequence $s$, a frequency measure, either fixed-window or
disjoint-window, and a threshold $\sigma$, find all $f$-closed strict episodes
from $s$ having the frequency higher or equal than $\sigma$.
\end{problem}

A traditional approach to discovering closed patterns is to discover generators,
that is, for each closed pattern $P$, discover minimal patterns whose closure
is equal to $P$~\cite{pasquier:99:discovering}. When patterns are itemsets, it
holds that the collection of frequent generators are downward closed. Hence,
they can be mined efficiently using a BFS-style approach. 

We cannot directly apply this framework for two reasons:
Firstly, unlike with itemsets, we cannot define a closure based on frequency.
We solve this by defining an instance-closure, a more conservative closure that
guarantees that all $f$-closed episodes are discovered. Once instance-closed
episodes are discovered, the $f$-closed episodes are selected in a post-processing step.
The second obstacle is the fact that the collection of generator episodes is not
necessarily downward-closed. We solve this problem by additionally using some intermediate
episodes that will guarantee the correctness of the algorithm.

A sketch of the miner is given in Algorithm~\ref{alg:sketch} (the details
of the algorithm are described in subsequent sections). 
The algorithm consists of two loops. In the outer loop,
Lines~\ref{alg:sk:loop1begin}--\ref{alg:sk:loop1end}, we discover parallel
episodes by adding nodes. In the inner loop,
Lines~\ref{alg:sk:loop2begin}--\ref{alg:sk:loop2end}, we discover general
episodes by adding edges. Each candidate episode is tested, and if the
candidate is frequent and a generator, then the episodes is added into the
collection. Finally, we discover the $f$-closed episodes as a last step.

\begin{algorithm}
\Input{sequence $s$, threshold $\sigma$, window size $\rho$}
\Out{frequent $f$-closed episodes}
	$\efam{C} \define $ all frequent episodes with $1$ node\;
	$\efam{E} \define \emptyset$; $N \define 1$\;
	\While {$\efam{C} \neq \emptyset$} {\nllabel{alg:sk:loop1begin}
		$M \define 0$\;
		\While {$\efam{C} \neq \emptyset$} { \nllabel{alg:sk:loop2begin}
			\ForEach {$G \in \efam{C}$} {
				\If {$G$ is a frequent generator w.r.t. $i$-closure} {
					add $G$ to $\efam{E}$\;
					add intermediate episodes\;
				}
			}
			$\efam{P} \define $ episodes with $N$ nodes and $M$ edges from $\efam{E}$\;
			$\efam{C} \define $ candidates generated from $\efam{P}$ with $N$ nodes and $M + 1$ edges\;\nllabel{alg:sk:generate}
			$M \define M + 1$\;\nllabel{alg:sk:loop2end}
		}
		$\efam{P} \define $ parallel episodes with $N$ nodes from $\efam{E}$\;
		$\efam{C} \define $ parallel candidates generated from $\efam{P}$ with $N + 1$ nodes\;
		$N \define N + 1$\; \nllabel{alg:sk:loop1end}
	}
	\Return $f$-closed episodes from the $i$-closures of episodes in $\efam{E}$\;
\caption{Rough outline of the breath-first mining algorithm. The details of each step are given in Sections~\ref{sec:subset}--\ref{sec:algorithm}.}
\label{alg:sketch}
\end{algorithm}

To complete the algorithm we need to solve several subproblems:
\begin{enumerate}
\item computing the subset relationship efficiently (Section~\ref{sec:subset})
\item defining and computing instance-closure (Section~\ref{sec:closure})
\item generating candidate episodes,  Line~\ref{alg:sk:generate} (Section~\ref{sec:candidate})
\item generating intermediate episodes and proving the correctness (Section~\ref{sec:algorithm})
\end{enumerate}

\section{Computing The Subset Relationship}
\label{sec:subset}

In this section we will demonstrate that computing the subset relationship for strict episodes
can be done efficiently. This allows us to build an algorithm to efficiently
discover closed episodes.

We now provide a canonical form for episodes, which will help us in further theorems and algorithms.
We define an episode that has the maximal number of edges using a
fundamental notion familiar from graph theory.

\begin{definition}
The \emph{transitive closure}
of an episode $G = (V, E, \lab{})$ is an episode
$\closure{G}$, where $G$ and $\closure{G}$ have the same set of nodes $V$, the
same $\lab{}$ function mapping nodes to labels, and the set of edges in
$\closure{G}$ is equal to
\[
E(\closure{G})   = E\cup \set{(v_i, v_j) \mid \text{ a path exists in } G \text{ from } v_i \text{ to } v_j}.
\]
\end{definition}

Note that, despite its name, the transitive closure has nothing to do with the concept of closed episodes.

\begin{definition}
Let $\sspace$ be the space of all strict and transitively closed episodes. 
\end{definition}
In the remaining text, we consider episodes to be transitively closed and 
strict, unless stated otherwise.  An episode and its transitive closure will
always have the same frequency, hence by restricting ourselves to tranistively
closed episodes we will not lose any episodes.

For notational simplicity, we now introduce the concept of two episodes having
\emph{identical nodes}. Given an episode $G \in \sspace$ with nodes $V(G) =
\enset{v_1}{v_N}$, we assume from now on that the order of the nodes is always
fixed such that for $i < j$ either $\lab{v_i} < \lab{v_j}$ lexicographically, or $\lab{v_i} =
\lab{v_j}$ and $v_i$ is an ancestor of $v_j$ with respect to $E(G)$ (i.e. edge $(v_i,v_j)\in E(G)$).  We say that two episodes $G$ and   
$H$, with $V(G) = \enset{v_1}{v_N}$ and $V(H) = \enset{w_1}{w_N}$ have
\emph{identical nodes} if $\lab{v_i} = \lab{w_i}$ for $i = \enset{1}{N}$. To simplify notation, we
often identify $v_i$ and $w_i$. This convention allows us to write statements
such as $E(G) \cup E(H)$, if $G$ and $H$ have identical nodes.

Our next step is to show how we can test the subset relationship for strict episodes.

\begin{lemma}
\label{lem:mappings}
Let $G, H \in \sspace$ be episodes with identical nodes. Let $s$
be a valid instance of both $G$ and $H$. Let $g$ and $h$ be the corresponding
functions mapping nodes of $G$ and $H$ to indices of $s$, respectively. Then $g = h$.
\end{lemma}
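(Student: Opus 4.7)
The plan is to prove the lemma by a label-by-label argument, exploiting the fact that strictness together with transitive closure totally orders the nodes sharing any given label.

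First I would fix an arbitrary label $\ell \in \Sigma$ and consider the sub-collection of nodes in $G$ (equivalently $H$, since they have identical nodes) whose label is $\ell$; call them $v_{i_1}, \ldots, v_{i_k}$ in the canonical order, i.e. $i_1 < i_2 < \cdots < i_k$. By the definition of a strict episode, any two nodes sharing the label $\ell$ are connected by a directed path, and because $G$ is transitively closed this path collapses to a single edge. Combined with the canonical ordering convention ($v_{i_a}$ is an ancestor of $v_{i_b}$ whenever $a < b$), this forces $(v_{i_a}, v_{i_b}) \in E(G)$ for all $a < b$, and identically $(v_{i_a}, v_{i_b}) \in E(H)$. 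So the nodes carrying label $\ell$ form the \emph{same} chain, in the same order, in both episodes.

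Next I would translate this into a constraint on $g$ and $h$. Since $s$ is an instance of $G$ (so $g$ is injective and surjective as an occurrence), the values $g(v_{i_1}), \ldots, g(v_{i_k})$ are precisely the indices of $s$ carrying label $\ell$; the same holds for $h$. The edges $(v_{i_a}, v_{i_{a+1}})$ in $G$ force $g(v_{i_1}) < g(v_{i_2}) < \cdots < g(v_{i_k})$, and the analogous edges in $H$ force $h(v_{i_1}) < h(v_{i_2}) < \cdots < h(v_{i_k})$. Two strictly increasing enumerations of the same finite set must coincide, so $g(v_{i_a}) = h(v_{i_a})$ for every $a$. Repeating the argument for every label $\ell$ exhausts $V(G)$ and yields $g = h$.

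The only delicate point is verifying that the canonical ordering of the nodes (which is defined intrinsically on each episode) really produces the \emph{same} order for both $G$ and $H$ on the equal-label nodes. This is where the hypothesis of identical nodes plus the ancestor-tie-breaking rule is essential: both episodes are strict and transitively closed, so the chain among label-$\ell$ nodes is total in each, and the convention declares $v_i$ to precede $v_j$ (for $i<j$) in that chain. Once this is unpacked the rest of the argument is a short combinatorial observation, so I do not anticipate any further obstacles.
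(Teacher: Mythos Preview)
Your proof is correct and follows essentially the same approach as the paper: both arguments exploit that strictness plus transitive closure makes the nodes of any fixed label form a chain (the same chain in $G$ and $H$, by the identical-nodes convention), and that a valid surjective mapping must send this chain order-isomorphically onto the occurrences of that label in $s$. The paper phrases it node-by-node (``$v$ has $l-1$ same-label ancestors, hence maps to the $l$-th occurrence'') whereas you phrase it label-by-label, but the content is identical.
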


\begin{proof}
Let $v$ be a node. Assume that function $g$ maps $v$ to the $l$th occurence of $\lab{v}$ in
$s$. Since $s$ is an instance, then there are $l - 1$ ancestors of $v$ in $G$
having the same label as $v$. Since $G$ and $H$ have identical nodes, $v$ also
has $l - 1$ ancestors in $H$. Since $s$ is an instance of $H$, $h$ must map
$v$ to $l$th occurrence of $\lab{v}$.  This implies that $g = h$.
\end{proof}

Lemma~\ref{lem:mappings} implies that given an episode $G$ and an instance $s$,
there is only one valid function $f$ mapping nodes of $G$ to indices of $s$.  Let us
denote this mapping by $\map{G, s} = f$. If $G$ is a parallel episode with
nodes $V = V(G)$ we write $\map{V, s}$.

Crucially, we can easily compute the subset relationship between two episodes.

\begin{theorem}
\label{thr:sub}
For episodes $G, H \in \sspace$ with identical nodes, $E(G)\subseteq E(H)$ if and only if $G \preceq H$. 
\end{theorem}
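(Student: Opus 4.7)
The plan is to prove both implications, with the forward direction being essentially immediate and the reverse needing a tailored witness sequence.

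For $E(G) \subseteq E(H) \Rightarrow G \preceq H$: take any sequence $s$ that covers $H$ via some injective map $f$, and reuse that same $f$ as a cover of $G$. Labels match at every node because $G$ and $H$ have identical nodes, and every edge constraint of $G$ is automatically an edge constraint of $H$ by assumption, so $f$ respects the edges of $G$. Hence every $s$ covering $H$ also covers $G$, i.e., $G \preceq H$.

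For the converse I argue by contrapositive: assume $(v_i,v_j) \in E(G) \setminus E(H)$ and construct a single sequence $s$ that covers $H$ but not $G$. Because $H$ is transitively closed, the absence of the edge $(v_i,v_j)$ in $H$ means there is no path from $v_i$ to $v_j$ in $H$, so there exists a topological sort $\pi$ of $V(H)$ respecting $E(H)$ in which $v_j$ precedes $v_i$. Set $N = \abs{V(H)}$ and define $s$ of length $N$ by $s_k = \lab{\pi(k)}$. Then $f = \pi^{-1}$ is a label-preserving bijection respecting every edge of $H$, so $s$ is an instance, and in particular a cover, of $H$.

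It remains to rule out that $s$ also covers $G$. If it did, via some injective $g$, then since $\abs{s} = \abs{V(G)}$ the map $g$ would actually be bijective, making $s$ an instance of $G$ as well. Lemma~\ref{lem:mappings} then forces $g = f$. But the edge $(v_i,v_j) \in E(G)$ demands $g(v_i) < g(v_j)$, i.e.\ $f(v_i) < f(v_j)$, contradicting the choice of $\pi$ that places $v_j$ before $v_i$. The main obstacle is exactly this last step: a priori $G$ might witness $s$ via some alternative assignment that permutes same-label positions and sidesteps the offending edge. Lemma~\ref{lem:mappings} is the crucial lever that blocks this, and it is precisely strictness—ruling out incomparable nodes sharing a label in either episode—that makes that lemma applicable here.
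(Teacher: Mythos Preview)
Your proof is correct and follows essentially the same approach as the paper: the forward direction is immediate, and for the converse you construct (by contrapositive) a topological sort of $H$ placing $v_j$ before $v_i$, read off the corresponding instance $s$, and invoke Lemma~\ref{lem:mappings} to rule out any alternative cover of $G$. The only cosmetic difference is that the paper builds the witness sort explicitly (ancestors of $y$, then $y$, then the rest) while you appeal to the standard fact that such a sort exists; your version is arguably cleaner in that you make explicit why the assumed cover $g$ must be bijective before applying the lemma.
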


\begin{proof}
To prove the ''only if'' direction assume that $E(G) \subseteq E(H)$. 
Let $s = \enset{s_1}{s_N}$ be an instance of $H$ and let $f = \map{H, s}$ be the corresponding mapping.
Then $f$ is also a valid mapping for $G$. Thus, $G \preceq H$. 

To prove the other direction, assume that $E(G) \nsubseteq E(H)$. We therefore
must have an edge $e = (x, y) \in E(G)$, such that $e \notin E(H)$.  We build
$s$ by first visiting every parent of $y$ in $H$ in a valid order with respect
to $H$, then $y$ itself, and then the rest of the nodes, also in a valid order.
Let $h$ be the visiting order of $G$ while constructing $s$, that is, $h(v) = 1$,
if we visited $v$ first, $h(v) = 2$, if we visited $v$ second.
Note that $h(y) < h(x)$.
Assume now that $s$ covers $G$ and let $f = \map{G, s}$ be the corresponding mapping. 
But then Lemma~\ref{lem:mappings} implies that $g = h$, thus $g(y) < g(x)$,
contradicting the fact that $(x, y) \in E(G)$.
\end{proof}

Theorem~\ref{thr:sub} essentially shows that our subset relationship is in fact
a graph subset relationship which allows us to design an efficient mining
algorithm.

We finish this section by defining what we exactly mean when we say that two
episodes are equivalent and demonstrate that the class of strict episodes
contains all parallel episodes. 

\begin{definition}
Episodes $G$ and $H$ are said to be equivalent, denoted by $G \sim H$, if each sequence that covers $G$ also covers $H$, and vice versa.
\end{definition}

\begin{corollary}[of Theorem~\ref{thr:sub}]
For episodes $G, H \in \sspace$, $G \sim H$ if and only if $E(G) = E(H)$ and $G$ and $H$ have identical nodes. 
\end{corollary}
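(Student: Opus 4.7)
The corollary has a trivial direction and a slightly more delicate one, and both reduce to Theorem~\ref{thr:sub} once we know that $G$ and $H$ have identical nodes. The plan is therefore: (i) handle the easy direction by direct appeal to the theorem, and (ii) for the hard direction, first extract from $G\sim H$ the fact that the two episodes have the same number of nodes and the same multiset of labels, argue that this forces identical nodes under the fixed node ordering convention, and then invoke Theorem~\ref{thr:sub} twice to pin down $E(G)=E(H)$.

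For the ``if'' direction, assume $E(G)=E(H)$ and that the nodes are identical. Then $E(G)\subseteq E(H)$ and $E(H)\subseteq E(G)$, so Theorem~\ref{thr:sub} yields both $G\preceq H$ and $H\preceq G$; this is exactly the statement $G\sim H$.

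For the ``only if'' direction, suppose $G\sim H$. I first show $\abs{V(G)}=\abs{V(H)}$: any topological ordering of the labels of $G$ produces a sequence $s$ of length $\abs{V(G)}$ that is an instance of $G$, and since $s$ covers $H$ we need an injective map $V(H)\to\{1,\dots,\abs{s}\}$, forcing $\abs{V(H)}\le \abs{V(G)}$; the symmetric argument gives equality. Using an instance $s$ of $G$ again, the injection $V(H)\to\{1,\dots,\abs{s}\}$ is now a bijection that preserves labels, so the multisets of labels of $V(G)$ and $V(H)$ coincide. Under the fixed ordering convention (nodes sorted lexicographically by label, ties broken by ancestry, which is a total order on same-label nodes precisely because $G,H\in\sspace$ are strict), identical multisets of labels give $\lab{v_i}=\lab{w_i}$ for every $i$, i.e.\ identical nodes. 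With identical nodes in hand, Theorem~\ref{thr:sub} applied to $G\preceq H$ and to $H\preceq G$ gives $E(G)\subseteq E(H)$ and $E(H)\subseteq E(G)$, hence $E(G)=E(H)$.

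The main obstacle is the identical-nodes step, because Theorem~\ref{thr:sub} is only stated for episodes that already have identical nodes; we must justify that this bookkeeping condition is not a real restriction. The construction of an explicit instance of $G$ (whose length equals $\abs{V(G)}$) is the key lever, since it simultaneously forces the node counts and the label multisets to agree; strictness then does the rest by making the within-label ordering well-defined on both sides.
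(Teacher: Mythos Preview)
Your proof is correct and follows the same approach as the paper: reduce $G\sim H$ to the conjunction $G\preceq H$ and $H\preceq G$, then invoke Theorem~\ref{thr:sub} in both directions. The paper's own proof is a two-line sketch that leaves the identical-nodes verification entirely implicit, whereas you spell that step out (equal node counts via an instance of $G$, hence equal label multisets, hence identical nodes under the strict ordering convention); this bookkeeping is the only real content beyond Theorem~\ref{thr:sub}, so your version is strictly more complete.
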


\begin{proof}
This follows from the fact that $G \sim H$ is equivalent to $G \preceq H$ and
$H \preceq G$, and that $E(G) = E(H)$ is equivalent to $E(G) \subseteq E(H)$ and $E(H)
\subseteq E(G)$.
\end{proof}

Note that by generating only transitively closed strict episodes, we
have obtained an efficient way of computing the subset relationship between two episodes.
At first glance,
though, it may seem that we have completely omitted certain parallel episodes
from consideration --- namely, all non-strict parallel episodes (i.e.
those containing multiple nodes with same labels). Note, however, that for each
such episode $G$, there exists a strict episode $H$, such that $G \sim
H$. To build such an episode $H$, we just need to create edges that would
strictly define the order among nodes with the same labels. From now on, when we talk of parallel episodes, we actually refer to their strict equivalents.

\section{Closure}
\label{sec:closure}

Having defined a subset relationship among episodes, we are now able to speak
of an episode being more specific than another episode. However, this is only
the first step towards defining the closure of an episode. We know that the
closure must be more specific, but it must also be unique and well-defined. We
have already seen that basing such a closure on the frequency fails, as there
can be multiple more specific closed episodes that could be considered as
closures.

In this section we will establish three closure operators, based on the instances of the episode found within the sequence. 
The first closure adds
nodes, the second one adds edges, and the third is a combination of the first two. We also show that these operators satisfy three important properties.
We will use these properties to prove the correctness of our mining algorithm.
The needed properties for a closure operator $h$ are
\begin{enumerate}
\item \emph{Extension:} $G \preceq h(G)$,
\item \emph{Idempotency:} $h(G) = h(h(G))$,
\item \emph{Monotonicity:} $G_1 \preceq G_2 \Rightarrow h(G_1) \preceq h(G_2)$.
\end{enumerate}

These properties are usually shown using the Galois connection but to avoid cumbersome
notation we will prove them directly.

\subsection{Node Closure}

In this section we will define a node closure and show that it satisfies the
properties. Assume that we are given a sequence $s$ and a window size $\rho$.

Our first step is to define a function $f_N$ which maps an episode $G \in \sspace$ to a set
of intervals which contain all instances of $G$,
\[
	f_N(G; s) = \set{[\min m, \max m] \mid s_m \text{ covers } G, \ \max m - \min m < \rho, m \in M}, 
\]
where $M$ contains all strictly increasing mappings to $s$. 

Our next step is to define $X_G$ to be the set of all symbols occurring in each interval,
\[
	X_G = \set{x \in \Sigma \mid x \text{ occurs in } s[a, b] \text{ for all } [a, b] \in f_N(G)}. 
\]

Let $W$ be the labels of the nodes of $G$. We define our first closure operator, 
$\iclN{G}$ to be $G$ augmented with nodes having the labels $X_G -
W$, that is, we add nodes to $G$ with labels that occur inside each
window that contains $G$.

\begin{theorem}
$\iclN{G}$ is an idempotentic and monotonic extension operator.
\end{theorem}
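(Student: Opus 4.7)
The overall plan is to unwrap each property using Theorem~\ref{thr:sub} together with the structural description of $\iclN{G}$ as $G$ augmented by a collection of isolated nodes, one per label in $X_G - W$, which are fresh with respect to $W$ and each appear with multiplicity one; note also that $W \subseteq X_G$, since every interval in $f_N(G)$ carries a witnessing instance of $G$ and therefore contains every label of $W$. Extension is then immediate: take $G$ itself as the witness subgraph $H' \subseteq \iclN{G}$; the inclusion $E(G) \subseteq E(H') = E(G)$ combined with Theorem~\ref{thr:sub} yields $G \preceq \iclN{G}$.

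For idempotency, it suffices to prove $X_{\iclN{G}} \subseteq X_G$, since then no fresh labels are added on a second application of the closure and the graphs coincide. The key lemma I would establish is that for every $[a,b] \in f_N(G)$ there exists $[a',b'] \in f_N(\iclN{G})$ with $[a',b'] \subseteq [a,b]$: start from a witnessing instance of $G$ inside $[a,b]$; by definition of $X_G$ every fresh label in $X_G - W$ occurs somewhere in $s[a,b]$, and the mapping extends to the isolated fresh nodes while keeping all positions within $[a,b]$ (injectivity is automatic because the fresh labels do not belong to $W$). Any label appearing in every interval of $f_N(\iclN{G})$ is therefore present in every $[a,b] \in f_N(G)$ and hence lies in $X_G$.

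Monotonicity is the main obstacle, because the fresh labels added to $\iclN{G_1}$ and $\iclN{G_2}$ need not align and one must exhibit a concrete witnessing subgraph inside $\iclN{G_2}$. Fix a subgraph $H' \subseteq G_2$ whose nodes are identified with $V(G_1)$ and satisfy $E(G_1) \subseteq E(H')$, and let $W_i$ denote the labels of $G_i$. Adapting the interval-containment argument above---restricting an instance of $G_2$ inside $[a_2,b_2] \in f_N(G_2)$ to $V(H')$ produces an instance of $G_1$ inside the same window---gives $X_{G_1} \subseteq X_{G_2}$, so each fresh label $x \in X_{G_1} - W_1$ occurs in $\iclN{G_2}$; moreover $H'$ has no $x$-labelled node because $x \notin W_1$, so one can always pick a representative in $\iclN{G_2} - H'$. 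Let $H''$ be the subgraph of $\iclN{G_2}$ induced by $V(H')$ together with one such representative per fresh label. The delicate point is verifying that $\iclN{G_1}$ and $H''$ have identical nodes in the canonical sense required by Theorem~\ref{thr:sub}: for labels in $W_1$, strictness plus transitive closure impose a total order on same-label nodes in both $G_1$ and $H'$, and matching cardinalities combined with $E(G_1) \subseteq E(H')$ force these orders to coincide; fresh labels appear uniquely and so cause no ordering ambiguity. Granting this, $E(\iclN{G_1}) = E(G_1) \subseteq E(H') \subseteq E(H'')$ and Theorem~\ref{thr:sub} give $\iclN{G_1} \preceq H''$, whence $\iclN{G_1} \preceq \iclN{G_2}$ by the subepisode definition.
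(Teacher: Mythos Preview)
Your proof is correct and follows essentially the same approach as the paper: all three properties are established via the inclusion $X_{G_1}\subseteq X_{G_2}$ (derived from the interval-containment argument) and by extending the witnessing subgraph/homomorphism to the fresh isolated nodes. The paper proves monotonicity before idempotency and phrases the extension of $H'$ to $H''$ as extending the graph homomorphism $\alpha$ rather than picking representative nodes, but the substance is identical; your additional care in checking the ``identical nodes'' hypothesis of Theorem~\ref{thr:sub} is a detail the paper leaves implicit.
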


\begin{proof}
The extension property follows immediately because we are only adding new nodes.

Assume now that $G \preceq H$. Let $[a, b] \in f_N(H)$ be an interval. Then,
there is an interval $[c, d] \in f_N(G)$ such that $a \leq c \leq d \leq b$.
This means that any symbol occurring in every interval in $f_N(G)$ 
will also occur in every interval in $f_N(H)$, that is, $X_G \subseteq X_H$.

Let $\alpha$ be a graph homomorphism such that $G \preceq_\alpha H$.  Let $x
\in X_G$ be a symbol not occurring in $G$ and let $v$ be the new node in
$\iclN{G}$ with this label.  If $x$ does not occur in $H$, then $x \in X_H$ and thus a node with a label $x$ is added into $H$. In any case, there
is a node $w$ with a label $x$ in $\iclN{H}$.  We can extend $\alpha$ by
setting $\alpha(v) = w$.  By doing this for each new node we have proved monotonicity, i.e. that
$\iclN{G} \preceq \iclN{H}$. 

To prove idempotency, let us write $H = \iclN{G}$. Since any new node in $H$
must occur inside the instances of $G$, we have $f_N(G) \subseteq f_N(H)$.  This
implies that $X_H \subseteq X_G$ and since we saw before that $X_G \subseteq
X_H$, it implies that $X_G = X_H$.  Since for every label in $X_H$ there is a
node in $H$ with the same label, it holds that $\iclN{H} = H$.
\end{proof}

Note that the node closure adds only events with unique labels. The reason for
this is that if we add node $x$ to an episode containing node $y$ such that
$\lab{x} = \lab{y}$, then we would have to connect $x$ and $y$. This may reduce
the instances and invalidate the proof. For the same reason, we only add a maximum of one new node with a particular label. In other words, if each window containing episode $G$ also contains two occurrences of $a$, we will only add one node with label $a$ to $G$ (provided $G$ does not contain a node labelled $a$ already).

\subsection{Edge Closure}

We begin by introducing the concept of a maximal episode that is covered by a given set of sequences. 

\begin{definition}
\label{def:maximalepisode} 
Given a set of nodes $V$, and a set S of instances of V, interpreted as a parallel episode, we define the \emph{maximal episode}
covered by set $S$ as the episode $H$, where $V(H) = V$ and
\[
	E(H) = \set{(x, y) \in V \times V \mid f(x) < f(y), f = \map{V, s} \text{ for all } s \in S}, 
\]
where $\map{V, s}$ refers to the mapping defined in Lemma~\ref{lem:mappings}
and $V$ is interpreted as a parallel episode. 
\end{definition}

To define a closure operator we first define a function mapping an episode $G$ to all of its valid instances in a sequence $s$,
\[
	f_E(G; s) = \set{s_m \mid s_m \text{ covers } G, \ \max m - \min m < \rho, m \in M},
\]
where $M$ contains all strictly increasing mappings to $s$. When the sequence is known from the context, we denote simply $f_E(G)$

We define $\iclE{G}$ to be the maximal episode covered by $f_E(G)$.
If $\iclE{G} = G$, then we call $G$ an $e$-closed episode.

\begin{theorem}
$\iclE{G}$ is an idempotentic and monotonic extension operator.
\end{theorem}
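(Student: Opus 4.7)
The plan is to verify the three required properties — extension, idempotency, and monotonicity — in turn, closely mirroring the structure of the node-closure proof but adapted to the edge-based construction. The first two are essentially bookkeeping; the real work is in monotonicity.

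For extension, I would argue directly from the definition of the maximal episode: any instance $s\in f_E(G)$ has mapping $f=\map{V(G),s}$ which, by the very fact that $s$ covers $G$, satisfies $f(x)<f(y)$ for all $(x,y)\in E(G)$. Hence every edge of $G$ lies in $E(\iclE{G})$, and since $\iclE{G}$ has the same nodes as $G$, Theorem~\ref{thr:sub} yields $G\preceq\iclE{G}$. For idempotency, write $H=\iclE{G}$ and show $f_E(G)=f_E(H)$: the inclusion $f_E(H)\subseteq f_E(G)$ follows from $G\preceq H$ just established, and the reverse inclusion follows because every instance $s$ of $G$ has $\map{V(G),s}$ respecting the edges of $H$ by the defining property of $H$, so $s$ covers $H$ as well. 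With identical instance sets, the induced maximal episodes coincide, giving $\iclE{H}=H$.

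For monotonicity, assume $G_1\preceq G_2$ and split by node-count. The easy subcase is $|V(G_1)|=|V(G_2)|$ (identical nodes): Theorem~\ref{thr:sub} gives $E(G_1)\subseteq E(G_2)$, so every instance of $G_2$ is also an instance of $G_1$, i.e.\ $f_E(G_2)\subseteq f_E(G_1)$. Any edge forced on all instances of $G_1$ is therefore also forced on all instances of $G_2$, so $E(\iclE{G_1})\subseteq E(\iclE{G_2})$ and hence $\iclE{G_1}\preceq\iclE{G_2}$. In the general subcase $|V(G_1)|<|V(G_2)|$, let $G_2'$ be the subgraph and $\alpha$ the homomorphism witnessing $G_1\preceq_\alpha G_2$. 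Given an arbitrary instance $s'$ of $G_2$ with $g=\map{V(G_2),s'}$, I would project $s'$ to the subsequence $s_m'$ at the ordered index set $g(\alpha(V(G_1)))$. This subsequence is an instance of $G_2'$, hence also of $G_1$ because $G_1\preceq G_2'$ with identical nodes; let $f$ be its covering mapping for $G_1$. The composition $m\circ f$ and the composition $g\circ\alpha$ are then two covering mappings of $G_1$ onto the same positions of $s'$, and Lemma~\ref{lem:mappings} forces them to agree, so $m(f(v))=g(\alpha(v))$ for every $v\in V(G_1)$. Now for any $(x,y)\in E(\iclE{G_1})$, $s_m'$ being an instance of $G_1$ gives $f(x)<f(y)$, and since $m$ is strictly increasing this lifts to $g(\alpha(x))<g(\alpha(y))$; as $s'$ was arbitrary, $(\alpha(x),\alpha(y))\in E(\iclE{G_2})$, so $\alpha$ witnesses $\iclE{G_1}\preceq\iclE{G_2}$.

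The main obstacle is the different-size subcase of monotonicity: the edge condition defining $\iclE{G_1}$ is phrased in terms of canonical mappings from $V(G_1)$, while the edge condition defining $\iclE{G_2}$ is phrased in terms of canonical mappings from $V(G_2)$, and there is no \emph{a priori} reason these should match up along $\alpha$. The trick that makes everything work is restricting to a projection of $s'$ onto the $\alpha$-image and then invoking Lemma~\ref{lem:mappings} to pin down uniqueness of the covering mapping; without this uniqueness, the bridge between the two closures cannot be built.
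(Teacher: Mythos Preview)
Your proof is correct and follows essentially the same approach as the paper: extension and idempotency are handled identically, and for monotonicity both you and the paper project an instance of the larger episode to the subsequence indexed by the $\alpha$-image and invoke Lemma~\ref{lem:mappings} to pin down the canonical mapping. The only cosmetic differences are that the paper treats the equal-node and fewer-node cases uniformly via $\alpha$ rather than splitting them, and argues the key inequality by contradiction rather than directly.
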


\begin{proof}
To prove the extension property assume an edge $(v_i, v_j) \in E(G)$.
Let $V$ be the nodes in $G$.
Let $w \in f_E(G)$ be an instance of $G$ and let $f = \map{V, w}$ be the corresponding mapping.
Lemma~\ref{lem:mappings} implies that $\map{V, w} = \map{G, w}$. Hence $f(v_i) < f(v_j)$. 
Since this holds for every map, we have $(v_i, v_j) \in E(\iclE{G})$.

To prove the idempotency, let $H = \iclE{G}$. The extension property implies
that $G \preceq H$ so by definition $f_E(H) \subseteq f_E(G)$. But any instance
in $f_E(G)$ also covers $H$. Thus, $f_E(H) \subseteq f_E(G)$ and so
$f_E(H) = f_E(G)$. This implies the idempotency.

Assume now that $G \preceq H$.  Let $\alpha$ be the graph homomorphism such
that $G \preceq_\alpha H$. We will show that $\iclE{G} \preceq_\alpha
\iclE{H}$.  Let $(x, y) \in E(\iclE{G})$. Let $w$ be an instance of $H$ and let
$f = \map{H, s}$ the corresponding mapping to $w$.
Assume that $f(\alpha(x)) \geq f(\alpha(y))$. Let $v$ be the subsequence of $w$
containing only the indices in the range of $f \circ \alpha$. Note that $v$ is a valid   
instance of $G$ and $f \circ \alpha = \map{V(G), v}$.
This contradicts the fact that $(x, y) \in E(\iclE{G})$. Hence, $f(\alpha(x)) < f(\alpha(y))$.
This implies that $(\alpha(x), \alpha(y)) \in E(\iclE{H})$ which completes the proof.
\end{proof}

\begin{example}
\label{ex:eclosure}
Consider sequence $s$ given in Figure~\ref{fig:tc:e} and episode $G_4$
given in Figure~\ref{fig:tc:d} and assume that the window length is $5$.  There
are four instances of $G_4$ in $s$, namely $abcd$, $acdb$, $acbd$ and $abcd$. 
Therefore, $f_E(G_4) = \set{abcd,acbd}$. The serial episodes corresponding
to these subsequences are $G_2$ and $G_3$ given in Figure~\ref{fig:tc}.
By taking the intersection of these two episodes we obtain $G_1 = \iclE{G_4}$
given in Figure~\ref{fig:tc:a}.
\end{example}

\subsection{Combining Closures}

We can combine the node closure and the edge closure into one operator.

\begin{definition}
Given an episode $G \in \sspace$, we define $\iclEN{G} = \iclE{\iclN{G}}$. To simplify the
notation, we will refer to 
$\iclEN{G}$ as $\iclosure{G}$, the \emph{$i$-closure} of $G$.  We will say that
$G$ is \emph{$i$-closed} if $G = \iclosure{G}$.
\end{definition}

\begin{theorem}
$\iclosure{G}$ is an idempotentic and monotonic extension operator.
\end{theorem}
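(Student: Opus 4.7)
The plan is to reduce each of the three closure properties for $\iclosure{G} = \iclE{\iclN{G}}$ to the corresponding properties of $\iclN{}$ and $\iclE{}$ already established in the two preceding theorems. \emph{Extension} is immediate: chaining $G \preceq \iclN{G} \preceq \iclE{\iclN{G}} = \iclosure{G}$ gives the result by the extension properties of the two operators and transitivity of $\preceq$. \emph{Monotonicity} is equally direct: if $G_1 \preceq G_2$, then the monotonicity of $\iclN{}$ gives $\iclN{G_1} \preceq \iclN{G_2}$, and then the monotonicity of $\iclE{}$ gives $\iclosure{G_1} \preceq \iclosure{G_2}$.

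The hard part is \emph{idempotency}, because composing two idempotent maps does not in general yield an idempotent map. Writing $K = \iclosure{G}$, the idempotency of $\iclE{}$ applied to $\iclE{\iclN{G}}$ immediately gives $\iclE{K} = K$, so the identity $\iclosure{\iclosure{G}} = \iclE{\iclN{K}} = K$ would follow as soon as I establish the key lemma: \emph{if $H$ is node-closed, then $\iclE{H}$ is also node-closed}. Applied to $H = \iclN{G}$, which is node-closed by the idempotency of $\iclN{}$, this would give $\iclN{K} = K$ and hence $\iclosure{K} = \iclE{K} = K$.

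The main obstacle is therefore proving the key lemma. Since $\iclE{}$ leaves the node set fixed, $V(\iclE{H}) = V(H)$ and both episodes have the same label set, so showing that $\iclE{H}$ is node-closed reduces to proving the set containment $X_{\iclE{H}} \subseteq X_H$, for which it suffices to show $f_N(H) \subseteq f_N(\iclE{H})$. Given $[a,b] \in f_N(H)$, I would take a minimal cover $h$ of $H$ inside the window $s[a,b]$. Then $s_h$ is an instance of $H$ with $\max h - \min h < \rho$, so $s_h \in f_E(H)$. The defining property of the edge closure now forces $\map{V(H), s_h}(v) < \map{V(H), s_h}(w)$ for every edge $(v,w) \in E(\iclE{H})$, and since $h$ is strictly increasing this is equivalent to $h(v) < h(w)$. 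Hence $h$ is also a valid cover of $\iclE{H}$, so $[a,b] \in f_N(\iclE{H})$, which completes the lemma and the theorem.
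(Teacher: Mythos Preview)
Your proof is correct and follows essentially the same route as the paper. Extension and monotonicity are handled identically, and for idempotency both arguments reduce to showing $\iclN{K}=K$ for $K=\iclE{\iclN{G}}$ by establishing that $f_N(\iclN{G})$ and $f_N(K)$ coincide; the paper states this tersely (``any instance of $H=\iclE{H'}$ is also an instance of $H'$, and, per definition, vice versa''), while you unpack the ``vice versa'' direction explicitly via the defining property of the edge closure, which is exactly the missing detail.

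One small wrinkle worth tightening: your chain ends with ``hence $h$ is also a valid cover of $\iclE{H}$, so $[a,b]\in f_N(\iclE{H})$'', but the cover $h$ you extract inside $s[a,b]$ need only satisfy $[\min h,\max h]\subseteq[a,b]$, not equality. This is harmless --- either observe that the original witness $m$ for $[a,b]\in f_N(H)$ (which has $[\min m,\max m]=[a,b]$) now also covers $\iclE{H}$ since it contains $s_h$, or note that the weaker containment already suffices for $X_{\iclE{H}}\subseteq X_H$. Either fix is immediate.
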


\begin{proof}
The extension and monotonicity properties follow directly from the fact that both
$\iclN{G}$ and $\iclE{G}$ are monotonic extension operators.

To prove idempotency let $H = \iclosure{G}$ and $H' = \iclN{G}$. Since any
instance of $H = \iclE{H'}$ is also an instance of $H'$, and, per definition, vice versa, we see 
that $f_N(H) = f_N(H')$, and consequently $\iclN{H} = H$, and 
$\iclosure{H} = \iclE{\iclN{H}} = \iclE{H} = H$.
\end{proof}

The advantage of mining $i$-closed episodes instead of $e$-closed is prominent
if the sequence contains a long sequential pattern. More specifically, assume that
the input sequence contains a frequent subsequence of $N$ symbols $p_1, \ldots, p_N$, and no other
permutation of this pattern occurs in the sequence. The number of $e$-closed subpatterns of this subsequence 
is $2^N - 1$, namely, all non-empty serial subepisodes. However, the number of
its $i$-closed subpatterns is $N(N + 1)/2$, namely, serial episodes of form $p_i \to \cdots \to p_j$ for $1 \leq i \leq j \leq N$.

\subsection{Computing Closures}
During the mining process, we need to compute the closure of an episode. The
definition of closures use $f_N(G)$ and $f_E(G)$ which are based on instances
of $G$ in $s$.  However, there can be an exponential number of such instances in $s$.

In the following discussion we will often use the following notations.
Given an episode $G$, we write $G + v$ to mean the
episode $G$ augmented with an additional node $v$.  Similarly, we will use the
notations $G + e$ and $G + V$, where $e$ is an edge and $V$ is a set of nodes.
We also use $G - v$, $G - V$, and $G - e$ to mean episodes where either nodes
or edges are removed.

To avoid the problem of an exponential number of instances we make two
observations: Firstly, a node with a new label $l$ is added into the closure if
and only if it occurs in every \emph{minimal window} of $G$. Secondly, an edge
$(x, y)$ is added into the closure if and only if there is no minimal window
for $G + (y, x)$. Thus to compute the closure we need an algorithm that finds
all minimal windows of $G$ in $s$. Note that, unlike with instances of $G$,
there can be only $\abs{s}$ minimal windows.

Using strict episodes allows us to discover minimal windows in an efficient greedy
fashion. 

\begin{lemma}
\label{lem:greedy}
Given an episode $G \in \sspace$ and a sequence $s = s_1\cdots s_L$, let $k$ be the
smallest index such that $s_k = \lab{v}$, where $v$ is a source node in $G$.
Then $s$ covers $G$ if and only if $s[k + 1, L]$ covers $G - v$.
\end{lemma}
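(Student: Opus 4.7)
The plan is to prove the two directions of the biconditional separately, with the strictness of $G$ doing most of the work in the harder direction. Throughout, I will use the fact that $G\in\sspace$ is transitively closed, so ``ancestor'' and ``edge in $E(G)$'' are interchangeable.

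For the forward (``if'') direction, suppose $s[k+1,L]$ covers $G-v$ via an injective, edge-preserving map $f'$. I would define $f:V(G)\to[1,L]$ by $f(v)=k$ and $f(u)=f'(u)$ for $u\ne v$. Injectivity is immediate because $f'$ has image inside $[k+1,L]$, so the index $k$ is not reused. The label at $k$ equals $\lab{v}$ by definition of $k$. Since $v$ is a source there are no incoming edges at $v$ to verify; for each outgoing edge $(v,u)$ we get $f(v)=k<k+1\le f'(u)=f(u)$; and edges not touching $v$ are handled by the validity of $f'$. Hence $s$ covers $G$.

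The backward (``only if'') direction is the real work. Suppose $f$ witnesses that $s$ covers $G$. The structural observation I would lean on is that, because $G$ is strict and $v$ is a source, every node $u\ne v$ with $\lab{u}=\lab{v}$ is connected to $v$ by a path in $G$, and since $v$ has no incoming edges the path must go from $v$ to $u$; by transitive closure this is actually an edge $(v,u)\in E(G)$. I would first reduce to the case $f(v)=k$: if $f(v)=k'>k$, redefine $f(v):=k$; the only possible injectivity clash would be with a node $u$ satisfying $f(u)=k$, which by $s_k=\lab{v}$ shares the label of $v$ and hence, by the observation just made, is a descendant of $v$ with $f(u)>f(v)=k'>k$, a contradiction. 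For every child $u$ of $v$ we already had $f(u)>k'>k$, so the edge constraints at $v$ are preserved. The restriction $f\big|_{V(G)-v}$ is then the candidate map of $G-v$ into $s$, and I would verify its image lies in $[k+1,L]$ by combining (a) the same-label case handled above and (b) for differently labeled $u$, the fact that $s_k$ carries label $\lab{v}\ne\lab{u}$ forces $f(u)\ne k$, together with the propagated edge ordering from any ancestor edge $(v,\cdot)\preceq (\cdot,u)$.

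The main obstacle is the backward direction, specifically controlling nodes $u$ that neither share a label with $v$ nor are descendants of $v$; everything else is a fairly mechanical check. Strictness is exactly the hypothesis that lets us pin down the unique position of $v$ among same-label nodes (turning a path into a guaranteed edge via transitive closure), and this is the lever I expect to pull repeatedly to rule out $f(u)=k$ and to ensure the restricted map lands inside $[k+1,L]$.
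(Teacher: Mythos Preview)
Your forward direction is correct and essentially what the paper does. In the backward direction you have correctly located the only real obstacle --- nodes $u$ that neither share the label of $v$ nor are descendants of $v$ --- but your hope that strictness alone closes this gap is misplaced. Under the reading you appear to be using (fix an arbitrary source $v$, then let $k$ be the first index with $s_k=\lab{v}$), the statement is actually false: take $G$ to be the parallel episode on two nodes labelled $a$ and $b$ (this is strict and transitively closed), $s=ba$, and $v$ the node labelled $a$; then $k=2$, $s$ covers $G$, but $s[3,2]$ is empty and cannot cover $G-v$. No amount of leaning on strictness will repair this, because $b$ is neither same-labelled as $v$ nor a descendant of $v$.

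The paper's intended reading, visible in its one-line proof (``$k$ is the smallest entry used by $f$''), is that $k$ is the smallest index such that $s_k$ is the label of \emph{some} source of $G$, and $v$ is that source --- uniquely determined, since in a strict episode no two sources can share a label. With this reading your gap closes in one stroke: every node $u$ has a source ancestor $u_0$ (possibly $u_0=u$), and since $s_{f(u_0)}=\lab{u_0}$ is a source label, the minimality of $k$ gives $f(u_0)\ge k$; hence $f(u)\ge f(u_0)\ge k$. Combined with the injectivity argument you already gave (only $v$ can occupy index $k$), this forces $f(u)>k$ for every $u\ne v$, so the restriction of $f$ to $V(G-v)$ is a valid mapping into $s[k+1,L]$. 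Your remapping of $v$ to $k$ is correct and matches the paper.
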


\begin{proof}
Let $f$ be a mapping from $V(G)$ to $s$. If $k$ is not already used by $f$,
then we can remap a source node $v$ to $k$. As $k$ is the smallest entry used by $f$,
the remaining map is a valid mapping for $G - v$ in $s[k + 1, L]$.
\end{proof}

Lemma~\ref{lem:greedy} says that to test whether a sequence $s$ covers $G$, it
is sufficient to greedily find entries from $s$ corresponding to the sources of
$G$, removing those sources as we move along. Let us denote such a mapping by
$g(G; s)$. Note that if the episode is not strict, then we can have two source
nodes with the same label, in which case it might be that Lemma~\ref{lem:greedy}
holds for one of the sources but not for the other. Since we cannot know in advance
which node to choose, this ambiguity would make the greedy approach less efficient.

\begin{lemma}
\label{lem:greedyminimal}
Let $G \in \sspace$ be an episode and let $s$ be a sequence covering $G$. Let $m = g(G; s)$
and let $[a, b]$ be the first minimal window of $G$ in $s$. Then $\max m = b$.
\end{lemma}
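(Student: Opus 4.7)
The plan is to prove the two inequalities $\max m \leq b$ and $\max m \geq b$ separately, and in both cases to exploit the fact that minimal windows of an episode cannot be properly nested in one another.

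For the upper bound $\max m \leq b$, I will induct on $\abs{V(G)}$, using Lemma~\ref{lem:greedy} as the key tool. Since $s[a,b]$ covers $G$, a fortiori $s[1,b]$ covers $G$. Let $v$ be the source of $G$ that the greedy procedure chooses first, at index $k$. Since $k$ is the smallest index in $s$ matching the label of any source, and since a source occurrence must appear somewhere in $s[1,b]$, we have $k \leq b$. Applying Lemma~\ref{lem:greedy} to $s[1,b]$, the subsequence $s[k+1,b]$ covers $G - v$. By the inductive hypothesis, the greedy run for $G - v$ on $s[k+1,L]$ ends at an index at most $b$. Since the greedy for $G$ on $s$ simply consists of this step followed by the greedy for $G-v$ on $s[k+1,L]$, we conclude $\max m \leq b$.

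For the lower bound $\max m \geq b$, suppose for contradiction that $\max m < b$. By construction of $g$, the window $[\min m, \max m]$ covers $G$, so it contains a minimal window $[a', b']$ with $\min m \leq a'$ and $b' \leq \max m < b$. First I will observe that two distinct minimal windows cannot be nested: if one properly contains another, the outer one is not inclusion-minimal. Hence $[a',b'] \neq [a,b]$ implies neither contains the other. Since $b' < b$, in order to avoid $[a',b'] \subseteq [a,b]$ we must have $a' < a$. But then $[a',b']$ is a minimal window strictly preceding $[a,b]$, contradicting the assumption that $[a,b]$ is the \emph{first} minimal window.

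Combining both inequalities yields $\max m = b$. The mildly delicate step is the upper bound induction: one must be careful that running greedy on the full sequence $s$ does the same thing (for its initial step) as running greedy on the truncated prefix $s[1,b]$, so that Lemma~\ref{lem:greedy} can be applied within the truncated prefix; this relies on the fact that greedy chooses the globally smallest index matching a source label, which is automatically at most $b$. The lower bound is essentially a bookkeeping argument about the ordering of minimal windows, which I expect to be routine once it is phrased this way.
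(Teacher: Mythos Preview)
Your argument is correct and agrees in substance with the paper's proof, which is simply more compressed: instead of the explicit induction for $\max m \leq b$, the paper observes directly that $m' = g(G;\, s[1,b])$ coincides with $m$ (both are built greedily and $s[1,b]$ already covers $G$), giving $\max m = \max m' \leq b$ in one line. The remaining inequality $\max m' \geq b$ is exactly your minimal-window-nesting argument, which the paper leaves implicit in the phrase ``since $[a,b]$ is the first minimal window, we must have $\max m' = b$.''
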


\begin{proof}
Since $s[1, b]$ covers $G$, there is a mapping $m' = g(G; s[1, b])$. Since $[a, b]$
is the first minimal window, we must have $\max m' = b$. Since $m$ and $m'$ are constructed
in a greedy fashion, we must have $m = m'$.
\end{proof}

Lemma~\ref{lem:greedyminimal} states that if we evaluate $m = g(G; s[k, L])$
for each $k = 1, \ldots, K$, store the intervals $W = [\min m, \max m]$, and
for each pair of windows $[a_1, b], [a_2, b] \in W$ remove $[a_2, b]$, then $W$
will contain the minimal windows. Evaluating $g(G; s)$ can be done in
polynomial time, so this approach takes polynomial time.
The algorithm for discovering minimal windows is given in Algorithm~\ref{alg:findwindows}.
To fully describe the algorithm we need the following definition.

\begin{definition}
An edge $(v, w)$ in an episode $G \in \sspace$ is called a \emph{skeleton edge} if there is
no node $u$ such that $(v, u, w)$ is a path in $G$. If $v$ and $w$ have different
labels, we call the edge $(v, w)$ a \emph{proper skeleton edge}.
\end{definition}

\begin{algorithm}[htb!]
\capstart
\Input{an episode $G \in \sspace$} 
\Input{set of minimal windows $W$}

	$W \define \emptyset$\;
	$Q \define \emptyset$\;
	\ForEach{$v \in V(G)$} {
		$f(v) \define$ first $i$ such that $s_i = \lab{v}$\;
		$b(v) \define -\infty$\;
		add $v$ into $Q$\;
	}

	\While {\True} {
		\tcpas{Make sure that $f$ honors the edges in $G$}
		\While {$Q$ is not empty} {\nllabel{alg:fw:honor1}
			$v \define $ first element in $Q$\;
			remove $v$ from $Q$\;
			$f(v) \define$ the smallest $i > b(v)$ such that $s_i = \lab{v}$\;\nllabel{alg:fw:search}
			\lIf{$f(v) = \Null$} {\Return $W$}
			\ForEach {skeleton edge $(v, w) \in E(G)$} {
				$b(w) \define \max\fpr{b(w), f(v)}$\;
				\If {$b(w) \geq f(w)$ \AND $w \notin Q$} {
					add $w$ into $Q$\;\nllabel{alg:fw:honor2}
				}
			}
		}
		\If {$\max f - \min f < \rho$} {
			\If {$\max f = b$ such that $[a, b]$ is the last entry in $W$} {
				delete the last entry from $W$\;
			}
			add $[\min f, \max f]$ to $W$\;
		}
		$v \define $ node with the smallest $f(v)$\;\nllabel{alg:fw:minimal}
		$b(v) \define f(v)$\;\nllabel{alg:fw:next}
		add $v$ to $Q$\;
	}
\caption{\textsc{FindWindows}. An algorithm for finding minimal windows of $G$ from $s$. The parameter $\rho$ is the maximal size of the window.}
\label{alg:findwindows}
\end{algorithm}

\begin{theorem}
Algorithm \textsc{FindWindows} discovers all minimal windows.
\end{theorem}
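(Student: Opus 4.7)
The plan is to reduce the theorem to the two lemmas already proved: Lemma~\ref{lem:greedy} (greedy mapping characterizes covering) and Lemma~\ref{lem:greedyminimal} (the greedy mapping's maximum is the end of the first minimal window). Concretely, I will establish the following invariant for the outer loop: after the inner edge-honoring loop (lines~\ref{alg:fw:honor1}--\ref{alg:fw:honor2}) terminates without returning, the function $f$ is exactly the greedy mapping $g(G; s[a, L])$, where $a = \min f$. Granting this, Lemma~\ref{lem:greedyminimal} immediately implies that $[\min f, \max f]$ is the first minimal window of $G$ inside $s[a, L]$, so every interval eventually placed in $W$ is a genuine minimal window of length at most $\rho$.

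To prove the invariant, first I would note that, because $G$ is transitively closed, honoring every skeleton edge automatically honors every edge of $G$: any non-skeleton edge $(v, w)$ lies above a path of skeleton edges, and the inner loop ensures $f$ is strictly increasing along every such path. Termination of the inner loop follows because each re-enqueue of a node $w$ is triggered by a strict increase of $b(w)$, which forces the recomputed $f(w)$ at line~\ref{alg:fw:search} to strictly exceed its previous value, so each node can be processed only $O(L)$ times. An inductive argument on the order in which nodes receive their final $f$-value then matches the algorithm's assignment with the greedy mapping of Lemma~\ref{lem:greedy}: when the current queue minimum $v$ is a source of the residual graph over still-pending nodes, $f(v)$ is set to the smallest label-matching index strictly greater than $b(v)$, which is precisely the greedy choice. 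The returning branch $f(v) = \Null$ is correct because the greedy mapping has failed, meaning no minimal window can begin at or after the current $a$.

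Second, I would argue that the outer loop enumerates all minimal windows. At the end of each iteration, the node $v$ with smallest $f(v)$ is re-queued with $b(v) := f(v)$ (lines~\ref{alg:fw:minimal}--\ref{alg:fw:next}), which strictly advances $\min f$ on the subsequent pass. This sweep visits every candidate starting position in increasing order, and by the invariant each new $(\min f, \max f)$ yields the first minimal window in the corresponding suffix. The bookkeeping that deletes the last entry of $W$ when the current $\max f$ equals its right endpoint handles the case where two consecutive greedy mappings share the right endpoint $b$ but have left endpoints $a_1 < a_2$; only $[a_2, b]$ is minimal, since $[a_1, b]$ strictly contains it, so the overwrite preserves the correct set. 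Conversely, every minimal window $[a^*, b^*]$ is produced: once $\min f$ reaches $a^*$, the invariant together with Lemma~\ref{lem:greedyminimal} forces $\max f = b^*$.

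The main obstacle I expect is formalizing the equivalence between the algorithm's incremental update of baselines $b(v)$ and a fresh application of the greedy procedure on the advanced suffix. The algorithm never restarts from scratch, yet correctness requires that the resulting $f$ coincide with what Lemma~\ref{lem:greedy} would produce on $s[\min f, L]$. This will likely require a careful inductive invariant tracking, throughout the inner loop, which nodes' positions are already consistent with constraints inherited from previously processed ancestors, and verifying that the final $f$ is independent of the order in which the queue is drained — a point where the use of strict episodes is essential, since otherwise two same-labelled sources would introduce a genuine ambiguity.
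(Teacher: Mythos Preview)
Your proposal is correct and follows essentially the same approach as the paper: establish that the inner loop yields the greedy mapping $g(G;\cdot)$ on the current suffix, then invoke Lemma~\ref{lem:greedyminimal} to identify $[\min f,\max f]$ with the first minimal window, and observe that the outer loop advances the left endpoint monotonically. Your treatment is in fact more thorough than the paper's own proof, which states the inner-loop invariants tersely (for $v\notin Q$ one has $f(v)>b(v)$, and $b(w)=\max\set{f(v)\mid (v,w)\text{ a skeleton edge}}$), asserts $f=g(G;s[k+1,L])$ with $k=\min_{v\text{ source}}b(v)$, and leaves termination, the overwrite bookkeeping, and the incremental-update issue you flag implicit; your formulation of the invariant via $a=\min f$ is equivalent and arguably cleaner.
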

\begin{proof}
We will prove the correctness of the algorithm in several steps.  First, we
will show that the loop \ref{alg:fw:honor1}--\ref{alg:fw:honor2} guarantees
that $f$ is truly a valid mapping. To see this, note that the loop
upholds the following invariants: For every $v \notin Q$, we have $f(v) > b(v)$
and for each skeleton edge $(v, w)$, we have $b(w) \neq -\infty$. Also, for
every non-source node $w$, $b(w) = \max \set{f(v) \mid (v, w) \text { is a
skeleton edge}}$ or $b(w) = -\infty$.

Thus when we leave the loop, that is, $Q = \emptyset$, then $f$ is a valid
mapping for $G$. Moreover, since we select the smallest $f(v)$ during
Line~\ref{alg:fw:search}, we can show using recursion that $f = g(G; s[k + 1, L])$,
where $k = \min b(v)$, where the min ranges over all source nodes.
Once $f$ is discovered, the next new mapping should be contained in $s[\min f + 1, L]$.
This is exactly what is done on Line~\ref{alg:fw:next} by making $b(v)$ equal to $f(v)$.
\end{proof}

The advantage of this approach is that we can now optimize the search on
Line~\ref{alg:fw:search} by starting the search from the previous index $f(v)$
instead of starting from the start of the sequence. The following lemma, implied directly
by the greedy procedure, allows this.

\begin{lemma}
Let $G \in \sspace$ be an episode and let $s$ be a sequence covering $G$. Let $k$ be an index and
assume that $s[k, L]$ covers $G$.  Set $m_1 = g(G; s)$ and $m_2 = g(G; s[k + 1, L])$.
Then $m_2(v) \geq m_1(v)$ for every $v \in V(G)$.
\end{lemma}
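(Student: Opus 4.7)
The plan is to derive the lemma from a componentwise minimality property of the greedy procedure: for every $G \in \sspace$, every sequence $s$ covering $G$, and every valid mapping $f$ of $G$ in $s$, the inequality $g(G; s)(v) \leq f(v)$ holds for each $v \in V(G)$. Granted this, the lemma is immediate, since $m_2 = g(G; s[k+1, L])$ takes values in $[k+1, L] \subseteq [1, L]$ and is therefore itself a valid mapping of $G$ in $s$; applying componentwise minimality with $f = m_2$ gives $m_1(v) \leq m_2(v)$ for every $v$.

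I would prove componentwise minimality by induction on $|V(G)|$, the base case of a single node being obvious. For the inductive step, let $v_1$ be the first source chosen by $g(G; s)$ and write $p = g(G; s)(v_1)$. By definition $p$ is the smallest index in $s$ whose symbol is the label of some source of $G$, and $\lab{v_1} = s_p$, so trivially $f(v_1) \geq p$. The crucial step is to strengthen this to $f(v) > p$ for every $v \neq v_1$. For descendants of $v_1$ this is immediate from $f(v) > f(v_1) \geq p$; for $v$ incomparable to $v_1$ one walks up the DAG from $v$ to a source ancestor $q$, all such ancestors being themselves incomparable to $v_1$ (otherwise $v$ would be a descendant of $v_1$). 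Here strictness of $G$ is essential: it forces $\lab{q} \neq \lab{v_1}$, so $f(q)$ is an index carrying a source label different from $s_p$, hence $f(q) > p$ and $f(v) \geq f(q) > p$.

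Having established $f(v) > p$ for $v \neq v_1$, I would define $f'$ by setting $f'(v_1) = p$ and $f'(v) = f(v)$ otherwise. Validity of $f'$ is then a bookkeeping check: injectivity follows from the strict inequalities just obtained, label compatibility from $s_p = \lab{v_1}$, and the only new edge constraints to verify are of the form $(v_1, v)$, which hold because $p < f(v) = f'(v)$. Restricting $f'$ to $V(G) \setminus \{v_1\}$ then produces a valid mapping of $G - v_1$ in the suffix $s[p+1, L]$, while Lemma~\ref{lem:greedy} together with the recursive description of the greedy procedure identifies the corresponding restriction of $g(G; s)$ with $g(G - v_1; s[p+1, L])$. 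The inductive hypothesis then gives $g(G; s)(v) \leq f(v)$ for all $v \neq v_1$, which combined with $g(G; s)(v_1) = p \leq f(v_1)$ closes the induction.

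The only subtle step, and the place where strictness is indispensable, is the bound $f(v) > p$ when $v$ is incomparable to $v_1$. Without strictness, $G$ could contain two incomparable sources sharing the label $\lab{v_1}$, and $f$ could map such a source to position $p$, simultaneously invalidating the label argument above and the construction $f \to f'$ that feeds the induction. The rest of the argument is essentially routine.
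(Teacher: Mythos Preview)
Your proof is correct. The paper offers no detailed argument for this lemma, stating only that it is ``implied directly by the greedy procedure''; your componentwise minimality property $g(G;s)(v) \le f(v)$ is precisely the natural formalization of that intuition, and your inductive proof of it is sound, including the essential use of strictness to guarantee that a source $q$ incomparable to $v_1$ carries a different label and hence satisfies $f(q) > p$.
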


Let us now analyze the complexity of \textsc{FindWindows}. Assume that the
input episode $G$ has $N$ nodes and the input sequence $s$ has $L$ symbols. Let
$K$ be the maximal number of nodes in $G$ sharing the same label. Let $D$
be the maximal number of outgoing skeleton edges in a single node.  Note that
each event in $s$ is visited during Line~\ref{alg:fw:search} $K$ times, at
maximum. Each visit may require $D$ updates of $b(w)$, at maximum. The only
remaining non-trivial step is finding the minimal source
(Line~\ref{alg:fw:minimal}), which can be done with a heap in $O(\log N)$
time. This brings the total evaluation time to $O((D + 1)KL + L\log N)$.
Note that for parallel episodes $D = 0$ and for serial episodes $D = 1$, and
for such episodes we can further optimize the algorithm such that each event is
visited only twice, thus making the running time to be in $O(L\log N)$ for parallel episodes and in $O(L)$ for serial episodes.
Since there can be only $L$ minimal windows, the additional space complexity
for the algorithm is in $O(L)$.

Given the set of minimal windows, we can now compute the node closure by
testing which nodes occur in all minimal windows. This can be done in $O(L)$
time. To compute the edge closure we need to perform $N^2$ calls of
\textsc{FindWindows}. This can be greatly optimized by sieving multiple edges
simultaneously based on valid mappings $f$ discovered during 
\textsc{FindWindows}. In addition, we can compute both frequency measures
from minimal windows in $O(L)$ time.

We will now turn to discovering $f$-closed episodes.
Note that, unlike the $i$-closure, we do not define an $f$-closure of an
episode at all. As shown in Section~\ref{sec:introduction}, such an $f$-closure
would not necessarily be unique. However, the set of $i$-closed episodes
will contain all $f$-closed episodes.

\begin{theorem}
\label{prop:frin}
An $f$-closed episode is always $i$-closed.
\end{theorem}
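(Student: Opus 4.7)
The plan is to prove the contrapositive: if $G \in \sspace$ is not $i$-closed, then $G$ is not $f$-closed. Suppose $G \neq \iclosure{G}$. By the extension property of $\iclosure{}$, $G \preceq \iclosure{G}$, and since $\iclosure{G}$ has strictly more nodes or edges than $G$, this subepisode relation is proper: $G \prec \iclosure{G}$ (a sequence consisting solely of an instance of $G$ covers $G$ but not $\iclosure{G}$, so the cover-sets differ). It therefore suffices to establish $\freq{G} = \freq{\iclosure{G}}$, since this exhibits a proper superepisode of $G$ with the same frequency. The inequality $\freq{G} \geq \freq{\iclosure{G}}$ follows at once from $G \preceq \iclosure{G}$ for both the fixed-window and the disjoint-window measures, because any window that covers $\iclosure{G}$ automatically covers $G$.

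For the reverse inequality I would prove the stronger statement that every window $[a,b]$ with $b - a < \rho$ for which $s[a,b]$ covers $G$ already covers $\iclosure{G}$; this makes the admissible families of windows coincide under either frequency measure. Fix an injective mapping $f : V(G) \to [a,b]$ witnessing the cover, and let $m$ be the sorted range of $f$, so that $[\min m, \max m] \subseteq [a,b]$ belongs to $f_N(G)$. Each new node $u$ introduced by $\iclN{}$ carries a label $\lab{u} \in X_G \setminus W$, which by definition of $X_G$ occurs at some index inside $[\min m, \max m]$; this index is unused by $f$ since $\lab{u}$ is absent from the labels of $V(G)$. Using pairwise distinct such indices for the pairwise distinctly labelled new nodes, extend $f$ to an injection $f^+ : V(\iclN{G}) \to [a,b]$. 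Because $\iclN{G}$ and $G$ share their edge sets, $f^+$ is a valid cover mapping for $\iclN{G}$.

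It remains to check that $f^+$ also respects the edges added by $\iclE{}$. Let $m^+$ be the sorted range of $f^+$; then $s_{m^+}$ is an instance of $\iclN{G}$ with $\max m^+ - \min m^+ \leq b - a < \rho$, so $s_{m^+} \in f_E(\iclN{G})$. By Lemma~\ref{lem:mappings}, the canonical parallel mapping $\map{V(\iclN{G}), s_{m^+}}$ coincides with $f^+$ re-indexed to the coordinates of $s_{m^+}$, sending each $v$ to the unique $i$ with $m^+(i) = f^+(v)$. By the definition of $\iclE{}$, any edge $(x,y) \in E(\iclosure{G})$ is respected by this canonical mapping, and the strict monotonicity of $m^+$ transfers that inequality back to $f^+(x) < f^+(y)$ in $s$. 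Hence $f^+$ witnesses that $s[a,b]$ covers $\iclosure{G}$, completing the proof of the claim and thus of the theorem.

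The main obstacle is that last step: one must identify the mapping $f^+$ into the ambient sequence $s$ with the unique canonical parallel mapping into the restricted subsequence $s_{m^+}$ supplied by Lemma~\ref{lem:mappings}. Once that identification is in place, the definitions of $\iclN{}$ and $\iclE{}$ do the rest, because they are framed precisely in terms of instances whose span is less than $\rho$, which is exactly the regime to which both $\freqf{}$ and $\freqm{}$ are sensitive.
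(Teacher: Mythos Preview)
Your proof is correct and follows essentially the same contrapositive strategy as the paper: show that any valid cover mapping for $G$ in a window of width $<\rho$ extends to a cover mapping for $\iclosure{G}$, whence the frequencies agree and $G$ cannot be $f$-closed unless it is $i$-closed.

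The only notable difference is packaging. The paper phrases the argument via \emph{minimal} windows, proving that the sets of minimal windows of $G$ and of $\iclosure{G}$ coincide and then invoking the fact that both frequency measures are determined by minimal windows. You instead show directly that \emph{every} admissible window covering $G$ covers $\iclosure{G}$, which is marginally stronger and lets you conclude equality of both frequency measures without that extra step. You are also more explicit than the paper about why the extended map $f^+$ honours the edges added by $\iclE{}$: the paper simply asserts that ``$f$ can be extended to $H$ such that it honours all edges in $H$'', whereas you unpack this by exhibiting $s_{m^+}\in f_E(\iclN{G})$ and appealing to the uniqueness in Lemma~\ref{lem:mappings}. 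Both routes are short; yours is slightly more self-contained.
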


\begin{proof}
Let $G \in \sspace$ be an episode and define $H = \iclosure{G}$. Let $W$ be a set of all
minimal windows of $s$ that cover $G$ and let $V$ be the set of all minimal
windows that cover $H$. Let $w = [a, b] \in W$ be a minimal window of $G$ and
let $f$ be a valid mapping from $G$ to $s[a, b]$.  Any new node added in $H$
must occur in $s[a, b]$ and $f$ can be extended to $H$ such that it honours all
edges in $H$. Hence $s[a, b]$ covers $H$. It is also a minimal window for $H$
because otherwise it would violate the minimality for $G$. Hence $w \in V$.
Now assume that $v = [a', b'] \in V$. Obviously, $s[a', b']$ covers $G$, so
there must be a minimal window $w = [a, b] \in W$. Using the first argument we
see that $w \in V$. Since $V$ contains only minimal windows we conclude that $v = w$.
Hence $W = V$. 
A sequence covers an episode if and only if the sequence contains a minimal
window of the episode.  Thus, by definition, $V = W$ implies that $\freqf{G} =
\freqf{H}$  and $\freqm{G} = \freqm{H}$.

Assume now that $G$ is not $i$-closed, then $G \neq H$, and since $\freq{G} =
\freq{H}$, $G$ is not $f$-closed either. This completes the proof.
\end{proof}

A na\"{i}ve approach to extract $f$-closed episodes would be to compare each pair
of instance-closed episodes $G$ and $H$ and if $\freq{G}= \freq{H}$  and $G
\prec H$, remove $G$ from the output. This approach can be considerably sped up
by realising that we need only to test episodes with identical nodes and
episodes of form $G - V$, where $V$ is a subset of $V(G)$. The pseudo-code is given in 
Algorithm~\ref{alg:fclosure}. The algorithm can be further sped up by exploiting
the subset relationship between the episodes.
Our experiments demonstrate that this
comparison is feasible in practice.

\begin{algorithm}[htb!]
\capstart
	\Input{all $i$-closed episodes $\efam{G}$}
	\Out{all $f$-closed episodes}
	\ForEach{$G \in \efam{G}$} {
		\ForEach{$H \in \efam{G}$ with $V(G) = V(H)$, $H \neq G$} {
			\If{$G \prec H$ \AND $\freq{G} = \freq{H}$} {
				Mark $G$\;
			}
			\If{$H \prec G$ \AND $\freq{G} = \freq{H}$} {
				Mark $H$\;
			}
		}
		\ForEach{$V \subset V(G)$} {
			$F \define G - V$\;
			\ForEach{$H \in \efam{G}$, with $V(F) = V(H)$} {
				\If{$H \preceq F$ \AND $\freq{G} = \freq{H}$} {
					Mark $H$\;
				}
			}
		}
	}
	\Return all unmarked episodes from $\efam{G}$\; 
	
\caption{\textsc{F-Closure}. Postprocessing for computing $f$-closed episodes from $i$-closures.}
\label{alg:fclosure}
\end{algorithm}

\section{Generating Transitively Closed Candidate Episodes}
\label{sec:candidate}

In this section we define an algorithm, \textsc{GenerateCandidate}, which
generates the candidate episodes from the episodes discovered previously.
The difficulty of the algorithm is that we need to make sure 
that the candidates are transitively
closed.

Let $G \in \sspace$ be an episode. It is easy to see that if we remove a
proper skeleton edge $e$ from $G$, then the resulting episode $G - e$ will be
in $\sspace$, that is, transitively closed and strict.  We can reverse this property in order to generate
candidates: Let $G \in \sspace$ be a previously discovered episode, add
an edge $e$ and verify that the new episode is transitively closed.  However,
we can improve on this na\"{i}ve approach with the following theorem describing
the sufficient and necessary condition for an episode to be transitively
closed.

\begin{theorem}
\label{prop:closedadd}
Let $G \in \sspace$ be an episode and let $e = (x, y)$ be an edge not in
$E(G)$.  Let $H = G + e$. Assume that $H$ is a DAG.  Then $H \in \sspace$
if and only if there is an edge in $G$ from $x$ to every child
of $y$ and from every parent of $x$ to $y$.
\end{theorem}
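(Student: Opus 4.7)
The plan is to reduce the statement to a clean characterization of what ``new'' paths the edge $e=(x,y)$ creates, and then check when the transitive closure property is preserved. Strictness is essentially free: since $H$ has the same node set as $G$ and a superset of the edges, any path witnessing strictness in $G$ remains a path in $H$, so $H$ is strict whenever $G$ is. Thus the content of the theorem is the equivalence ``$H$ is transitively closed $\Leftrightarrow$ conditions (a) every parent of $x$ in $G$ has an edge to $y$ in $G$, and (b) $x$ has an edge in $G$ to every child of $y$.''

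The key observation to carry out is to describe every directed path in $H$ that does not already lie in $G$. Since $H$ is a DAG, such a path uses the new edge $e=(x,y)$ exactly once, so it has the form
\[
u \;\to\; \cdots \;\to\; x \;\to\; y \;\to\; \cdots \;\to\; w,
\]
where the prefix $u \to \cdots \to x$ and suffix $y \to \cdots \to w$ are paths in $G$. Because $G$ is transitively closed, the prefix forces $u=x$ or $(u,x)\in E(G)$, and the suffix forces $w=y$ or $(y,w)\in E(G)$. So for $H$ to remain transitively closed it suffices to verify, for every choice of $u\in\{x\}\cup\mathrm{parents}_G(x)$ and $w\in\{y\}\cup\mathrm{children}_G(y)$, that $(u,w)\in E(H)$.

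I would then handle the four cases: $(x,y)$ is the added edge; $(x,w)$ for $w$ a child of $y$ must lie in $G$, which is exactly (b); $(u,y)$ for $u$ a parent of $x$ must lie in $G$, which is exactly (a); and for $u$ a parent of $x$ and $w$ a child of $y$, conditions (a) and (b) give either a path $u\to x\to w$ or $u\to y\to w$ in $G$, so the transitive closure of $G$ already contains $(u,w)$. This proves sufficiency. For necessity, if (a) fails then some parent $u$ of $x$ does not have $(u,y)\in E(G)$, so the path $u\to x\to y$ in $H$ has no corresponding direct edge (the only edge added to $G$ is $(x,y)$), contradicting transitive closure; (b) is symmetric.

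The main thing to be careful about is the bookkeeping in the ``only new edge'' argument, i.e.\ pinning down that a path in $H\setminus G$ really does factor uniquely through $e$, and treating the degenerate cases $u=x$ and $w=y$ separately from the generic case. Once that factorization is in place, the rest is a short case analysis that uses the transitive closure of $G$ twice.
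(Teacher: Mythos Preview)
Your argument is correct and follows essentially the same route as the paper: both proofs observe that any path in $H$ not already in $G$ must use $e=(x,y)$ exactly once and then exploit the transitive closure of $G$ on the prefix and suffix. The paper phrases the ``if'' direction as an induction along the prefix $u\to\cdots\to x$, whereas you collapse both prefix and suffix symmetrically via the transitive closure of $G$ and then do an explicit four-way case split; the underlying idea is identical, and your presentation is arguably cleaner. Your remark that strictness is automatic (same nodes, superset of edges) is also correct and is left implicit in the paper.
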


\begin{proof}
The 'only if' part follows directly from the definition of transitive
closure.  To prove the 'if' part, we will use induction.  Let $u$ be an ancestor
node of $v$ in $H$.  Then there is a path from $u$ to $v$ in $H$.  If the path does
not use edge $e$, then, since $G$ is transitively closed, $(u, v) \in E(G)$ and
hence $(u, v) \in E(H)$. Assume now that the path uses $e$. If $v = y$, then $u$
must be a parent of $y$ in $G$, since $G$ is transitively closed, so the condition implies that $(u, v) \in E(G) \subset
E(H)$. Assume that $v$ is a proper descendant of $y$ in $H$. To prove the first step in the
induction, assume that $u = x$, then again $(u, v) \in E(G)$. To prove the induction step, let $w$ be the
next node along the path from $u$ to $v$ in $H$.  Assume inductively that $(w, v) \in E(G)$.
Then the path $(u, w, v)$ occurs in $G$, so $(u, v) \in E(G)$, which completes
the proof.
\end{proof}

We now show when we can join two episodes to obtain a candidate episode.  Since
our nodes are ordered, we can also order the edges using a lexicographical
order.  Given an episode $G$ we define $\last{G}$ to be the last proper
skeleton edge in $G$. The next theorem shows the necessary conditions for the
existence of the candidate episode.

\begin{theorem}
\label{thr:generate}
Let $H \in \sspace$ be an episode with $N + 1$ edges. Then either there
are two episodes, $G_1, G_2 \in \sspace$, with identical nodes to $H$, such that
\begin{itemize}
\item $G_1$ and $G_2$ share $N - 1$ edges, 
\item $e_1 = \last{G_1} \notin E(G_2)$,
\item $e_2 > e_1$ and $H = G_1 + e_2$, where $e_2$ is the unique edge in $E(G_2)$ and not in $E(G_1)$\footnote{In~\cite{tatti:10:mining} we incorrectly stated that $e_2 = \last{G_2}$. Generally, this is not the case.} 
\end{itemize}
or $\last{G}$ is no longer a skeleton edge in $H$, where $G \in \sspace$, $G = H - \last{H}$.
\end{theorem}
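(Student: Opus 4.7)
The plan is to take $e_2 = \last{H}$ and try $G_1 = H - e_2$ as the primary parent. Since $e_2$ is a proper skeleton edge of $H$, the observation preceding Theorem~\ref{prop:closedadd} (removing a proper skeleton edge keeps the graph transitively closed and strict) gives $G_1 \in \sspace$ with $N$ edges and the same labelled nodes as $H$.

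I would then set $e_1 = \last{G_1}$ and split on whether $e_1$ is still a skeleton edge of $H$. If it is not, that is exactly the second alternative of the theorem with $G = G_1$, so we would be done.

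If $e_1$ is a skeleton edge of $H$, I would first argue that it is in fact a \emph{proper} skeleton edge of $H$: its endpoints carry distinct labels because $e_1$ is a proper skeleton edge of $G_1$. Using the definition of $\last{H}$ as the largest proper skeleton edge of $H$, this forces $e_1 \le e_2$, and since $e_1 \in E(G_1)$ while $e_2 \notin E(G_1)$, we actually obtain $e_1 < e_2$. The natural second parent is then $G_2 = H - e_1$; because $e_1$ is a proper skeleton edge of $H$, $G_2$ again lies in $\sspace$. A short verification finishes: $E(G_1) \cap E(G_2) = E(H) \setminus \{e_1, e_2\}$ has cardinality $N-1$, the edge $e_1 = \last{G_1}$ is missing from $E(G_2)$, the unique edge in $E(G_2) \setminus E(G_1)$ is $e_2$, and $H = G_1 + e_2$ by construction.

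The only delicate step, and where I expect most of the care to go, is the implication that if $e_1 = \last{G_1}$ is still a skeleton edge of $H$ then $e_1 < \last{H}$. The subtlety is that adding $e_2$ back to $G_1$ to recover $H$ may fill in two-step paths and thereby destroy some skeleton edges of $G_1$; this is precisely what forces us into the alternative case of the theorem. Degenerate situations, for instance $G_1$ having no proper skeleton edges at all (so $\last{G_1}$ is undefined), can be absorbed into the alternative case by convention.
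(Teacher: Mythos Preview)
Your proposal is correct and follows exactly the same route as the paper: set $e_2 = \last{H}$, $G_1 = H - e_2$, $e_1 = \last{G_1}$, and split on whether $e_1$ remains a skeleton edge in $H$, taking $G_2 = H - e_1$ in the affirmative case. The step you flag as delicate is in fact immediate---once $e_1$ is a proper skeleton edge of $H$, maximality of $e_2 = \last{H}$ gives $e_1 \le e_2$, and $e_1 \neq e_2$ since $e_2 \notin E(G_1)$---so no extra care is needed there.
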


We will refer to the two cases as Case A and Case B, respectively.

\begin{proof}
Let $e_2 = \last{H}$ and define $G_1 = H - e_2$. If
$e_1$ is not a skeleton edge in $H$, then by setting $G = G_1$, the theorem
holds.  Assume that $e_1$ is a skeleton edge in $H$. Define $G_2 = H - e_1$.
Note that $G_1$ and $G_2$ share $N - 1$ edges. Also note that $\last{G_1} = e_1
\notin E(G_2)$, $\set{e_2} = E(G_2) - E(G_1)$, and $H = G_1 + e_2$. Since $e_2
= \last{H}$ it must be that $e_2 > e_1$.
\end{proof}

Theorem~\ref{thr:generate} gives us means to generate episodes. Let us first
consider Case A in Theorem~\ref{thr:generate}. To generate $H$ we
simply find all pairs of episodes $G_1$ and $G_2$ such that the conditions of Case A in
Theorem~\ref{thr:generate} hold. When combining $G_1$ and $G_2$ we need to test
whether the resulting episode is transitively closed.

\begin{theorem}
\label{thr:joinsafe}
Let $G_1, G_2 \in \sspace$ be two episodes with identical nodes
and $N$ edges. Assume that $G_1$ and $G_2$ share $N - 1$ mutual edges. Let $e_1 =
(x_1, y_1) \in E(G_1) - E(G_2)$ be the unique edge of $G_1$ and let $e_2 =
(x_2, y_2) \in E(G_2) - E(G_1)$ be the unique edge of $G_2$. Let $H = G_1 +
e_2$.  Assume that $H$ has no cycles. Then $H \in \sspace$ if
and only if one of the following conditions is true:
\begin{enumerate}
\item $x_1 \neq y_2$ and $x_2 \neq y_1$.
\item $x_1 \neq y_2$, $x_2 = y_1$,  and $(x_1, y_2)$ is an edge in $G_1$.
\item $x_1 = y_2$, $x_2 \neq y_1$,  and $(x_2, y_1)$ is an edge in $G_1$.
\end{enumerate}
Moreover, if $H \in \sspace$, then $e_1$ and $e_2$ are both skeleton edges in $H$.
\end{theorem}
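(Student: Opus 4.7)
The plan is to reduce the characterization directly to Theorem~\ref{prop:closedadd}, which says that if we add an edge $e = (x, y)$ to a transitively closed strict episode, the result is transitively closed iff every child of $y$ and every parent of $x$ already has the appropriate edge to or from the opposite endpoint. Applying this to $H = G_1 + e_2$, I must check that for every child $c$ of $y_2$ in $G_1$, the edge $(x_2, c)$ lies in $E(G_1)$, and symmetrically that $(p, y_2) \in E(G_1)$ for every parent $p$ of $x_2$ in $G_1$.

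Let $G_0 = G_1 - e_1 = G_2 - e_2$ denote the $N-1$ shared edges. The key observation is that the children of $y_2$ in $G_1$ coincide with the children of $y_2$ in $G_0$, with exactly one possible addition: $y_1$ is an extra child iff $x_1 = y_2$. Symmetrically, $x_1$ is an extra parent of $x_2$ iff $y_1 = x_2$. For every child $c$ of $y_2$ already present in $G_0$, the transitive closure of $G_2$ (which contains both $e_2$ and $(y_2,c)$) forces $(x_2,c) \in E(G_2)$; since $c \neq y_2$, this edge lies in $E(G_0) \subseteq E(G_1)$, so the condition of Theorem~\ref{prop:closedadd} is automatically satisfied for the shared children. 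A symmetric argument handles shared parents. Hence the only remaining requirements concern the two potential ``extras'', yielding exactly the three cases in the statement. The fourth combination $x_1 = y_2$ and $y_1 = x_2$ is excluded by the DAG assumption, since then $e_1 = (y_2, x_2)$ and $e_2 = (x_2, y_2)$ would form a 2-cycle.

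For the ``moreover'' claim, the plan is a short contradiction argument leveraging transitive closure. First, $e_2$ must already be a skeleton edge in $G_2$: otherwise some $u$ would give $(x_2, u), (u, y_2) \in E(G_2) \setminus \{e_2\} \subseteq E(G_0) \subseteq E(G_1)$, forcing $(x_2, y_2) = e_2 \in E(G_1)$ by the transitive closure of $G_1$, contradicting $e_2 \notin E(G_1)$. So if $e_2$ fails to be a skeleton edge in $H$, the offending path of length two must use the newly added edge $e_1$. A case split on whether $e_1$ plays the role of $(x_2, u)$ or $(u, y_2)$ leads, in both cases, to a two-edge path in $G_2$ whose endpoints force $e_1 \in E(G_2)$ by transitive closure of $G_2$, again a contradiction. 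The argument for $e_1$ is entirely symmetric, swapping the roles of $G_1$ and $G_2$ (and using that $e_1 = \last{G_1}$ is a skeleton edge of $G_1$).

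The main obstacle is keeping the subcase bookkeeping straight, in particular tracking which of $G_0$, $G_1$, $G_2$ contains each edge in the various ``offending path'' scenarios. Once one writes the decomposition $E(G_1) = E(G_0) \cup \{e_1\}$ and $E(G_2) = E(G_0) \cup \{e_2\}$, each subcase closes by a single application of transitive closure; there is no delicate combinatorial argument beyond the enumeration of endpoint coincidences already dictated by the statement.
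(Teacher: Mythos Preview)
Your reduction of the ``if and only if'' part to Theorem~\ref{prop:closedadd} is correct and is exactly what the paper does. Your introduction of $G_0 = G_1 - e_1 = G_2 - e_2$ makes the bookkeeping cleaner than the paper's version, and you get both directions at once, whereas the paper dismisses ``only if'' as trivial and argues only ``if''. The observation that the fourth endpoint combination is ruled out by the acyclicity of $H$ is also fine.

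Your ``moreover'' argument, however, contains a genuine bookkeeping slip, and is in any case more convoluted than necessary. You write that the case split ``leads, in both cases, to a two-edge path in $G_2$ whose endpoints force $e_1 \in E(G_2)$.'' This is false as stated. Take the case $e_1 = (x_2,u)$, so $x_1 = x_2$, $y_1 = u$; the other edge $(u,y_2)$ lies in $E(G_2)$. The only edges you have in $G_2$ here are $(u,y_2)$ and $e_2 = (x_2,y_2)$, which do \emph{not} form a path from $x_1 = x_2$ to $y_1 = u$, so you cannot conclude $e_1 \in E(G_2)$. What actually works is the opposite: since $(u,y_2) \in E(G_0) \subseteq E(G_1)$ and $e_1 = (x_2,u) \in E(G_1)$, the path $x_2 \to u \to y_2$ lies entirely in $G_1$, forcing $e_2 \in E(G_1)$ by transitive closure of $G_1$. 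So the contradiction you reach is $e_2 \in E(G_1)$, not $e_1 \in E(G_2)$; you have swapped the roles of $G_1,G_2$ and $e_1,e_2$ in your conclusion.

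The paper avoids this detour entirely. It simply observes that $E(H) \setminus \{e_1\} = E(G_2)$ and $E(H) \setminus \{e_2\} = E(G_1)$. Hence any length-two path witnessing that $e_1$ is not skeleton in $H$ (which by definition avoids $e_1$) lies wholly in $G_2$, so transitive closure of $G_2$ gives $e_1 \in E(G_2)$, a contradiction; and symmetrically for $e_2$. No preliminary step about $e_2$ being skeleton in $G_2$, and no case split on which edge of the offending path equals $e_1$, is needed.
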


\begin{proof}
We will first show that $e_1$ is a skeleton edge in $H$ if $H \in \sspace$. If it is not, then
there is a path from $x_1$ to $y_1$ in $H$ not using $e_1$. The edges along
this path also occur in $G_2$, thus forcing $e_1$ to be an edge in $G_2$, which
is a contradiction. A similar argument holds for $e_2$.

The ''only if'' part is trivial so we only prove the ''if'' part using Theorem~\ref{prop:closedadd}.

Let $v$ be a child of $y_2$ in $G_1$ and $f = (y_2, v)$ an edge in $G_1$.

If the first or second condition holds, then $x_1 \neq y_2$, and consequently $f
\neq e_1$, so $f \in G_2$. The path $(x_2, y_2, v)$ connects $x_2$ and $v$ in $G_2$ so
there must be an edge $h = (x_2, v)$ in $G_2$. Since $h \neq e_2$, $h$ must also
occur in $G_1$. 
If the third condition holds, it may be the case that $f = e_1$ (if not, then
we can use the previous argument). But in such a case $v = y_1$ and edge $h =
(x_2, y_1)$ occurs in $G_1$. 

If now $u$ is a parent of $x_2$ in $G_1$, we can make a similar argument that
$u$ and $y_2$ are connected, so Theorem~\ref{prop:closedadd} now implies
that $H$ is transitively closed.
\end{proof}

Theorem~\ref{thr:joinsafe} allows us to handle Case A of
Theorem~\ref{thr:generate}.  To handle Case B, we simply take an
episode $G$ and try to find all edges $e_2$ such that $\last{G + e_2} = e_2$
and $\last{G}$ is no longer a skeleton edge in $G + e_2$. The conditions for
this are given in the next theorem.

\begin{theorem}
\label{thr:extendhide}
Let $G \in \sspace$ be an episode, let $e_1 = (x_1, y_1)$ be a skeleton
edge of $G$, and let $e_2 = (x_2, y_2)$ be an edge not occurring in $G$ and
define $H = G + e_2$. Assume $H \in \sspace$ and that $e_1$
is not a skeleton edge in $H$. Then either $y_2 = y_1$ and $(x_1, x_2)$ is a
skeleton edge in $G$ or $x_1 = x_2$ and $(y_2, y_1)$ is a skeleton edge in $G$.
\end{theorem}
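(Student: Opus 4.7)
The plan is to unwind the definition of skeleton edge and use the assumption that $e_1$ is a skeleton edge in $G$ but not in $H$. Since $e_1 = (x_1, y_1)$ fails to be a skeleton edge in $H$, by definition there must exist a node $u$ such that $(x_1, u, y_1)$ is a path in $H$, i.e., both $(x_1, u)$ and $(u, y_1)$ lie in $E(H) = E(G) \cup \{e_2\}$. If both of these edges lay in $E(G)$, then $e_1$ would already fail to be a skeleton edge of $G$, a contradiction; so at least one of them must equal $e_2 = (x_2, y_2)$.

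I would then split into the two natural cases. If $(x_1, u) = e_2$, then $x_1 = x_2$ and $u = y_2$, so the remaining edge $(y_2, y_1) \in E(H)$; it cannot equal $e_2$ (otherwise $e_2$ would be the self-loop $(y_2, y_2)$, impossible since $H$ is a DAG), so $(y_2, y_1) \in E(G)$. The symmetric case $(u, y_1) = e_2$ gives $y_1 = y_2$, $u = x_2$, and $(x_1, x_2) \in E(G)$ by the same self-loop argument. These two cases yield exactly the two alternatives asserted in the theorem.

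It remains to show the newly exhibited edge is a \emph{skeleton} edge in $G$, and this is where the argument is least mechanical. Consider the first case ($x_1 = x_2$, $(y_2, y_1) \in E(G)$) and suppose for contradiction that $(y_2, y_1)$ is not a skeleton edge in $G$: there exists $w$ with $(y_2, w)$ and $(w, y_1)$ both in $E(G)$. Since $H \in \sspace$ is transitively closed and $(x_1, y_2) = e_2 \in E(H)$, transitivity along $(x_1, y_2, w)$ yields $(x_1, w) \in E(H)$. Now $(x_1, w) \neq e_2$, because $w \neq y_2$ (else $(y_2, w)$ would be a self-loop), so $(x_1, w) \in E(G)$. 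But then $(x_1, w, y_1)$ is a path in $G$, contradicting that $e_1 = (x_1, y_1)$ is a skeleton edge of $G$. The symmetric argument handles the other case.

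The hard part, such as it is, is the final skeleton-edge verification, which relies crucially on transitive closure of $H$ being used to pull an auxiliary edge back into $G$; the case split and the ruling out of degenerate self-loops are straightforward once one has committed to chasing the 2-path through $u$.
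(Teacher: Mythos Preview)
Your proof is correct. The paper takes a slightly different route: it uses the fact that in a transitively closed DAG a non-skeleton edge can always be bypassed by a \emph{path of skeleton edges}, obtains such a path from $x_1$ to $y_1$ in $H$ that must use $e_2$, and then argues by a shortcutting trick (replacing $(z,x_2),e_2$ by $(z,y_2)$) that this path can have only two edges, which immediately forces one of the two configurations. The skeleton property of the remaining edge then comes for free, since it was already a skeleton edge in $H \supseteq G$.

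Your argument instead works directly from the bare definition (a single intermediate node $u$), identifies which of the two edges in the $2$-path is $e_2$, and then proves the \emph{skeleton} status of the other edge separately by a small transitive-closure argument in $H$. This is arguably more elementary: you never invoke the decomposition of edges into skeleton paths, at the price of having to do the skeleton verification by hand. Both approaches ultimately hinge on pulling an edge from $E(H)$ back into $E(G)$ by noting it cannot equal $e_2$.
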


\begin{proof}
Assume that $e_1$ is no longer a skeleton edge in $H$, then there is a path of
skeleton edges going from $x_1$ to $y_1$ in $H$ not using $e_1$. The path must
use $e_2$, otherwise we have a contradiction. The theorem will follow if we can
show that the path must have exactly two edges. Assume otherwise.  Assume, for
simplicity, that edge $e_2$ does not occur first in the path and let $z$ be
the node before $x_2$ in the path. Then we can build a new path by replacing
edges $(z, x_2)$ and $e_2$ with $(z, y_2)$. This path does not use $e_2$,
hence it occurs in $G$. If $z \neq x_1$ or $y_1 \neq y_2$, then the path makes $e_1$ a non-skeleton edge in $G$, which is a
contradiction. If $e_2$ is the first edge in the path, we can select the next
node after $y_2$ and repeat the argument.
\end{proof}

\begin{example}
\label{ex:join1}
Consider the episodes given in Figure~\ref{fig:join1}. Episodes $G_1$ and $G_2$ satisfy
the second condition in Theorem~\ref{thr:joinsafe}, hence the resulting episode $H_1$,
is transitively closed. On the other hand, combining $G_1$ and $G_3$ leads to $H'_1$, an episode
that is not transitively closed since edge $(a, d)$ is missing.
Finally, $G_4$ and $(c, a)$ satisfy Theorem~\ref{thr:extendhide} and
generate $H_2$.
\end{example}

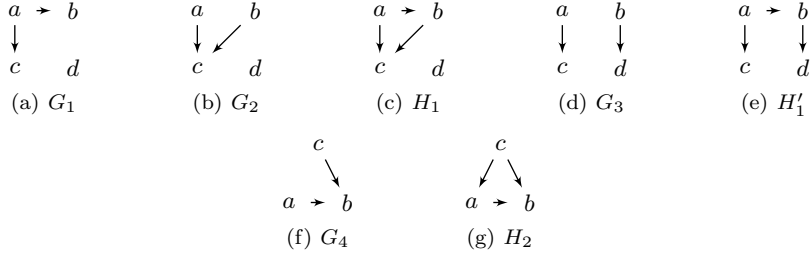
\begin{figure}[htb!]
\centering
\subfigure[$G_1$]{\begin{minipage}[b]{2.3cm}\centering\normalsize\begin{tikzpicture}[>=latex',line join=bevel,scale=0.3]\pgfsetlinewidth{0.5bp}\pgfsetcolor{black}
  \draw [->] (27bp,71.831bp) .. controls (27bp,64.131bp) and (27bp,54.974bp)  .. (27bp,36.413bp);
  \draw [->] (54bp,90bp) .. controls (56.615bp,90bp) and (59.229bp,90bp)  .. (71.93bp,90bp);
\begin{scope}
  \definecolor{strokecol}{rgb}{0.0,0.0,0.0};
  \pgfsetstrokecolor{strokecol}
  \draw (27bp,90bp) node {$a$};
\end{scope}
\begin{scope}
  \definecolor{strokecol}{rgb}{0.0,0.0,0.0};
  \pgfsetstrokecolor{strokecol}
  \draw (27bp,18bp) node {$c$};
\end{scope}
\begin{scope}
  \definecolor{strokecol}{rgb}{0.0,0.0,0.0};
  \pgfsetstrokecolor{strokecol}
  \draw (99bp,90bp) node {$b$};
\end{scope}
\begin{scope}
  \definecolor{strokecol}{rgb}{0.0,0.0,0.0};
  \pgfsetstrokecolor{strokecol}
  \draw (99bp,18bp) node {$d$};
\end{scope}\end{tikzpicture}\end{minipage}}
\subfigure[$G_2$]{\begin{minipage}[b]{2.3cm}\centering\normalsize\begin{tikzpicture}[>=latex',line join=bevel,scale=0.3]\pgfsetlinewidth{0.5bp}\pgfsetcolor{black}
  \draw [->] (80.831bp,71.831bp) .. controls (72.285bp,63.285bp) and (61.944bp,52.944bp)  .. (45.413bp,36.413bp);
  \draw [->] (27bp,71.831bp) .. controls (27bp,64.131bp) and (27bp,54.974bp)  .. (27bp,36.413bp);
\begin{scope}
  \definecolor{strokecol}{rgb}{0.0,0.0,0.0};
  \pgfsetstrokecolor{strokecol}
  \draw (27bp,90bp) node {$a$};
\end{scope}
\begin{scope}
  \definecolor{strokecol}{rgb}{0.0,0.0,0.0};
  \pgfsetstrokecolor{strokecol}
  \draw (27bp,18bp) node {$c$};
\end{scope}
\begin{scope}
  \definecolor{strokecol}{rgb}{0.0,0.0,0.0};
  \pgfsetstrokecolor{strokecol}
  \draw (99bp,90bp) node {$b$};
\end{scope}
\begin{scope}
  \definecolor{strokecol}{rgb}{0.0,0.0,0.0};
  \pgfsetstrokecolor{strokecol}
  \draw (99bp,18bp) node {$d$};
\end{scope}\end{tikzpicture}\end{minipage}}
\subfigure[$H_1$]{\begin{minipage}[b]{2.3cm}\centering\normalsize\begin{tikzpicture}[>=latex',line join=bevel,scale=0.3]\pgfsetlinewidth{0.5bp}\pgfsetcolor{black}
  \draw [->] (54bp,90bp) .. controls (56.615bp,90bp) and (59.229bp,90bp)  .. (71.93bp,90bp);
  \draw [->] (80.831bp,71.831bp) .. controls (72.285bp,63.285bp) and (61.944bp,52.944bp)  .. (45.413bp,36.413bp);
  \draw [->] (27bp,71.831bp) .. controls (27bp,64.131bp) and (27bp,54.974bp)  .. (27bp,36.413bp);
\begin{scope}
  \definecolor{strokecol}{rgb}{0.0,0.0,0.0};
  \pgfsetstrokecolor{strokecol}
  \draw (27bp,90bp) node {$a$};
\end{scope}
\begin{scope}
  \definecolor{strokecol}{rgb}{0.0,0.0,0.0};
  \pgfsetstrokecolor{strokecol}
  \draw (27bp,18bp) node {$c$};
\end{scope}
\begin{scope}
  \definecolor{strokecol}{rgb}{0.0,0.0,0.0};
  \pgfsetstrokecolor{strokecol}
  \draw (99bp,90bp) node {$b$};
\end{scope}
\begin{scope}
  \definecolor{strokecol}{rgb}{0.0,0.0,0.0};
  \pgfsetstrokecolor{strokecol}
  \draw (99bp,18bp) node {$d$};
\end{scope}\end{tikzpicture}\end{minipage}}
\subfigure[$G_3$]{\begin{minipage}[b]{2.3cm}\centering\normalsize\begin{tikzpicture}[>=latex',line join=bevel,scale=0.3]\pgfsetlinewidth{0.5bp}\pgfsetcolor{black}
  \draw [->] (27bp,71.831bp) .. controls (27bp,64.131bp) and (27bp,54.974bp)  .. (27bp,36.413bp);
  \draw [->] (99bp,71.831bp) .. controls (99bp,64.131bp) and (99bp,54.974bp)  .. (99bp,36.413bp);
\begin{scope}
  \definecolor{strokecol}{rgb}{0.0,0.0,0.0};
  \pgfsetstrokecolor{strokecol}
  \draw (27bp,90bp) node {$a$};
\end{scope}
\begin{scope}
  \definecolor{strokecol}{rgb}{0.0,0.0,0.0};
  \pgfsetstrokecolor{strokecol}
  \draw (27bp,18bp) node {$c$};
\end{scope}
\begin{scope}
  \definecolor{strokecol}{rgb}{0.0,0.0,0.0};
  \pgfsetstrokecolor{strokecol}
  \draw (99bp,90bp) node {$b$};
\end{scope}
\begin{scope}
  \definecolor{strokecol}{rgb}{0.0,0.0,0.0};
  \pgfsetstrokecolor{strokecol}
  \draw (99bp,18bp) node {$d$};
\end{scope}\end{tikzpicture}\end{minipage}}
\subfigure[$H'_1$]{\begin{minipage}[b]{2.3cm}\centering\normalsize\begin{tikzpicture}[>=latex',line join=bevel,scale=0.3]\pgfsetlinewidth{0.5bp}\pgfsetcolor{black}
  \draw [->] (54bp,90bp) .. controls (56.615bp,90bp) and (59.229bp,90bp)  .. (71.93bp,90bp);
  \draw [->] (27bp,71.831bp) .. controls (27bp,64.131bp) and (27bp,54.974bp)  .. (27bp,36.413bp);
  \draw [->] (99bp,71.831bp) .. controls (99bp,64.131bp) and (99bp,54.974bp)  .. (99bp,36.413bp);
\begin{scope}
  \definecolor{strokecol}{rgb}{0.0,0.0,0.0};
  \pgfsetstrokecolor{strokecol}
  \draw (27bp,90bp) node {$a$};
\end{scope}
\begin{scope}
  \definecolor{strokecol}{rgb}{0.0,0.0,0.0};
  \pgfsetstrokecolor{strokecol}
  \draw (27bp,18bp) node {$c$};
\end{scope}
\begin{scope}
  \definecolor{strokecol}{rgb}{0.0,0.0,0.0};
  \pgfsetstrokecolor{strokecol}
  \draw (99bp,90bp) node {$b$};
\end{scope}
\begin{scope}
  \definecolor{strokecol}{rgb}{0.0,0.0,0.0};
  \pgfsetstrokecolor{strokecol}
  \draw (99bp,18bp) node {$d$};
\end{scope}\end{tikzpicture}\end{minipage}}
\subfigure[$G_4$]{\begin{minipage}[b]{2.3cm}\centering\normalsize\begin{tikzpicture}[>=latex',line join=bevel,scale=0.3]\pgfsetlinewidth{0.5bp}\pgfsetcolor{black}
  \draw [->] (54bp,18bp) .. controls (56.615bp,18bp) and (59.229bp,18bp)  .. (71.93bp,18bp);
  \draw [->] (72.084bp,71.831bp) .. controls (76.061bp,63.877bp) and (80.815bp,54.369bp)  .. (89.793bp,36.413bp);
\begin{scope}
  \definecolor{strokecol}{rgb}{0.0,0.0,0.0};
  \pgfsetstrokecolor{strokecol}
  \draw (27bp,18bp) node {$a$};
\end{scope}
\begin{scope}
  \definecolor{strokecol}{rgb}{0.0,0.0,0.0};
  \pgfsetstrokecolor{strokecol}
  \draw (63bp,90bp) node {$c$};
\end{scope}
\begin{scope}
  \definecolor{strokecol}{rgb}{0.0,0.0,0.0};
  \pgfsetstrokecolor{strokecol}
  \draw (99bp,18bp) node {$b$};
\end{scope}\end{tikzpicture}\end{minipage}}
\subfigure[$H_2$]{\begin{minipage}[b]{2.3cm}\centering\normalsize\begin{tikzpicture}[>=latex',line join=bevel,scale=0.3]\pgfsetlinewidth{0.5bp}\pgfsetcolor{black}
  \draw [->] (53.916bp,71.831bp) .. controls (49.939bp,63.877bp) and (45.185bp,54.369bp)  .. (36.207bp,36.413bp);
  \draw [->] (54bp,18bp) .. controls (56.615bp,18bp) and (59.229bp,18bp)  .. (71.93bp,18bp);
  \draw [->] (72.084bp,71.831bp) .. controls (76.061bp,63.877bp) and (80.815bp,54.369bp)  .. (89.793bp,36.413bp);
\begin{scope}
  \definecolor{strokecol}{rgb}{0.0,0.0,0.0};
  \pgfsetstrokecolor{strokecol}
  \draw (27bp,18bp) node {$a$};
\end{scope}
\begin{scope}
  \definecolor{strokecol}{rgb}{0.0,0.0,0.0};
  \pgfsetstrokecolor{strokecol}
  \draw (63bp,90bp) node {$c$};
\end{scope}
\begin{scope}
  \definecolor{strokecol}{rgb}{0.0,0.0,0.0};
  \pgfsetstrokecolor{strokecol}
  \draw (99bp,18bp) node {$b$};
\end{scope}\end{tikzpicture}\end{minipage}}
\caption{Toy episodes for Example~\ref{ex:join1}.}
\label{fig:join1}
\end{figure}

We can now combine the preceding theorems
into the \textsc{GenerateCandidate} algorithm given in
Algorithm~\ref{alg:generatecandidates}. We will first generate candidates by
combining episodes from the previous rounds using Theorem~\ref{thr:joinsafe}.
Secondly, we use Theorem~\ref{thr:extendhide} and for each episode from the
previous rounds we add edges such that the last proper skeleton edge is no longer a
skeleton edge in the candidate.

\begin{algorithm}[htb!]
\capstart
\Input{a collection of previously discovered episodes $\efam{G} \subset \sspace$ with $N$ edges}
\Out{$\efam{P} \subset \sspace$, a collection of candidate episodes with $N + 1$ edges}
	$\efam{P} \define \emptyset$\;

	\ForEach{$G \in \efam{G}$, $G$ has no proper edges} {
		\ForEach{$x, y \in V(G)$, $\lab{x} \neq \lab{y}$} {
			add $G + (x, y)$ to $\efam{P}$\;
		}
	}
	\ForEach{$G_1 \in \efam{G}$, $G_1$ has proper edges} {
		$e_1 = (x_1, y_1) \define \last{G_1}$\;
		\tcpas{Case where $e_1$ remains a skeleton edge.}
		$\efam{H} \define \set{H \in \efam{G} | V(G) = V(H), \abs{E(H) \cap E(G)} = N - 1, \set{e_2} = E(H) - E(G), e_2 > e_1}$\; 
		\ForEach{$G_2$ in $\efam{H}$} {
			$\set{e_2} \define E(H) - E(G)$\; 
			\If{$G_1$ and $G_2$ satisfy Theorem~\ref{thr:joinsafe} \AND $e_2 \notin \iclosure{G_1}$ } {
				add $G_1 + e_2$ to $\efam{P}$\;
			}
		}
		\tcpas{Case where $e_1$ does not remain a skeleton edge.}
		\ForEach{$f = (x_1, x_2)$ skeleton edge in $G_1$ such that $x_2 \neq y_1$} {
			$e_2 \define (x_2, y_1)$\;
			\If {$e_2 \notin \iclosure{G_1}$} {
				$H \define G_1 + e_2$\;
				\If {$e_2 = \last{H}$ \AND $H$ is transitively closed (use Theorem~\ref{prop:closedadd})} { 
					add $H$ to $\efam{P}$\;
				}
			}
		}
		\ForEach{$f = (y_2, y_1)$ skeleton edge in $G_1$ such that $y_2 \neq x_1$} {
			$e_2 \define (x_1, y_2)$\;
			\If {$e_2 \notin \iclosure{G_1}$} {
				$H \define G_1 + e_2$\;
				\If {$e_2 = \last{H}$ \AND $H$ is transitively closed (use Theorem~\ref{prop:closedadd})} { 
					add $H$ to $\efam{P}$\;
				}
			}
		}
	}
	\Return $\efam{P}$\;
\caption{\textsc{GenerateCandidate}. Generates candidate episodes from the
previously discovered episodes.}
\label{alg:generatecandidates}
\end{algorithm}

Note that Algorithm~\ref{alg:generatecandidates} will not generate duplicates.
To see this, first note that if $H$ is generated using Case A, then the two
episodes $G_1$ and $G_2$ generating $H$ are unique.  That is, it must be that
$G_1 = H - \last{H}$ and $G_2 = H - \last{G_1}$. In other words, $G_1$ and
$G_2$ will produce a unique $H$. In Case B, the episode $G_1$ is also
unique, namely, we must have $G_1 = H - \last{H}$. Thus, each $G_1$ will
produce a unique $H$.  Finally, note that $\last{G_1}$, according to
Theorem~\ref{thr:joinsafe} is always a skeleton edge in $H$ in Case A
and we demand that $\last{G_1}$ is not a skeleton edge in Case B.
Hence, Case A and Case B will never generate the same episode.

\section{Algorithm for Discovering Closed Episodes}
\label{sec:algorithm}

In this section we will complete our mining algorithm for discovering closed
episodes. First, we present an algorithm for testing the episodes and a more
detailed version of Algorithm~\ref{alg:sketch}. Then we present an
algorithm for adding the needed intermediate episodes and prove the correctness
of the algorithm.

\subsection{Detailed version of the algorithm}

We begin by describing the test subroutine that is done for each candidate episode.
Following the level-wise discovery, before computing the
frequency of the episode, we need to test that all its subepisodes are discovered. 
It turns out that using transitively closed episodes will guarantee the strongest
conditions for an episode to pass to the frequency computation stage.

\begin{corollary}[of Theorem~\ref{thr:sub}]
\label{cor:optimal}
Let $G \in \sspace$ be an episode. Let $e$ be a proper skeleton edge of $G$. If $H$ is an episode
obtained by removing $e$ from $G$, then there exists no episode $H_1 \in \sspace$, such that $H\prec H_1 \prec G$.
\end{corollary}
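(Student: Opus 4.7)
The plan is to derive a contradiction by showing that any alleged intermediate episode $H_1$ would force $E(H_1)$ to lie strictly between $E(H)=E(G)\setminus\{e\}$ and $E(G)$, leaving no room.

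First I would verify that $H = G - e$ actually lies in $\sspace$, since the statement implicitly assumes that $H$ is a valid point in the lattice for $H_1$ to sit above. Strictness is preserved because $e$ is a \emph{proper} skeleton edge (so $\lab{x}\neq\lab{y}$ for $e=(x,y)$), and hence $e$ cannot witness the strictness of any same-label pair; transitive closure is preserved because any $(a,b),(b,c)\in E(H)$ would give a $2$-hop path through $b$, which cannot equal $e$ (as $e$ is a skeleton edge in $G$). So $H\in\sspace$.

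Next I would assume for contradiction that some $H_1\in\sspace$ satisfies $H\prec H_1\prec G$. The first task is to pin down $|V(H_1)|$. From the definition of $\preceq$, a chain $H\preceq H_1\preceq G$ forces $|V(H)|\leq|V(H_1)|\leq|V(G)|$, and since $H$ and $G$ have the same node count, $|V(H_1)|=|V(H)|=|V(G)|$.

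The main obstacle is then ensuring that $H_1$ has \emph{identical nodes} to $G$ in the sense of Section~\ref{sec:subset}, so that Theorem~\ref{thr:sub} can be applied. For this I would use an instance argument: pick any topological order of $V(G)$ to obtain a sequence $s$ of length $|V(G)|$ that is an instance of $G$. Since $H_1\preceq G$, this sequence also covers $H_1$ via some injection from $V(H_1)$ into the $|V(G)|$ positions of $s$; since $|V(H_1)|=|V(G)|$, the injection is a bijection, so the label multisets of $H_1$ and $G$ coincide. Combined with strictness and the canonical node ordering convention, this gives identical nodes.

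Once identical nodes are established, Theorem~\ref{thr:sub} converts the chain $H\preceq H_1\preceq G$ into $E(H)\subseteq E(H_1)\subseteq E(G)$. But $E(G)\setminus E(H)=\{e\}$ has a single element, so there are only two possibilities: $E(H_1)=E(H)$, giving $H_1\sim H$ and contradicting $H\prec H_1$; or $E(H_1)=E(G)$, giving $H_1\sim G$ and contradicting $H_1\prec G$. Either way we have the required contradiction, completing the proof.
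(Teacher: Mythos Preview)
Your argument is correct and follows the route the paper intends: the corollary is stated without proof in the paper, simply tagged as a consequence of Theorem~\ref{thr:sub}, so the implicit argument is precisely the one you give --- pass to edge sets via Theorem~\ref{thr:sub} and observe that $E(G)\setminus E(H)=\{e\}$ leaves no room for a strict intermediate. You are more thorough than the paper in two respects: you explicitly verify that $H=G-e$ remains in $\sspace$ (which the paper only asserts in passing at the start of Section~\ref{sec:candidate}), and you take care to establish that $H_1$ has identical nodes to $G$ before invoking Theorem~\ref{thr:sub}, a precondition the paper tacitly assumes.
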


If $e$ is a proper skeleton edge of an episode $G \in \sspace$, then $G - e \in \sspace$. Thus, for $G$ to be frequent,
$G - e$ had to be discovered previously. This is the first test in
\textsc{TestCandidate} (given in Algorithm~\ref{alg:testcandidate}).  In
addition, following the level-wise approach for mining closed
patterns~\cite{pasquier:99:discovering}, we test that $G$ is not a subepisode
of $\iclosure{G - e}$, and if it is, then we can discard $G$.

The second test involves testing whether $G - v$, where $v$ is a node in
$G$, has also been discovered. Note that $G - v$ has fewer nodes than $G$ so, if
$G$ is frequent, we must have discovered $G - v$. Not all nodes need to be
tested.  If a node $v$ has an adjacent proper skeleton edge, say $e$, then the episode
$G - e$ has a frequency lower than or equal to that of $G - v$. Since we have already
tested $G - e$ we do not need to test $G - v$. Consequently, we need to test
only those nodes that have no proper skeleton edges. This leads us to the second test
in \textsc{TestCandidate}. Note that these nodes will either have no edges,
or will have edges to the nodes having the same label.
If both tests are passed we
test the candidate episode for frequency.

\begin{example}
\label{ex:test}
Consider the episodes given in Figure~\ref{fig:test}. When testing the candidate
episode  $H$, \textsc{TestCandidate} will test for the existence of three
episodes. Episodes $G_1$ and $G_2$ are tested when either edge is removed and
$G_3$ is tested when node labelled $d$ is removed.
\end{example}

\begin{figure}[htb!]
\centering
\subfigure[$H$]{\begin{minipage}[b]{2.5cm}\centering\normalsize\begin{tikzpicture}[>=latex',line join=bevel,scale=0.3]\pgfsetlinewidth{0.5bp}\pgfsetcolor{black}
  \draw [->] (53.916bp,71.831bp) .. controls (49.939bp,63.877bp) and (45.185bp,54.369bp)  .. (36.207bp,36.413bp);
  \draw [->] (72.084bp,71.831bp) .. controls (76.061bp,63.877bp) and (80.815bp,54.369bp)  .. (89.793bp,36.413bp);
\begin{scope}
  \definecolor{strokecol}{rgb}{0.0,0.0,0.0};
  \pgfsetstrokecolor{strokecol}
  \draw (63bp,90bp) node {$a$};
\end{scope}
\begin{scope}
  \definecolor{strokecol}{rgb}{0.0,0.0,0.0};
  \pgfsetstrokecolor{strokecol}
  \draw (99bp,18bp) node {$c$};
\end{scope}
\begin{scope}
  \definecolor{strokecol}{rgb}{0.0,0.0,0.0};
  \pgfsetstrokecolor{strokecol}
  \draw (27bp,18bp) node {$b$};
\end{scope}
\begin{scope}
  \definecolor{strokecol}{rgb}{0.0,0.0,0.0};
  \pgfsetstrokecolor{strokecol}
  \draw (135bp,90bp) node {$d$};
\end{scope}\end{tikzpicture}\end{minipage}}
\subfigure[$G_1$]{\begin{minipage}[b]{2.5cm}\centering\normalsize\begin{tikzpicture}[>=latex',line join=bevel,scale=0.3]\pgfsetlinewidth{0.5bp}\pgfsetcolor{black}
  \draw [->] (53.916bp,71.831bp) .. controls (49.939bp,63.877bp) and (45.185bp,54.369bp)  .. (36.207bp,36.413bp);
\begin{scope}
  \definecolor{strokecol}{rgb}{0.0,0.0,0.0};
  \pgfsetstrokecolor{strokecol}
  \draw (63bp,90bp) node {$a$};
\end{scope}
\begin{scope}
  \definecolor{strokecol}{rgb}{0.0,0.0,0.0};
  \pgfsetstrokecolor{strokecol}
  \draw (99bp,18bp) node {$c$};
\end{scope}
\begin{scope}
  \definecolor{strokecol}{rgb}{0.0,0.0,0.0};
  \pgfsetstrokecolor{strokecol}
  \draw (27bp,18bp) node {$b$};
\end{scope}
\begin{scope}
  \definecolor{strokecol}{rgb}{0.0,0.0,0.0};
  \pgfsetstrokecolor{strokecol}
  \draw (135bp,90bp) node {$d$};
\end{scope}\end{tikzpicture}\end{minipage}}
\subfigure[$G_2$]{\begin{minipage}[b]{2.5cm}\centering\normalsize\begin{tikzpicture}[>=latex',line join=bevel,scale=0.3]\pgfsetlinewidth{0.5bp}\pgfsetcolor{black}
  \draw [->] (72.084bp,71.831bp) .. controls (76.061bp,63.877bp) and (80.815bp,54.369bp)  .. (89.793bp,36.413bp);
\begin{scope}
  \definecolor{strokecol}{rgb}{0.0,0.0,0.0};
  \pgfsetstrokecolor{strokecol}
  \draw (63bp,90bp) node {$a$};
\end{scope}
\begin{scope}
  \definecolor{strokecol}{rgb}{0.0,0.0,0.0};
  \pgfsetstrokecolor{strokecol}
  \draw (99bp,18bp) node {$c$};
\end{scope}
\begin{scope}
  \definecolor{strokecol}{rgb}{0.0,0.0,0.0};
  \pgfsetstrokecolor{strokecol}
  \draw (27bp,18bp) node {$b$};
\end{scope}
\begin{scope}
  \definecolor{strokecol}{rgb}{0.0,0.0,0.0};
  \pgfsetstrokecolor{strokecol}
  \draw (135bp,90bp) node {$d$};
\end{scope}\end{tikzpicture}\end{minipage}}
\subfigure[$G_3$]{\begin{minipage}[b]{2.5cm}\centering\normalsize\begin{tikzpicture}[>=latex',line join=bevel,scale=0.3]\pgfsetlinewidth{0.5bp}\pgfsetcolor{black}
  \draw [->] (53.916bp,71.831bp) .. controls (49.939bp,63.877bp) and (45.185bp,54.369bp)  .. (36.207bp,36.413bp);
  \draw [->] (72.084bp,71.831bp) .. controls (76.061bp,63.877bp) and (80.815bp,54.369bp)  .. (89.793bp,36.413bp);
\begin{scope}
  \definecolor{strokecol}{rgb}{0.0,0.0,0.0};
  \pgfsetstrokecolor{strokecol}
  \draw (63bp,90bp) node {$a$};
\end{scope}
\begin{scope}
  \definecolor{strokecol}{rgb}{0.0,0.0,0.0};
  \pgfsetstrokecolor{strokecol}
  \draw (99bp,18bp) node {$c$};
\end{scope}
\begin{scope}
  \definecolor{strokecol}{rgb}{0.0,0.0,0.0};
  \pgfsetstrokecolor{strokecol}
  \draw (27bp,18bp) node {$b$};
\end{scope}\end{tikzpicture}\end{minipage}}
\caption{Toy episodes for Example~\ref{ex:test}.}
\label{fig:test}
\end{figure}
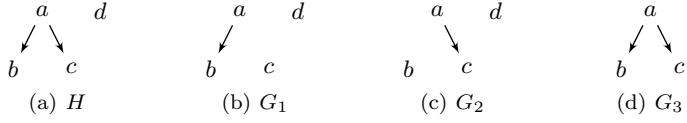

\begin{algorithm}[htb!]
\capstart
\Input{an episode $G \in \sspace$, already discovered episodes $\efam{C} \subset \sspace$}
\Out{a boolean value, \True only if all subepisodes of $G$ are frequent}
	\ForEach{proper skeleton edge $e$ in $G$} {
		\If {$G - e \notin \efam{C}$  \OR $e \in E(\iclosure{G - e})$} {\Return \False\;} 
	}
	\ForEach{$v$ in $G$ not having a proper skeleton edge} {
		\If {$G - v \notin \efam{C}$} {\Return \False\; } 
		\If {$v$ has no edges \AND there is a node $w \in V(\iclosure{G - v})$ s.t. $\lab{w} = \lab{v}$} {\Return \False\; }
	}
	\Return \True\;
\caption{\textsc{TestCandidate}. An algorithm that checks if an episode is a proper candidate.}
\label{alg:testcandidate}
\end{algorithm}

Finally we present a more detailed version of Algorithm~\ref{alg:sketch}, given
in Algorithm~\ref{alg:mineepisode}. The only missing part of the algorithm is the \textsc{AddIntermediate}
subroutine which we will give in the next section along with the proof of correctness.

\begin{algorithm}[htb!]
\capstart
\Input{a sequence $s$, a frequency threshold $\sigma$, a window size $\rho$}
\Out{$f$-closed frequent strict episodes}
	$\efam{C} \define$ all frequent episodes with one node\;
	$\efam{E} \define \emptyset$; $N \define 1$\;
	\While{there are episodes with $N$ or more nodes in $\efam{C} \cup \efam{E}$} {
		$M \define 0$\;
		\While{there are episodes with $M$ or more edges in $\efam{C} \cup \efam{E}$} {
			\ForEach{$G \in \efam{G}$} {
				\If{$\textsc{TestCandidate}(G, \efam{E})$ \AND $G$ is frequent} {
					add $G$ to $\efam{E}$\;
					compute and store $\iclosure{G}$\;
					add \textsc{AddIntermediate}$(G)$ to $\efam{E}$\;
				}
			}
			$\efam{P} \define \set{C \in \efam{E} \mid \abs{E(C)} = M, \abs{V(C)} = N}$\;
			$\efam{C} \define \textsc{GenerateCandidate}(\efam{P})$\; 
			$M \define M + 1$\;
		}
		$\efam{P} \define \set{C \in \efam{E} \mid \abs{V(G)} = N, G \text{ is a parallel episode}}$\;
		$\efam{C} \define \emptyset$\;
		\ForEach{$G \in \efam{P}$} {
			$x \define $ last node of $G$\;
			$\efam{H} \define \set{H \in \efam{G} \mid \abs{V(G) \cap V(H)} = n - 1, \ \lab{\text{last node of } H} > \lab{x}}$\;
			\ForEach{$H \in \efam{H}$} {
				add $G$ + last node of $H$ to $\efam{C}$\;
			}
			\tcpas{Check episode $G$ augmented with a node carrying the label of the last node $x$}
			$y \define$ a node with the same label as $x$\; 
			add $G + y + (x, y)$ to $\efam{C}$\;
		}
		$N \define N + 1$\;

	}
	\Return{\textsc{F-Closure}$(\set{\iclosure{G} \mid G \in \efam{E}})$}\;
\caption{\textsc{MineEpisodes}. An algorithm discovering all frequent closed episodes.}
\label{alg:mineepisode}
\end{algorithm}

\subsection{Proof of Correctness}
\label{sec:correct}

In the original framework for mining itemsets, the algorithm discovers itemset
generators, the smallest itemsets that produce the same closure. It can be shown
that generators form a downward closed collection, so they can be discovered
efficiently. This, however, is not true for episodes, as the following example demonstrates.

\begin{example}
\label{ex:fail}
Consider the episodes given in Figure~\ref{fig:fail} and a sequence $s =
acbxxabcxxbac$.  Assume that the window size is $\rho = 3$ and the
frequency threshold is $\sigma = 1$.  The frequency of $H$ is $1$ and there is no
subepisode that has the same frequency. Hence to discover this episode
all of its maximal subepisodes need to be discovered. One of these subepisodes
is $\iclosure{G}$. However, $\iclosure{G}$ is not added into $\efam{E}$ since
$(a, c) \in \iclosure{\iclosure{G} - (a, c)}$ and $\textsc{TestCandidate}$ returns false.
\end{example}

\begin{figure}[htb!]
\centering
\subfigure[$G$\label{fig:fail:a}]{\begin{minipage}[b]{2.3cm}\centering\normalsize\begin{tikzpicture}[>=latex',line join=bevel,scale=0.3]\pgfsetlinewidth{0.5bp}\pgfsetcolor{black}
  \draw [->] (53.916bp,71.831bp) .. controls (49.939bp,63.877bp) and (45.185bp,54.369bp)  .. (36.207bp,36.413bp);
\begin{scope}
  \definecolor{strokecol}{rgb}{0.0,0.0,0.0};
  \pgfsetstrokecolor{strokecol}
  \draw (63bp,90bp) node {$a$};
\end{scope}
\begin{scope}
  \definecolor{strokecol}{rgb}{0.0,0.0,0.0};
  \pgfsetstrokecolor{strokecol}
  \draw (99bp,18bp) node {$c$};
\end{scope}
\begin{scope}
  \definecolor{strokecol}{rgb}{0.0,0.0,0.0};
  \pgfsetstrokecolor{strokecol}
  \draw (27bp,18bp) node {$b$};
\end{scope}\end{tikzpicture}\end{minipage}}
\subfigure[$\iclosure{G}$]{\begin{minipage}[b]{2.3cm}\centering\normalsize\begin{tikzpicture}[>=latex',line join=bevel,scale=0.3]\pgfsetlinewidth{0.5bp}\pgfsetcolor{black}
  \draw [->] (53.916bp,71.831bp) .. controls (49.939bp,63.877bp) and (45.185bp,54.369bp)  .. (36.207bp,36.413bp);
  \draw [->] (72.084bp,71.831bp) .. controls (76.061bp,63.877bp) and (80.815bp,54.369bp)  .. (89.793bp,36.413bp);
\begin{scope}
  \definecolor{strokecol}{rgb}{0.0,0.0,0.0};
  \pgfsetstrokecolor{strokecol}
  \draw (63bp,90bp) node {$a$};
\end{scope}
\begin{scope}
  \definecolor{strokecol}{rgb}{0.0,0.0,0.0};
  \pgfsetstrokecolor{strokecol}
  \draw (99bp,18bp) node {$c$};
\end{scope}
\begin{scope}
  \definecolor{strokecol}{rgb}{0.0,0.0,0.0};
  \pgfsetstrokecolor{strokecol}
  \draw (27bp,18bp) node {$b$};
\end{scope}\end{tikzpicture}\end{minipage}}
\subfigure[$H$]{\begin{minipage}[b]{2.3cm}\centering\normalsize\begin{tikzpicture}[>=latex',line join=bevel,scale=0.3]\pgfsetlinewidth{0.5bp}\pgfsetcolor{black}
  \draw [->] (54bp,18bp) .. controls (56.615bp,18bp) and (59.229bp,18bp)  .. (71.93bp,18bp);
  \draw [->] (53.916bp,71.831bp) .. controls (49.939bp,63.877bp) and (45.185bp,54.369bp)  .. (36.207bp,36.413bp);
  \draw [->] (72.084bp,71.831bp) .. controls (76.061bp,63.877bp) and (80.815bp,54.369bp)  .. (89.793bp,36.413bp);
\begin{scope}
  \definecolor{strokecol}{rgb}{0.0,0.0,0.0};
  \pgfsetstrokecolor{strokecol}
  \draw (63bp,90bp) node {$a$};
\end{scope}
\begin{scope}
  \definecolor{strokecol}{rgb}{0.0,0.0,0.0};
  \pgfsetstrokecolor{strokecol}
  \draw (99bp,18bp) node {$c$};
\end{scope}
\begin{scope}
  \definecolor{strokecol}{rgb}{0.0,0.0,0.0};
  \pgfsetstrokecolor{strokecol}
  \draw (27bp,18bp) node {$b$};
\end{scope}\end{tikzpicture}\end{minipage}}
\caption{Toy episodes for Example~\ref{ex:fail}.}
\label{fig:fail}
\end{figure}
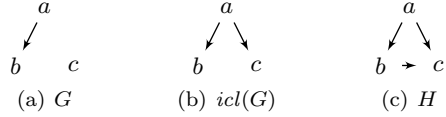

The core of the problem is that sometimes adding some particular edge is not
allowed until you have added another edge. For example, we cannot add $(b, c)$
to $G$ (given in Figure~\ref{fig:fail:a}) until we have added $(a, c)$.

We solve this problem by adding some additional episodes between the discovered
episode $G$ and its closure $\iclosure{G}$. Roughly speaking, given an episode
$G \in \sspace$ we need to add the episodes containing edges/nodes from $\iclosure{G}$ such
that these edges/nodes can be "hidden" by a single edge or node.

The details are given in the \textsc{AddIntermediate} algorithm given in
Algorithm~\ref{alg:addinter}, and the justification for adding these episodes
is given in the proof of Theorem~\ref{prop:discover}.

\begin{algorithm}[htb!]
\capstart
\Input{an episode $G \in \sspace$}
\Out{a subset of episodes between $G$ and $\iclosure{G}$}
	$\efam{C} \define \emptyset$\;
	\ForEach{$x \in V(\iclosure{G}) - V(G)$} {
		$H \define G + x$\;
		add $H$ to $\efam{C}$\;\nllabel{alg:future1}
		\ForEach{$y \in V(G)$ \AND $e = (x, y) \notin \iclosure{G}$} {
			$Z \define E(\closure{H + e}) - E(H + e)$\;
			\If{$Z \neq \emptyset$ \AND $Z \subset E(\iclosure{H})$} {
				add $H + Z$ to $\efam{C}$\;\nllabel{alg:future2}
			}
		}
		\ForEach{$y \in V(G)$ \AND $e = (y, x) \notin \iclosure{G}$} {
			$Z \define E(\closure{H + e}) - E(H + e)$\;
			\If{$Z \neq \emptyset$ \AND $Z \subset E(\iclosure{H})$} {
				add $H + Z$ to $\efam{C}$\;\nllabel{alg:future3}
			}
		}
	}
	\ForEach{$x, y \in V(\iclosure{G}) - V(G)$ \AND $e = (x, y) \notin \iclosure{G}$} {
			add $G + x + y$ to $\efam{C}$\;\nllabel{alg:future4}
	}

	\ForEach{$x, y \in V(G)$ \AND $e = (x, y) \notin \iclosure{G}$} {
		$Z \define E(\closure{G + e}) - E(G + e)$\;
		\If{$Z \neq \emptyset$ \AND $Z \subset E(\iclosure{G})$} {
			add $G + Z$ to $\efam{C}$\;\nllabel{alg:future5}
		}
	}
	\Return{$\efam{C}$}\;

\caption{\textsc{AddIntermediate}. An algorithm that given an episode adds some
additional episodes between the discovered episode and its closure.  This is
necessary to guarantee the correctness of the algorithm.}

\label{alg:addinter}
\end{algorithm}

We will now establish the correctness of the algorithm and justify Algorithm~\ref{alg:addinter}.

\begin{definition}
Let $G \in \sspace$ be an episode. We call a proper skeleton edge $e$ in $G$
\emph{derivable} if $G \preceq \iclosure{G - e}$.
We call a solitary node $v$ \emph{derivable} if $G \preceq \iclosure{G - v}$.
\end{definition}

We will need the following straightforward lemmae to prove our main result.

\begin{lemma}
\label{lem:removea}
Let $G \in \sspace$ be an episode. Let $e$ be  a non-derivable skeleton edge in $G$. Then
$e$ is a non-derivable skeleton edge in $G - f$, where $f \neq e$ is a proper skeleton edge
in $G$. Similarly,
$e$ is a non-derivable skeleton edge in $G - v$, where $v$ is a solitary node.
\end{lemma}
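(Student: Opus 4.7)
The plan is to prove both parts by contrapositive, leveraging the monotonicity of $\iclosure{}$. First I would confirm that $G - f$ and $G - v$ really lie in $\sspace$. Deleting the skeleton edge $f$ preserves transitive closure, because any path between its endpoints in $G - f$ would, by transitive closure in $G$, give a length-two detour through some intermediate node, contradicting the skeleton property of $f$; strictness is preserved since $f$ is proper, so every required edge between same-labeled nodes is distinct from $f$ and survives. Deleting the solitary node $v$ also preserves both properties: any path in $G - v$ lifts to $G$, and edges between surviving same-labeled nodes are furnished by transitive closure in $G$ and remain after removing $v$.

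Next I would check that $e$ remains a proper skeleton edge in each smaller episode. In $G - f$ the edge $e \neq f$ persists; in $G - v$ the edge $e$ persists because the solitary node $v$ has no adjacent proper skeleton edges and therefore is not an endpoint of $e$. Removing edges or nodes cannot introduce new length-two paths between the endpoints of $e$, so the skeleton property survives, and since the labels of the endpoints are unchanged, $e$ remains proper.

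The main step is non-derivability. Suppose toward contradiction that $e$ is derivable in $G - f$, so that $e$ lies in $E(\iclosure{(G - f) - e})$ under the natural embedding of the nodes of $G - f$; this embedding is forced to be the identity on $V(G)$ because $\iclN$ only introduces nodes with fresh labels and, by strictness, same-label nodes are totally ordered by ancestry. Since $G - f - e \preceq G - e$ via the identity on $V(G)$, the monotonicity of $\iclosure{}$ yields a homomorphism $\alpha \colon \iclosure{G - f - e} \to \iclosure{G - e}$ that extends this identity; the constructive proofs for $\iclN$ and $\iclE$ keep the given homomorphism fixed on old nodes and only extend it to new ones. Applying $\alpha$ to $e = (x, y)$, whose endpoints $\alpha$ fixes, places $e$ in $E(\iclosure{G - e})$, contradicting the non-derivability of $e$ in $G$. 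The argument for $G - v$ is identical, with the inclusion $V(G - v) \hookrightarrow V(G)$ playing the role of the identity.

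The main obstacle I anticipate is the rigidity claim underpinning the whole argument: that the monotonicity homomorphism can indeed be arranged to fix the endpoints $x, y$ of $e$. This follows from the combination of strictness and the specific structure of $\iclN$, which never introduces same-label duplicates, so the embeddings between the strict episodes involved are uniquely determined on the original nodes; still, it warrants a careful read of the monotonicity proofs for $\iclN$ and $\iclE$ to confirm that the extension is compatible.
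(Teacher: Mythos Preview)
Your proposal is correct and follows essentially the same route as the paper: argue by contradiction, use monotonicity of $\iclosure{}$ applied to $G-f-e \preceq G-e$, and exploit that $\iclN{}$ only introduces nodes with fresh labels so the relevant homomorphisms are the identity on the original nodes, yielding $G \preceq \iclosure{G-e}$. The paper's proof is terser---it combines $G-f \preceq \iclosure{G-e}$ with $G-e \preceq \iclosure{G-e}$ directly rather than tracking the single edge $e$---and it omits your preliminary checks that $G-f,\,G-v \in \sspace$ and that $e$ stays a skeleton edge, but the substance is the same.
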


\begin{proof}
Assume that $e$ is derivable in $G - f$, then $G - f \preceq \iclosure{G - f -
e} \preceq \iclosure{G - e}$. On the other hand, $G - e \preceq \iclosure{G - e}$.
Since the closure operator adds nodes only with unique labels, the graph homomorpisms
from $G - e$ to $\iclosure{G - e}$ and from $G - f$ to $\iclosure{G - e}$ must be equal. 
This implies that $G \preceq \iclosure{G - e}$ which is a contradiction. A similar
argument holds for $G - v$.
\end{proof}

\begin{lemma}
\label{lem:removeb}
Let $G \in \sspace$ be an episode. Let $w$ be a non-derivable solitary node in $G$. Then
$w$ is a non-derivable node in $G - v$, where $v$ is a solitary node. Similarly,
$w$ is a non-derivable node in $G - e$, where $e$ is a proper skeleton edge in $G$.
\end{lemma}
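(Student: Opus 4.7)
The plan is to mimic the proof of Lemma~\ref{lem:removea} almost verbatim, arguing by contradiction in each of the two cases, and relying on the fact that $\iclN{}$ only introduces nodes with previously unseen labels so that the induced graph homomorphisms into a common closure must be compatible on shared nodes.

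For the first claim, suppose for contradiction that $w$ is derivable in $G-v$, that is, $G - v \preceq \iclosure{G - v - w}$. Since $G - v - w \preceq G - w$ (we are merely deleting the solitary node $v$), monotonicity of $\iclosure{}$ yields $\iclosure{G - v - w} \preceq \iclosure{G - w}$, so $G - v \preceq \iclosure{G - w}$. On the other hand, extension gives $G - w \preceq \iclosure{G - w}$. Now I would combine these two homomorphisms into a single homomorphism $G \to \iclosure{G - w}$: because $\iclN{}$ only adds nodes with labels absent from the original episode, the image of every original node is forced, so the two homomorphisms must agree on $V(G) \setminus \{v, w\}$ and can be glued along the obvious map on $v$ and $w$. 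Since $E(G) = E(G - v) \cup E(G - w)$ (both $v$ and $w$ are solitary, so no edge is lost twice), the glued map respects every edge of $G$, giving $G \preceq \iclosure{G - w}$ and contradicting non-derivability of $w$.

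The second claim is handled by exactly the same template with $e$ in place of $v$: derivability of $w$ in $G - e$ means $G - e \preceq \iclosure{G - e - w}$, monotonicity promotes the right-hand side to $\iclosure{G - w}$, and combining with $G - w \preceq \iclosure{G - w}$ produces a homomorphism $G \to \iclosure{G - w}$ respecting every edge of $G$ (here we use that $E(G) = E(G - e) \cup E(G - w)$ since $w$ is solitary and therefore not incident to $e$). This again contradicts the non-derivability of $w$ in $G$.

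The only delicate point, as in Lemma~\ref{lem:removea}, is the gluing step; everything else is an immediate application of extension and monotonicity. I expect no genuine obstacle beyond carefully justifying that the two homomorphisms into $\iclosure{G - w}$ agree on the overlap of their domains, which follows from the observation (already used in Lemma~\ref{lem:removea}) that $\iclN{}$ introduces at most one new node per label and only for labels not present in $G$.
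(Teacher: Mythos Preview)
Your proposal is correct and follows essentially the same approach as the paper, which simply points to the proof of Lemma~\ref{lem:removea} and reuses the contradiction template: derive $G - v \preceq \iclosure{G - w}$ (resp.\ $G - e \preceq \iclosure{G - w}$) via monotonicity, combine it with $G - w \preceq \iclosure{G - w}$, and glue the two homomorphisms using the fact that $\iclN{}$ only adds nodes with fresh labels. Your explicit observation that $E(G) = E(G - v)\cup E(G - w)$ (resp.\ $E(G) = E(G - e)\cup E(G - w)$, since a solitary $w$ is not incident to the proper skeleton edge $e$) is exactly what makes the gluing go through, and is the analogue of the corresponding step in Lemma~\ref{lem:removea}.
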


\begin{proof}
Similar to the proof of Lemma~\ref{lem:removea}.
\end{proof}

\begin{lemma}
\label{lem:derivable}
Let $G \in \sspace$ be an episode with a derivable edge $e$.
Then $\iclosure{G} = \iclosure{G - e}$.
Let $G \in \sspace$ be an episode with a derivable node $v$.
Then $\iclosure{G} = \iclosure{G - v}$.
\end{lemma}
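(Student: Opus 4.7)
The plan is a direct application of the three properties (extension, monotonicity, idempotency) that were proved for the $i$-closure in the preceding theorems. For the edge statement, since $e$ is a proper skeleton edge, $G - e$ remains in $\sspace$ with the same node set as $G$, and $E(G-e) \subsetneq E(G)$ gives $G - e \preceq G$ by Theorem~\ref{thr:sub}. Monotonicity then yields $\iclosure{G - e} \preceq \iclosure{G}$. For the reverse direction I would invoke derivability: the hypothesis $G \preceq \iclosure{G - e}$ combined with monotonicity gives $\iclosure{G} \preceq \iclosure{\iclosure{G - e}}$, which collapses to $\iclosure{G} \preceq \iclosure{G - e}$ by idempotency.

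To upgrade the two-sided $\preceq$ to genuine equality in $\sspace$, I would appeal to the Corollary of Theorem~\ref{thr:sub}. Two episodes $A, B \in \sspace$ cannot simultaneously satisfy $A \preceq B$ and $B \preceq A$ unless $|V(A)| = |V(B)|$ (by the convention in the definition of $\preceq$, the larger-node-set side cannot be a subepisode of the smaller), so the mutual $\preceq$ already forces the node cardinalities to match. Once the node sets match up, the Corollary states that $A \sim B$ is equivalent to having identical nodes and equal edge sets, so $\iclosure{G} = \iclosure{G-e}$ as elements of $\sspace$. The main subtle point—and what I expect to need a careful sentence about—is precisely this node-count matching, which is really the only place the argument could slip if one thinks of $\preceq$ purely as set containment.

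The node case follows by the same template: $v$ is solitary, so $G - v \in \sspace$ and $G - v \preceq G$, which gives $\iclosure{G - v} \preceq \iclosure{G}$ by monotonicity; the derivability assumption $G \preceq \iclosure{G - v}$ yields the reverse relation via monotonicity together with idempotency; and the corollary once again promotes mutual $\preceq$ in $\sspace$ to equality, so $\iclosure{G} = \iclosure{G - v}$. No separate computation with minimal windows, maps, or instance sets should be needed—the entire argument reduces to the formal properties of $\iclosure{\cdot}$ plus the uniqueness of representatives in $\sspace$.
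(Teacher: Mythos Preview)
Your proposal is correct and follows essentially the same approach as the paper: both arguments combine derivability with monotonicity and idempotency to obtain $\iclosure{G} \preceq \iclosure{G-e}$ (respectively $\iclosure{G-v}$), and use $G-e \preceq G$ (respectively $G-v \preceq G$) with monotonicity for the reverse. The paper's proof is terser and simply asserts equality from the two-sided $\preceq$, whereas you spell out the antisymmetry step via the node-count observation and the Corollary of Theorem~\ref{thr:sub}; this extra care is warranted but does not constitute a different method.
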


\begin{proof}
The monotonicity and idempotency properties imply $\iclosure{G} \preceq
\iclosure{\iclosure{G - e}} = \iclosure{G - e}$. Since $\iclosure{G - e} \preceq \iclosure{G}$, we have $\iclosure{G} = \iclosure{G - e}$.

To prove the second case, note that $\iclosure{G} \preceq \iclosure{\iclosure{G - v}} = \iclosure{G - v}$ which immediately implies that $\iclosure{G} = \iclosure{G - v}$.
\end{proof}

\begin{theorem}
\label{prop:discover}
Assume a frequent episode $G \in \sspace$ with no derivable edges or derivable nodes.
Then $G$ is discovered.
\end{theorem}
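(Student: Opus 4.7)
The plan is strong induction on $(|V(G)|, |E(G)|)$ in lexicographic order, mirroring the nested outer/inner loops of Algorithm~\ref{alg:mineepisode}. The base case $|V(G)|=1$ is immediate since the initial $\efam{C}$ contains every frequent singleton and \textsc{TestCandidate} succeeds vacuously there. For the inductive step I need to show (a) that $G$ enters $\efam{C}$ at the appropriate iteration, and (b) that \textsc{TestCandidate}$(G,\efam{E})$ returns \True when $G$ is processed.

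The two closure-based side conditions inside \textsc{TestCandidate} follow directly from non-derivability. If $e$ were a proper skeleton edge of $G$ with $e \in E(\iclosure{G-e})$, then $E(G) = E(G-e)\cup\{e\}\subseteq E(\iclosure{G-e})$, so Theorem~\ref{thr:sub} would give $G\preceq\iclosure{G-e}$, contradicting non-derivability of $e$. Likewise, if $v$ is a node of $G$ with no edges and some $w\in V(\iclosure{G-v})$ satisfies $\lab{w}=\lab{v}$, then the subgraph $(G-v)+w$ of $\iclosure{G-v}$ witnesses $G\preceq\iclosure{G-v}$, contradicting non-derivability of $v$. So the non-trivial remaining assertions are the memberships $G-e\in\efam{E}$ for every proper skeleton edge $e$ and $G-v\in\efam{E}$ for every solitary node $v$. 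Candidate generation (a) is supplied by Theorem~\ref{thr:generate}: either \textsc{GenerateCandidate} joins two smaller episodes $G_1,G_2\in\efam{E}$ in Case~A via Theorem~\ref{thr:joinsafe}, or it extends a single smaller episode in Case~B via Theorem~\ref{thr:extendhide}; a parallel $G$ is produced from an $(|V(G)|-1)$-node parallel parent by the explicit outer-loop construction.

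For the membership assertions, if $G-e$ is itself non-derivable, Lemmas~\ref{lem:removea} and~\ref{lem:removeb} let the inductive hypothesis apply and yield $G-e\in\efam{E}$; the analogous argument handles $G-v$. Otherwise let $G^\ast$ be obtained from the predecessor by iteratively stripping every derivable skeleton edge and every derivable solitary node until none remain. Each strip preserves the $i$-closure by Lemma~\ref{lem:derivable}, so $\iclosure{G^\ast}$ equals the predecessor's $i$-closure; moreover $G^\ast$ is strictly smaller than $G$, frequent by downward closure of frequency combined with $G^\ast\preceq G$, and non-derivable by construction. The inductive hypothesis now gives $G^\ast\in\efam{E}$, and the task reduces to showing that the call \textsc{AddIntermediate}$(G^\ast)$ made at the moment $G^\ast$ was processed inserts the predecessor into $\efam{E}$.

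The main obstacle is precisely this last verification. It requires a case analysis on the closure bundle re-introduced in going from $G^\ast$ back to the predecessor, matching every possibility to one of the five patterns enumerated on lines~\ref{alg:future1}--\ref{alg:future5} of Algorithm~\ref{alg:addinter}: a lone new node, a new node together with the forced transitive-closure bundle of an outgoing or incoming edge, a pair of new nodes, or an existing-node edge together with its transitive-closure bundle $Z\subseteq E(\iclosure{G^\ast})$. The subtlety is that iterative stripping can in principle reveal multiple derivable pieces whose reassembly does not obviously decompose into a single such atomic pattern, so one must show that any derivability gap between $G^\ast$ and the predecessor factors exactly into one of these closure packages, leveraging the fact that a derivable edge is forced by the transitive closure of precisely one structural insertion. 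Once that decomposition is established, combining candidate generation with the two halves of \textsc{TestCandidate} closes the induction and proves that $G$ is discovered.
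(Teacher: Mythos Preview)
Your overall architecture matches the paper's: induction on size, reduce the closure side-conditions in \textsc{TestCandidate} to non-derivability, and for each immediate predecessor $H\in\{G-e,G-v\}$ either invoke induction directly or strip $H$ down to a non-derivable $F$ and recover $H$ via \textsc{AddIntermediate}$(F)$. That part is fine.

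The gap is exactly where you flag it, and it is not a mere bookkeeping exercise. You write that ``iterative stripping can in principle reveal multiple derivable pieces whose reassembly does not obviously decompose into a single such atomic pattern,'' and then defer the argument. The paper resolves this by using Lemmas~\ref{lem:removea} and~\ref{lem:removeb} in a sharper way than you do. You invoke them only to say that if $H$ happens to be non-derivable then induction applies; the real work is to apply them \emph{inside} $H$ to bound what stripping can remove. Concretely: since $G$ has no derivable edges or nodes, every proper skeleton edge and every solitary node of $G$ is non-derivable in $G$, hence by those lemmas still non-derivable in $H=G-e$ (or $G-v$), hence survives into $F$. This forces $W=V(H)\setminus V(F)\subseteq\{x,y\}$ when $e=(x,y)$, and forces $Z=E(H)\setminus E(F)$ to be precisely $E(\closure{F+W+e})\setminus E(F+W+e)$, i.e.\ the transitive-closure bundle of the single missing edge $e$ over $F+W$. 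For $H=G-v$ one gets $Z=\emptyset$ and $W=\{w\}$ with $\lab w=\lab v$. These are the paper's internal Lemmas~\ref{lem:case1}--\ref{lem:case3}, and they are what pins the gap $H\setminus F$ to one of the five \textsc{AddIntermediate} templates. Without this ``only the endpoints of the removed edge can become derivable'' observation, your decomposition claim is unsupported.

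A minor point: your sentence ``Lemmas~\ref{lem:removea} and~\ref{lem:removeb} let the inductive hypothesis apply'' when $G-e$ is non-derivable is backwards; in that case induction applies directly and the lemmas are irrelevant. Their role is the one above: transporting non-derivability from $G$ down into $H$ to constrain $W$ and $Z$.
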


\begin{proof}
We will prove the theorem by induction. Obviously, the theorem holds for
episodes with a single node. Assume now that the theorem is true for all
subepisodes of $G$. Episode $G$ will be discovered if it passes the tests in
\textsc{TestCandidate}. To pass these tests, all episodes of form $G - e$,
where $e$ is a proper skeleton edge need to be discovered.  Assume that all
these subepisodes are discovered but we have $e \in E(\iclosure{G - e})$.  This
means, by definition, that $e$ is a derivable edge in $G$ which is a
contradiction. The same argument holds for episodes of form $G - v$, where
$v$ is a node.

Now assume that one of the subepisodes, say, $H$ is not discovered.
The induction assumption now implies that $H$ has either derivable nodes or edges.

\begin{lemma}
\label{lem:correct:a}
There is a subepisode $F \in \sspace$, $F \prec H$ with no derivable edges and nodes such that
$H \preceq \iclosure{F}$. Any non-derivable skeleton edge in $H$ will remain in $F$.
Any non-derivable solitary node in $H$ will remain in $F$.
\end{lemma}

\begin{proof}
Build a chain of episodes $H = H_1, H_2, \ldots, H_N = F$ such that $H_{i + 1}$
is obtained from $H_i$ by removing either a derivable node or a derivable edge
and $F$ has no derivable edges or nodes. Note that this sequence always exists
but may not necessarily be unique. Lemma~\ref{lem:derivable} implies that we
have $H_i \preceq \iclosure{H_i} = \iclosure{H_{i + 1}}$.  Idempotency and
monotonicity imply that $H \preceq \iclosure{F}$.  Lemma~\ref{lem:removea}
implies that if $e$ is a non-derivable skeleton edge in $H$, then $e$ is also a
non-derivable skeleton edge in each $H_i$.  Similarly, Lemma~\ref{lem:removeb}
implies that any non-derivable solitary node will remain in each $H_i$.
\end{proof}

By the induction assumption $F$ is discovered.  We claim that $H$ will be
discovered by \textsc{AddIntermediate}$(F)$.  Let us denote $Z = E(H) - E(F)$
and $W = V(H) - V(F)$. 

The next three lemmae describe different properties of $Z$ and $W$.

\begin{lemma}
\label{lem:case1}
Assume that $H = G - v$. Then $W  = \set{w}$ such that $\lab{v} = \lab{w}$ and $Z = \emptyset$.
\end{lemma}

\begin{proof}
Lemma~\ref{lem:removea} implies that all skeleton edges in $H$
are non-derivable. Lemma~\ref{lem:correct:a} implies that $Z = \emptyset$.
Removing $v$ can turn only one node, say $w$, into a solitary node.
This happens when $\lab{v} = \lab{w}$ and there are no other edges adjacent to $w$. 
\end{proof}

\begin{lemma}
\label{lem:case2}
Assume that $H = G - (x, y)$. Then $W \subseteq \set{x, y}$.
If $W = \set{x, y}$, then $Z = \emptyset$.
\end{lemma}

\begin{proof}
Let $z$ be a node in $G$ (and in $H$) such that $z \notin \set{x, y}$. If $z$
is a solitary node in $G$, it also a solitary node in $H$. Lemma~\ref{lem:correct:a} now implies that $z
\notin W$. If $z$ has an adjacent skeleton edge in $G$, say $f$, then $f \neq
(x, y)$.  Lemma~\ref{lem:removea} implies that $f$ is also a non-derivable skeleton edge in $H$. Lemma~\ref{lem:correct:a}
now implies that $z \notin W$.

If $W = \set{x, y}$, then $x$ (and $y$) cannot have any adjacent non-derivable
skeleton edges in $H$. Hence there are no edges, except for $(x, y)$, adjacent
to $x$ or $y$ in $G$. Lemma~\ref{lem:removea} implies that all skeleton edges in $H$
are non-derivable. Lemma~\ref{lem:correct:a} implies that $Z = \emptyset$.
\end{proof}

\begin{lemma}
\label{lem:case3}
Assume that $H = G - e$.  Then $Z = E(\closure{F + W + e)} - E(F + W + e) \subset E(\iclosure{F + W})$.
\end{lemma}

Note that $F + W + e$ is not necessarily transitively closed.

\begin{proof}
Write $F' = F + W$.
First note that $Z \subset E(H) \subseteq E(\iclosure{F'})$.
Lemmae~\ref{lem:removea}~and~\ref{lem:correct:a} imply that all skeleton edges
of $G$, except for $e$ are in $E(F) = E(F')$. Hence, we must have
$E(\closure{F' + e}) = E(G)$.

Also note that $Z \cup E(F' + e)  = E(H) + e = E(G)$.  Since $Z \cap E(F' +
e) = \emptyset$, we have $Z = E(G) - E(F' + e) = E(\closure{F' + e}) - E(F' + e)$.
\end{proof}

If $H = G - v$, then Lemma~\ref{lem:case1} implies that $H = F + w$ and we
discover $H$ on Line~\ref{alg:future1} during \textsc{AddIntermediate}$(F)$.
Assume that $H = G - e$. Write $(x, y) = e$. Lemma~\ref{lem:case2} now implies
that $W \subseteq \set{x, y}$ and Lemma~\ref{lem:case3} implies that $Z =
E(\closure{F + W + e)} - E(F + W + e)$.  We will show that there are 4
different possible cases:

\begin{enumerate}
\item $W = \set{x, y}$. Lemma~\ref{lem:case2} implies that $Z = \emptyset$.
This implies that $H = F + x + y$ and we discover $H$ on Line~\ref{alg:future4}
during \textsc{AddIntermediate}$(F)$.

\item $W = \set{w}$, where $w$ is either $x$ or $y$ and $Z = \emptyset$.  This
implies that $H = F + w$ and we discover $H$ on Line~\ref{alg:future1} during
\textsc{AddIntermediate}$(F)$.

\item $W = \emptyset$ and $Z \neq \emptyset$.
This implies that $H = F + Z$ and we discover $H$ on Line~\ref{alg:future5}
during \textsc{AddIntermediate}$(F)$.

\item $W = \set{w}$, where $w$ is either $x$ or $y$ and $Z \neq \emptyset$.
This implies that $H = F + w + Z$ and we discover $H$ either on Line~\ref{alg:future2} or on Line~\ref{alg:future3}
during \textsc{AddIntermediate}$(F)$.

\end{enumerate}
This compeletes the proof of the theorem.
\end{proof}

\begin{theorem}
\label{thr:correct}
Every frequent $i$-closed episode will be outputted.
\end{theorem}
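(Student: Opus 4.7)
The plan is to argue that for every frequent $i$-closed episode $H$ there exists some $G \in \efam{E}$ with $\iclosure{G} = H$, so that $H$ appears in the collection $\set{\iclosure{G} \mid G \in \efam{E}}$ that \textsc{MineEpisodes} passes to \textsc{F-Closure}. The construction of such a $G$ is obtained by descending from $H$, removing derivable structure while keeping the $i$-closure invariant.

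Concretely, I would build a chain $H = H_0, H_1, \ldots, H_k = G$ by iterating the following: as long as $H_i$ possesses a derivable proper skeleton edge or a derivable solitary node, pick any such object and delete it to obtain $H_{i+1}$. Since each step strictly decreases $\abs{V(H_i)} + \abs{E(H_i)}$, the process terminates with an episode $G \in \sspace$ that has no derivable edges or derivable nodes and satisfies $G \preceq H$. Applying Lemma~\ref{lem:derivable} at every step yields $\iclosure{H_i} = \iclosure{H_{i+1}}$, so telescoping gives $\iclosure{G} = \iclosure{H}$; since $H$ is $i$-closed by assumption, $\iclosure{H} = H$, hence $\iclosure{G} = H$.

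Next, since $G \preceq H$ and frequency is monotonically non-increasing under $\preceq$ (any window covering $H$ also covers $G$), $G$ inherits frequency from $H$, so $\freq{G} \geq \freq{H} \geq \sigma$, i.e. $G$ is frequent. Theorem~\ref{prop:discover} now applies directly to $G$ and guarantees that the algorithm adds $G$ to $\efam{E}$. Consequently $H = \iclosure{G}$ lies in the collection $\set{\iclosure{G'} \mid G' \in \efam{E}}$ that is returned (after the \textsc{F-Closure} post-processing, whose role is only to prune episodes that are $i$-closed but not $f$-closed).

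The main obstacle is making sure the descent is well-defined: a priori, deleting one derivable object could interact with the derivability status of the remaining edges and nodes. The plan sidesteps this by not insisting on any canonical ordering of the removals, and by relying only on two properties at each step, both of which we already have: Lemma~\ref{lem:derivable} preserves the $i$-closure invariant after the deletion, and Lemmae~\ref{lem:removea} and~\ref{lem:removeb} ensure that non-derivable skeleton edges and non-derivable solitary nodes of $H$ cannot become derivable later in the chain. These together make the termination argument work and legitimize the final appeal to Theorem~\ref{prop:discover}.
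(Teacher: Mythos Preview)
Your proof is correct and follows essentially the same approach as the paper: both descend from $H$ by repeatedly stripping derivable edges and nodes (the paper phrases this as an induction on $G$, you as an explicit chain) until reaching a frequent $G$ with no derivable structure, then invoke Theorem~\ref{prop:discover} and conclude that $H = \iclosure{G}$ is output. Your final paragraph's appeal to Lemmae~\ref{lem:removea} and~\ref{lem:removeb} is superfluous---termination already follows from the strict decrease in $\abs{V(H_i)} + \abs{E(H_i)}$, and the applicability of Theorem~\ref{prop:discover} is guaranteed because $G$ has no derivable edges or nodes \emph{by construction}---but this does not affect correctness.
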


\begin{proof}
\textsc{TestEpisode} will output $\iclosure{G}$ for each discovered episode
$G$.  Hence, we need to show that for each $i$-closed episode $H$ there is a discovered
episode $G$ such that $H = \iclosure{G}$.

We will prove the theorem by induction.  Let $H$ and $G$ be episodes such that
$H = \iclosure{G}$ and $H$ is an $i$-closed frequent episode. If $G$ contains
only one node, then $G$ will be discovered.  Assume that the theorem holds for
any episode $\iclosure{G'}$, where $G'$ is a subepisode of $G$. If $G$ has
derivable nodes or edges, then by Lemma~\ref{lem:derivable} there exists an
episode $G' \prec G$ such that $H = \iclosure{G} = \iclosure{G'}$ and so by the
induction assumption $G'$ is discovered, and $H$ is outputted. If $G$ has no derivable edges or nodes,
then Theorem~\ref{prop:discover} implies that $G$ is discovered.  This
completes the proof. 
\end{proof}

\section{Experiments}
\label{sec:experiments}

We tested our algorithm\footnote{The C++ implementation is given at \url{http://adrem.ua.ac.be/implementations/}} on three text datasets, \emph{address}, consisting of the
inaugural addresses by the presidents of the United States\footnote{taken from~\url{http://www.bartleby.com/124/pres68}}, merged to form
a single long sequence,
\emph{moby}, the novel Moby Dick by Herman Melville\footnote{taken
from \url{http://www.gutenberg.org/etext/15}}, and \emph{abstract}, consisting
of the first 739 NSF award abstracts from 1990\footnote{taken from \url{http://kdd.ics.uci.edu/databases/nsfabs/nsfawards.html}}, also merged into one long sequence.
We processed all three sequences using the
Porter Stemmer\footnote{\url{http://tartarus.org/~martin/PorterStemmer/}} and removed the
stop words. The characteristics of datasets are given in Table~\ref{tab:basic}.

In the implementation an episode graph was implemented using sparse notation:
the neighbourhood of a node was presented as a list of edges. To ensure
efficient scanning, the sequence was implemented as a set of linked lists, one
for each symbol.  The experiments were conducted on a computer with an AMD
Athlon 64 processor and 2GB memory.  The code was compiled with G++ 4.3.4.

\begin{table}[htb!]
\centering
\begin{tabular}{lrr}
\toprule
Sequence & Size & $\abs{\Sigma}$ \\
\midrule
\emph{moby}     & 105\,719 & 10\,277 \\
\emph{abstract} & 67\,828  & 6\,718  \\
\emph{address}  & 62\,066  & 5\,295  \\
\bottomrule
\end{tabular}
\caption{Characteristics of the sequences. The first column contains the size of the sequence, and the second column the number of unique symbols in the
sequence.}
\label{tab:basic}
\end{table}

We used a window of size $15$ for all our experiments and varied the frequency
threshold $\sigma$. The main goal of our experiments was to demonstrate how we
tackle the problem of pattern explosion. Tables~\ref{tab:exp:f}
and~\ref{tab:exp:m} show how the total number of frequent episodes compared
with the identified $i$-closed, $e$-closed and $f$-closed episodes we
discovered in the three datasets, using the fixed-window frequency and the
disjoint-window frequency, respectively. The results show that while the reduction is
relatively small for large thresholds, its benefits are clearly visible as we lower the threshold.
The reason for this is that, because of the data characteristics,
the major part of the output consists of episodes
with a small number of nodes. Such episodes tend to be closed. When the threshold
is lowered, episodes with a large number of nodes become frequent which
leads to a pattern explosion.
In the extreme case, we ran out of memory when discovering all frequent episodes
for certain low thresholds while were able to compute the $f$-closed episodes.

\begin{table}[htb!]
\centering
\begin{tabular}{lrrrrr}
\toprule
Dataset & $\sigma$ & $f$-closed & $i$-closed & $e$-closed & frequent\\
\midrule
\emph{address} & $200$ & $1\,983$ & $1\,989$ & $1\,989$ & $1\,992$\\
                 & $100$ & $6\,774$ & $6\,817$ & $6\,820$ & $6\,880$\\
                 & $50$ & $25\,732$ & $26\,078$ & $26\,212$ & $34\,917$\\
                 & $30$ & $70\,820$ & $72\,570$ & $73\,328$ & $119\,326$\\
                 & $20$ & $166\,737$ & $192\,544$ & $207\,153$ & out of memory\\
\midrule
\emph{abstract} & $500$ & $893$ & $914$ & $916$ & $933$\\
                 & $400$ & $1\,374$ & $1\,436$ & $1\,440$ & $1\,499$\\
                 & $300$ & $2\,448$ & $2\,802$ & $2\,903$ & $4\,080$\\
                 & $200$ & $6\,074$ & $7\,374$ & $8\,080$ & $98\,350$\\
                 & $100$ & $26\,140$ & $43\,020$ & $95\,811$ & out of memory\\
\midrule
\emph{moby} & $200$ & $3\,389$ & $3\,394$ & $3\,394$ & $3\,395$\\
            & $100$ & $11\,018$ & $11\,079$ & $11\,084$ & $11\,127$\\
            & $50$ & $37\,551$ & $38\,043$ & $38\,120$ & $39\,182$\\
            & $30$ & $99\,937$ & $102\,380$ & $103\,075$ & $151\,115$\\
            & $20$ & $231\,563$ & $245\,683$ & $253\,208$ & out of memory\\

\bottomrule
\end{tabular}
\caption{The number of frequent, $i$-closed, $e$-closed and $f$-closed episodes with varying fixed-window frequency thresholds for the \emph{address}, \emph{abstract} and \emph{moby} datasets, respectively.}
\label{tab:exp:f}
\end{table}

\begin{table}[htb!]
\centering
\begin{tabular}{lrrrrr}
\toprule
Dataset & $\sigma$ & $f$-closed & $i$-closed & $e$-closed & frequent\\
\midrule
\emph{address} & $20$ & $2\,264$ & $2\,282$ & $2\,282$ & $2\,291$\\
                 & $10$ & $9\,984$ & $10\,213$ & $10\,219$ & $10\,396$\\
                 & $5$ & $46\,902$ & $50\,634$ & $50\,920$ & $65\,139$\\
                 & $4$ & $77\,853$ & $87\,935$ & $89\,076$ & $268\,675$\\
                 & $3$ & $149\,851$ & $187\,091$ & $195\,379$ & out of memory\\
\midrule
\emph{abstract} & $100$ & $195$ & $195$ & $195$ & $195$\\
                 & $50$ & $932$ & $1\,000$ & $1\,002$ & $1\,020$\\
                 & $40$ & $1\,677$ & $2\,053$ & $2\,122$ & $2\,585$\\
                 & $30$ & $3\,542$ & $5\,482$ & $5\,798$ & $20\,314$\\
                 & $20$ & $9\,933$ & $22\,945$ & $61\,513$ & out of memory\\
\midrule
\emph{moby} & $20$ & $4\,076$ & $4\,119$ & $4\,121$ & $4\,121$\\
            & $10$ & $15\,930$ & $16\,325$ & $16\,341$ & $16\,468$\\
            & $5$ & $67\,180$ & $72\,306$ & $72\,468$ & $77\,701$\\
            & $4$ & $109\,572$ & $122\,423$ & $122\,919$ & $141\,940$\\
            & $3$ & $207\,031$ & $251\,757$ & $158\,303$ & out of memory\\

\bottomrule
\end{tabular}
\caption{The number of frequent, $i$-closed, $e$-closed and $f$-closed episodes with varying disjoint-window frequency thresholds for the \emph{address}, \emph{abstract} and \emph{moby} datasets, respectively.}
\label{tab:exp:m}
\end{table}

To get a more detailed picture we examined the ratio of the number of frequent
episodes and the number of $f$-closed episodes, the ratio of the number of
$i$-closed episodes and the number of $f$-closed episodes, and the ratio of the
number of $e$-closed episodes and the number of $i$-closed episodes, as a
function of the number of nodes. The results using the fixed-window frequency
are shown in Figure~\ref{fig:ratio}.  We see that while there is
no improvement with small episodes, using closed episodes is essential if we
are interested in large episodes. In such a case we were able to reduce the
output by several orders of magnitude. For example, in the \emph{moby} dataset,
with a threshold of 30, there were $24\,131$ frequent episodes of size 6, of
which only 13 were $f$-closed. Clearly, the number of discovered $i$-closed
episodes remains greater than the number of $f$-closed episodes, but does not
explode, guaranteeing the feasibility of our algorithm. For example, in the
\emph{abstract} dataset, with a threshold of 200, there were $17\,587$ frequent
episodes of size 5, of which 878 were $i$-closed and 289 $f$-closed. Furthermore,
we can see that while using only the $e$-closure helps reduce the output
significantly, using the $i$-closure gives an even better reduction. For
example, in the \emph{address} dataset, with a threshold of 30, there were 259
$e$-closed episodes of size 5, of which 178 were $i$-closed.

The same ratios using the disjoint-window frequency are shown in Figure~\ref{fig:ratiomw}. Again, we can clearly see the benefits of using $i$-closure, especially on large episodes.

\begin{figure}[htb!]
\centering
\subfigure[\label{fig:ratio:a}Frequent / $f$-closed]{
\begin{tikzpicture} 
\begin{semilogyaxis}[xlabel=\# of nodes in episodes, ylabel=frequent / $f$-closed,
    width = 4.2cm,
    xtick = {1, ..., 7},
	ytick = {1, 10, 100, 1000, 10000, 100000},
	ymax = 100000,
	yminorticks = false,
	cycle list name=yaf,
	legend pos = north west
    ]

\addplot coordinates { 
(1, 1.002039033)
(2, 1.048790918)
(3, 1.471707053)
(4, 6.734597156)
(5, 108.8350515)
(6, 1406.6)
(7, 714)
};

\addplot coordinates { 
(1, 1.0019805)
(2, 1.047665601)
(3, 1.395519468)
(4, 5.266156463)
(5, 77.86861314)
(6, 1856.230769)
};

\addplot coordinates { 
(1, 1)
(2, 1.0204285)
(3, 1.611629183)
(4, 7.400614754)
(5, 60.85467128)
(6, 1429.083333)
(7, 11101.33333)
}; 
\legend{addr., moby, abs.}

\pgfplotsextra{\yafdrawaxis{1}{7}{1}{100000}}
\end{semilogyaxis} 
\end{tikzpicture}} 
\subfigure[\label{fig:ratio:b}$i$-closed / $f$-closed]{
\begin{tikzpicture} 
\begin{axis}[xlabel=\# of nodes in episodes, ylabel=$i$-closed / $f$-closed,
    width = 4.2cm,
    xtick = {1, ..., 7},
    ytick = {1.0, 1.5, ..., 4.0},
	y tick label style = {/pgf/number format/fixed zerofill},
    yticklabel = {$\pgfmathprintnumber[precision=1]{\tick}$},
	cycle list name=yaf,
    ]

\addplot coordinates { 
(1, 1.002039033)
(2, 1.002041054)
(3, 1.073507513)
(4, 1.432227488)
(5, 1.835051546)
(6, 1.866666667)
(7, 2)
};

\addplot coordinates { 
(1, 1.0019805)
(2, 1.002583691)
(3, 1.077382117)
(4, 1.505102041)
(5, 1.481751825)
(6, 1.230769231)
};

\addplot coordinates { 
(1, 1)
(2, 1.002491281)
(3, 1.076247943)
(4, 1.506147541)
(5, 3.038062284)
(6, 4)
(7, 1.333333333)
}; 

\pgfplotsextra{\yafdrawaxis{1}{7}{1}{4}}
\end{axis} 
\end{tikzpicture}} 
\subfigure[\label{fig:ratio:c}$e$-closed / $i$-closed]{
\begin{tikzpicture} 
\begin{axis}[xlabel=\# of nodes in episodes, ylabel=$e$-closed / $i$-closed,
    width = 4.2cm,
    xtick = {1, ..., 7},
    ytick = {1.0, 1.1, ..., 1.5},
	y tick label style = {/pgf/number format/fixed zerofill},
    yticklabel = {$\pgfmathprintnumber[precision=1]{\tick}$},
	cycle list name=yaf,
    ]

\addplot coordinates { 
(1, 1)
(2, 1.003394828)
(3, 1.018221942)
(4, 1.133024487)
(5, 1.45505618)
(6, 1.428571429)
(7, 1)
};

\addplot coordinates { 
(1, 1)
(2, 1.003566056)
(3, 1.012803051)
(4, 1.076836158)
(5, 1.103448276)
(6, 1)
};

\addplot coordinates { 
(1, 1)
(2, 1.008449304)
(3, 1.109582059)
(4, 1.24829932)
(5, 1.109339408)
(6, 1.135416667)
(7, 1)
};

\pgfplotsextra{\yafdrawaxis{1}{7}{1}{1.45}}
\end{axis} 
\end{tikzpicture}}

\caption{Ratios of episodes as a function of the number of events.
$\freqf{G}$ was used as frequency with the threshold $\sigma = 30$ for \emph{address} and \emph{moby},
and $\sigma = 200$ for \emph{abstract}.  Note that the y-axis of Figure~\ref{fig:ratio:a} is in log-scale.}
\label{fig:ratio}
\end{figure}

\begin{figure}[htb!]
\centering
\subfigure[\label{fig:ratiomw:a}Frequent / $f$-closed]{
\begin{tikzpicture} 
\begin{semilogyaxis}[xlabel=\# of nodes in episodes, ylabel=frequent / $f$-closed,
    width = 4.2cm,
    xtick = {1, ..., 7},
	ytick = {1, 10, 100, 1000, 10000, 100000},
	ymax = 100000,
	yminorticks = false,
	cycle list name=yaf,
	legend pos = north west
    ]

\addplot coordinates { 
(1, 1.010912265)
(2, 1.076568487)
(3, 1.654540077)
(4, 7.252204586)
(5, 228.5915493)
(6, 5084.333333)
(7, 49463.5)
};

\addplot coordinates { 
(1, 1.017654172)
(2, 1.077832783)
(3, 1.553494554)
(4, 4.732283465)
(5, 18.76666667)
(6, 1)
};

\addplot coordinates { 
(1, 1)
(2, 1.023550725)
(3, 1.276315789)
(4, 3.804761905)
(5, 59.6744186)
(6, 1600.5)
}; 
\legend{addr., moby, abs.}

\pgfplotsextra{\yafdrawaxis{1}{7}{1}{100000}}
\end{semilogyaxis} 
\end{tikzpicture}} 
\subfigure[\label{fig:ratiomw:b}$i$-closed / $f$-closed]{
\begin{tikzpicture} 
\begin{axis}[xlabel=\# of nodes in episodes, ylabel=$i$-closed / $f$-closed,
    width = 4.2cm,
    xtick = {1, ..., 7},
	ymax = 34,
	ytick = {1, 5, 10, 15, 20, 25, 30, 34},
	cycle list name=yaf,
    ]

\addplot coordinates { 
(1, 1.010912265)
(2, 1.041087963)
(3, 1.247855664)
(4, 2.088183422)
(5, 4.169014085)
(6, 6.222222222)
(7, 5.5)
};

\addplot coordinates { 
(1, 1.017654172)
(2, 1.04338197)
(3, 1.218658808)
(4, 2.181102362)
(5, 3.922222222)
(6, 1)
};

\addplot coordinates { 
(1, 1)
(2, 1.020833333)
(3, 1.185855263)
(4, 2.150793651)
(5, 9.988372093)
(6, 33.16666667)
}; 

\pgfplotsextra{\yafdrawaxis{1}{7}{1}{34}}
\end{axis} 
\end{tikzpicture}} 
\subfigure[\label{fig:ratiomw:c}$e$-closed / $i$-closed]{
\begin{tikzpicture} 
\begin{axis}[xlabel=\# of nodes in episodes, ylabel=$e$-closed / $i$-closed,
    width = 4.2cm,
    xtick = {1, ..., 7},
    ytick = {1.0, 1.1, ..., 1.71},
	ymax = 1.7,
	y tick label style = {/pgf/number format/fixed zerofill},
    yticklabel = {$\pgfmathprintnumber[precision=1]{\tick}$},
	cycle list name=yaf,
    ]

\addplot coordinates { 
(1, 1)
(2, 1.002951831)
(3, 1.014797007)
(4, 1.09375)
(5, 1.601351351)
(6, 1.696428571)
(7, 1)
};

\addplot coordinates { 
(1, 1)
(2, 1.002871769)
(3, 1.005690035)
(4, 1.013607331)
(5, 1)
(6, 1)
};

\addplot coordinates { 
(1, 1)
(2, 1.000887311)
(3, 1.025658807)
(4, 1.13800738)
(5, 1.105937136)
(6, 1)
};

\pgfplotsextra{\yafdrawaxis{1}{7}{1}{1.7}}
\end{axis} 
\end{tikzpicture}}

\caption{Ratios of episodes as a function of the number of events.
$\freqm{G}$ was used as frequency with the threshold $\sigma = 4$ for \emph{address} and \emph{Moby},
and $\sigma = 30$ for \emph{abstract}.  Note that the y-axis of Figure~\ref{fig:ratiomw:a} is in log-scale.}
\label{fig:ratiomw}
\end{figure}

The difference between the number of $f$-closed and $i$-closed episodes can be explained by the fact that the $i$-closure operator looks at \emph{all}
valid mappings of the episode while the coverage requires only one valid mapping to exist. For example, consider the episode
given in Example~\ref{ex:intro1}. This episode is $f$-closed with respect to disjoint windows, and occurs in 10 windows. A subepisode
\[
	\text{chief} \to \text{justic} \qquad \text{vice} \qquad \text{president}
\]
is $i$-closed but not $f$-closed. The reason for this is that several speeches contain
a line 'vice president \ldots chief justice \ldots vice president'. Hence, we can construct a valid
mapping for the order: president, chief, justice, vice. Consequently, we cannot add an edge
from vice to president. However, for all such mappings we can construct an alternative mapping
satisfying the episode in Example~\ref{ex:intro1}. Thus the support of both episodes will be the same
and we can delete the subepisode from the output.

To complete our analysis, we present a comparison of the number of serial, parallel and general episodes we found.
We compare the number of $f$-closed episodes to the overall number of frequent episodes, to illustrate how much
the output has been reduced by using closed episodes. The results are shown in Tables~\ref{tab:ep} and~\ref{tab:ep:m} for fixed-window and disjoint-window frequency, respectively.
To avoid double counting, we consider singletons to be parallel episodes, while neither serial nor parallel episodes are included in the general episodes total.
As expected, the number of serial and parallel episodes does not change much, as
most of them are closed. For a serial episode not to be closed, we would need to find an episode consisting of more nodes, yet having the same frequency,
which is not often the case. A parallel episode is not closed if a partial order can be imposed on its nodes, without a decline in frequency --- again, this is not often the case.
However, as has been pointed out in Figure~\ref{fig:explosion} in the introduction, a single frequent serial episode results in an explosion of the number of discovered general episodes.
The results demonstrate that the output of general episodes has indeed been greatly reduced. For example, using a fixed-window frequency threshold of $200$ on the \emph{abstract} dataset, we discovered $93\,813$ frequent general episodes, of which only $2\,247$ were $f$-closed.

\begin{table}[htb!]
\centering
\begin{tabular}{lrrrrrrr}
\toprule
& & serial & serial & parallel & parallel & general & general\\
Dataset & $\sigma$ & $f$-closed & frequent & $f$-closed & frequent & $f$-closed & frequent\\
\midrule
\emph{address} & $200$ & $293$ & $293$ & $1\,670$ & $1\,674$ & $20$ & $25$\\
                 & $100$ & $1\,739$ & $1\,742$ & $4\,846$ & $4\,878$ & $189$ & $260$\\
                 & $50$ & $8\,436$ & $8\,494$ & $15\,526$ & $16\,068$ & $1\,770$ & $10\,355$\\
                 & $30$ & $24\,620$ & $24\,973$ & $36\,593$ & $41\,055$ & $9\,607$ & $53\,298$\\
                 & $20$ & $61\,254$ & N/A & $67\,658$ & N/A & $37\,825$ & N/A\\
\midrule
\emph{abstract} & $500$ & $116$ & $116$ & $667$ & $669$ & $110$ & $148$\\
                 & $400$ & $206$ & $208$ & $939$ & $942$ & $229$ & $349$\\
                 & $300$ & $433$ & $448$ & $1\,435$ & $1\,471$ & $580$ & $2\,161$\\
                 & $200$ & $1\,124$ & $1\,353$ & $2\,703$ & $3\,184$ & $2\,247$ & $93\,813$\\
                 & $100$ & $4\,597$ & N/A & $7\,854$ & N/A & $13\,689$ & N/A\\
\midrule
\emph{moby} & $200$ & $594$ & $594$ & $2\,772$ & $2\,776$ & $23$ & $25$\\
            & $100$ & $2\,992$ & $2\,997$ & $7\,749$ & $7\,788$ & $277$ & $342$\\
            & $50$ & $12\,529$ & $12\,583$ & $22\,615$ & $23\,192$ & $2\,407$ & $3\,407$\\
            & $30$ & $34\,469$ & $34\,915$ & $52\,481$ & $58\,021$ & $12\,987$ & $58\,179$\\
            & $20$ & $85\,112$ & N/A & $96\,675$ & N/A & $49\,866$ & N/A\\

\bottomrule
\end{tabular}
\caption{The number of $f$-closed and frequent serial, parallel and general episodes, with varying fixed-window frequency thresholds for the \emph{address}, \emph{abstract} and \emph{moby} datasets, respectively. Singletons are classified as parallel episodes, and general episodes do not include the serial and parallel episodes.}
\label{tab:ep}
\end{table}

\begin{table}[htb!]
\centering
\begin{tabular}{lrrrrrrr}
\toprule
& & serial & serial & parallel & parallel & general & general\\
Dataset & $\sigma$ & $f$-closed & frequent & $f$-closed & frequent & $f$-closed & frequent\\
\midrule
\emph{address} & $20$ & $479$ & $479$ & $1\,744$ & $1\,753$ & $41$ & $59$\\
                 & $10$ & $3\,004$ & $3\,012$ & $6\,325$ & $6\,468$ & $655$ & $916$\\
                 & $5$ & $15\,318$ & $15\,557$ & $24\,038$ & $27\,069$ & $7\,546$ & $22\,513$\\
                 & $4$ & $25\,532$ & $26\,372$ & $36\,116$ & $45\,108$ & $16\,205$ & $197\,195$\\
                 & $3$ & $49\,859$ & N/A & $57\,293$ & N/A & $42\,699$ & N/A\\
\midrule
\emph{abstract} & $100$ & $20$ & $20$ & $160$ & $160$ & $15$ & $15$\\
                 & $50$ & $164$ & $166$ & $565$ & $573$ & $203$ & $281$\\
                 & $40$ & $300$ & $314$ & $892$ & $924$ & $485$ & $1\,347$\\
                 & $30$ & $631$ & $708$ & $1\,479$ & $1\,660$ & $1\,432$ & $17\,946$\\
                 & $20$ & $1\,638$ & N/A & $2\,920$ & N/A & $5\,375$ & N/A\\
\midrule
\emph{moby} & $20$ & $1\,010$ & $1\,014$ & $2\,983$ & $3\,004$ & $83$ & $103$\\
            & $10$ & $5\,075$ & $5\,099$ & $9\,853$ & $10\,035$ & $1\,002$ & $1\,334$\\
            & $5$ & $22\,016$ & $22\,304$ & $34\,070$ & $37\,920$ & $11\,094$ & $17\,477$\\
            & $4$ & $35\,950$ & $36\,807$ & $50\,545$ & $61\,704$ & $23\,077$ & $43\,429$\\
            & $3$ & $69\,048$ & N/A & $79\,944$ & N/A & $58\,039$ & N/A\\

\bottomrule
\end{tabular}
\caption{The number of $f$-closed and frequent serial, parallel and general episodes, with varying disjoint-window frequency thresholds for the \emph{address}, \emph{abstract} and \emph{moby} datasets, respectively. Singletons are classified as parallel episodes, and general episodes do not include the serial and parallel episodes.}
\label{tab:ep:m}
\end{table}

The runtimes of our experiments varied between a few seconds and $3$ minutes for
the largest experiments. However, with low thresholds, our algorithm for
finding closed episodes ran faster than the algorithm for finding all frequent
episodes, and at the very lowest thresholds, our algorithm produced results,
while the frequent-episodes algorithm ran out of memory. This demonstrates the
infeasibility of approaching the problem by first generating all frequent
episodes, and then pruning the non-closed ones. The $i$-closed episodes are the
necessary intermediate step.

\section{Related Work}
\label{sec:related}
Searching for frequent patterns in data is a very common data mining
problem. The first attempt at
discovering sequential patterns was made by Wang et
al.~\cite{wang:94:combinatorial}. There, the dataset consists of a number of
sequences, and a pattern is considered interesting if it is long enough and can
be found in a sufficient number of sequences. The method proposed in this
paper, however, was not guaranteed to discover all interesting patterns, but a
complete solution to a more general problem (dropping the pattern length
constraint) was later provided by Agrawal and Srikant~\cite{agrawal:95:mining}
using an \textsc{Apriori}-style algorithm~\cite{agrawal:94:fast}.

It has been argued that not all discovered patterns are of interest to the
user, and some research has gone into outputting only closed sequential
patterns, where a sequence is considered closed if it is not properly contained
in any other sequence which has the same frequency. Yan et
al.~\cite{yan:03:clospan}, Tzvetkov et al.~\cite{tzvetkov:03:mining}, and Wang
and Han~\cite{wang:04:bide} proposed methods for mining such closed patterns,
while Garriga~\cite{garriga:05:summarizing} further reduced the output by
post-processing it and representing the patterns using partial orders.  Despite
their name, the patterns discovered by Garriga are different from the
traditional episodes. A sequence covers an episode if \emph{every} node of
the DAG can be mapped to a symbol such that the order is respected, whereas a
partial order discovered by Garriga is covered by a sequence if all paths in the DAG occur in the sequence; however, 
a single event in a sequence can be mapped to multiple nodes.

In another attempt to trim the output, Garofalakis et al.~\cite{garofalakis:02:mining}
proposed a family of algorithms called \textsc{Spirit} which allow the user to
define regular expressions that specify the language that the discovered patterns must
belong to.

Looking for frequent episodes in a single event sequence was first proposed by
Mannila et al.~\cite{mannila:97:discovery}. The \textsc{Winepi} algorithm finds
all episodes that occur in a sufficient number of windows of fixed length. The
frequency of an episode is defined as the fraction of all fixed-width sliding
windows in which the episode occurs. The user is required to choose the width
of the window and a frequency threshold. Specific algorithms are given for the
case of parallel and serial episodes. However, no algorithm for detecting
general episodes (DAGs) is provided.

The same paper proposes the \textsc{Minepi} method, where the interestingness of an episode is measured by the number of minimal windows that contain it. As was shown by Tatti~\cite{tatti:09:significance}, \textsc{Minepi} fails due to an error in its definition. Zhou et al.~\cite{zhou:10:mining} proposed mining closed serial episodes based on the \textsc{Minepi} method, without solving this error. Laxman et al. introduced a monotonic measure as the
maximal number of non-overlapping occurrences of the
episode~\cite{laxman:07:fast}.

Pei et al.~\cite{pei:06:discovering} considered a restricted version of our
problem setup. In their setup, items are allowed to occur only once in a window
(string in their terminology). This means that the
discovered episodes can contain only one occurrence of each item. This
restriction allows them to easily construct closed episodes. Our setup is more general
since we do not restrict the number of occurrences of a symbol in the window and
the miner introduced by Pei cannot be adapted to our problem setting
since the restriction imposed by the authors plays a vital part in their algorithm.

Garriga~\cite{casas-garriga:03:discovering} pointed out that
\textsc{Winepi} suffers from bias against longer episodes, and proposed 
solving this by increasing the window length proportionally to the episode length. However, as
was pointed out by M\'eger and Rigotti \cite{meger:04:constraint-based}, the
algorithm given in this paper contained an error.

An attempt to define frequency without using any windows has been made by
Calders et al.~\cite{calders:07:mining} where the authors define an
interestingness measure of an itemset in a stream to be the frequency
starting from a point in time that maximizes it. However, this method is defined only for itemsets, or parallel episodes,
and not for general
episodes. Cule et al.~\cite{cule:09:new} proposed a
method that uses neither a window of fixed size, nor minimal occurrences, and
an interestingness measure is defined as a combination of the cohesion and the
frequency of an episode --- again, only for parallel episodes. Tatti~\cite{tatti:09:significance} and Gwadera et
al.~\cite{gwadera:05:reliable,gwadera:05:markov} define
an episode as interesting if its occurrences deviate from expectations.

Finally, an extensive overview of temporal data mining has been made by Laxman
and Sastry~\cite{laxman:06:survey}.

\section{Conclusions}
\label{sec:conclusions}

In this paper, we tackled the problem of pattern explosion when mining frequent
episodes in an event sequence. In such a setting, much of the output is
redundant, as many episodes have the same frequency as some other, more
specific, episodes. We therefore output only closed episodes, for which this is
not the case. Further redundancy is found in the fact that some episodes can be
represented in more than one way. We solve this problem by restricting
ourselves to strict, transitively closed episodes.

Defining frequency-closed episodes created new problems, as, unlike in some
other settings, a non-closed frequent episode can have more than one closure.
To solve this, we defined a closure operator based on instances.
This closure does not suffer from the same problems that occur with the closure based
on frequency. Unlike
the closure based on frequency, an episode will always have only one instance-closure.

We further proved that every
$f$-closed episode must also be $i$-closed. Based on this, we developed an
algorithm that efficiently identifies $i$-closed episodes, as well as $f$-closed
episodes, in a post-processing step. Experiments have confirmed that the
reduction in output is considerable, and essential for large episodes, where we
reduced the output by several orders of magnitude. Moreover, thanks to introducing $i$-closed episodes, we can now produce output for thresholds at which finding all frequent episodes is infeasible.

\section*{Acknowledgments}
Nikolaj Tatti is supported by a Post-doctoral Fellowship of the Research Foundation Flanders (\textsc{fwo}).

\bibliographystyle{plain}
\bibliography{bibliography}

\end{document}